\documentclass[10pt,twocolumn,twoside]{IEEEtran}
\usepackage{mathrsfs}
\usepackage{amsfonts}
\usepackage{bbm}
\usepackage{amsthm,amssymb}
\usepackage{multirow}
\usepackage{array}
\usepackage{graphicx}
\usepackage{caption}
\usepackage{algorithm}
\usepackage{algorithmic}
\usepackage{amsmath}
\usepackage[english]{babel}
\usepackage{color}
\usepackage{subfigure}
\usepackage{boxedminipage}
\usepackage{multicol}

\newtheorem{definition}{Definition}
\newtheorem{theorem}{Theorem}

\begin{document}

\title{Generating Searchable Public-Key Ciphertexts with Hidden Structures for Fast Keyword Search}

\author{
Peng Xu, \IEEEmembership{Member, IEEE,}
Qianhong Wu,  \IEEEmembership{Member, IEEE,} 
Wei Wang, \IEEEmembership{Member, IEEE,} 
\\ Willy Susilo, \IEEEmembership{Senior Member, IEEE,}
Josep Domingo-Ferrer, \IEEEmembership{Fellow, IEEE,}
Hai Jin, \IEEEmembership{Senior Member, IEEE}
\IEEEcompsocitemizethanks{\IEEEcompsocthanksitem P. Xu and H. Jin are with Services Computing Technology and System Lab, Cluster and Grid Computing Lab, School of Computer Science and Technology, Huazhong University of Science and Technology, Wuhan, China. E-Mail: \{xupeng, hjin\}@mail.hust.edu.cn.}
\IEEEcompsocitemizethanks{\IEEEcompsocthanksitem Q. Wu is with the School of Electronics and Information Engineering, Beihang Univerisity, Beijing, China, and with the State Key Laboratory of Information Security, Institute of Information Engineering, Chinese Academy of Sciences, Beijing, China. E-mail: qhwu@xidian.edu.cn.}
\IEEEcompsocitemizethanks{\IEEEcompsocthanksitem W. Wang is with Cyber-Physical-Social Systems Lab, School of Computer Science and Technology, Huazhong University of Science and Technology, Wuhan, China. E-Mail: viviawangww@gmail.com.}
\IEEEcompsocitemizethanks{\IEEEcompsocthanksitem W. Susilo is with Centre for Computer and Information Security Research, School of Computer Science and Software Engineering, University of Wollongong, Australia. E-Mail: wsusilo@uow.edu.au.}
\IEEEcompsocitemizethanks{\IEEEcompsocthanksitem J. Domingo-Ferrer is with Universitat Rovira i Virgili, Department of Computer Engineering and Mathematics, UNESCO Chair in Data Privacy, 43007, Tarragona, Catalonia. E-Mail: josep.domingo@urv.cat.}
}

\IEEEcompsoctitleabstractindextext{
\begin{abstract}
Existing semantically secure public-key searchable encryption schemes take search time linear with the total number of the ciphertexts. This makes retrieval from large-scale databases prohibitive. To alleviate this problem, this paper proposes \emph{Searchable Public-Key Ciphertexts with Hidden Structures} (SPCHS) for keyword search as fast as possible without sacrificing semantic security of the encrypted keywords. In SPCHS, all keyword-searchable ciphertexts are structured by hidden relations, and with the search trapdoor corresponding to a keyword, the minimum information of the relations is disclosed to a search algorithm as the guidance to find all matching ciphertexts efficiently. We construct a simple SPCHS scheme from scratch in which the ciphertexts have a hidden star-like structure. We prove our scheme to be semantically secure based on 
the decisional bilinear 
Diffie-Hellman assumption in the Random Oracle (RO) model. The search 
complexity of our scheme is dependent on \emph{the actual number of the ciphertexts containing the queried keyword}, rather than the number of all ciphertexts. Finally, we present a generic SPCHS construction from anonymous identity-based encryption and \emph{collision-free full-identity malleable} Identity-Based Key Encapsulation Mechanism (IBKEM) with anonymity. We illustrate two collision-free full-identity malleable IBKEM instances, which are semantically secure and anonymous, respectively, in the RO and standard models. The latter instance enables us to construct an SPCHS scheme with semantic security in the standard model.
\end{abstract}

\begin{IEEEkeywords} 
Public-key searchable encryption, semantic security, identity-based key encapsulation mechanism,
identity based encryption
\end{IEEEkeywords}}

\maketitle \IEEEdisplaynotcompsoctitleabstractindextext
\IEEEpeerreviewmaketitle

\section{Introduction}

\IEEEPARstart{P}{ublic-key} encryption with keyword search (PEKS), introduced by Boneh \emph{et al.} in \cite{BCO04}, has the advantage that anyone who knows the receiver's public key can upload keyword-searchable ciphertexts to a server. The receiver can delegate the keyword search to the server. More
specifically, each sender separately encrypts a file and its extracted keywords and sends the resulting ciphertexts to a server; when the receiver wants to retrieve the files containing a specific keyword, he delegates a keyword search trapdoor to the server; the server finds the encrypted files containing the queried keyword without knowing the original files or the keyword itself, and returns the corresponding encrypted files to the receiver; finally, the receiver decrypts these encrypted files\footnote{Since the encryption of the original files can be separately processed with an independent public-key encryption scheme as in \cite{BCO04}, we only describe the encryption of the keywords (unless otherwise
clearly stated in the paper).}. The authors of PEKS \cite{BCO04} also presented semantic security against chosen keyword attacks (SS-CKA) in the sense that the server cannot distinguish the ciphertexts of the keywords of its choice before observing the corresponding keyword search trapdoors. It seems an appropriate security notion, especially if the keyword space has no high min-entropy. Existing semantically secure PEKS schemes take search time linear with the total number of all ciphertexts. This makes retrieval from large-scale databases prohibitive. Therefore, more efficient searchable public-key encryption is crucial for practically deploying PEKS schemes.

One of the prominent works to accelerate the search over encrypted keywords in the public-key setting is deterministic
encryption introduced by Bellare \emph{et al}. in \cite{BBN07}. An encryption scheme is deterministic if the encryption algorithm is deterministic. Bellare \emph{et al}. \cite{BBN07} focus on enabling search over encrypted keywords to be
as efficient as the search for unencrypted keywords, such that a ciphertext containing a given keyword can be retrieved
in time complexity logarithmic in the total number of all ciphertexts. This is reasonable because the encrypted keywords
can form a tree-like structure when stored according to their binary values. However, deterministic encryption has two inherent limitations. First, keyword privacy can be guaranteed only
for keywords that are \emph{a priori} hard-to-guess by the adversary (\emph{i.e.}, keywords with high
min-entropy to the adversary); second, certain information of a message leaks inevitably via the ciphertext of
the keywords since the encryption is deterministic. Hence, deterministic encryption is only applicable in special scenarios.

\subsection{Our Motivation and Basic Ideas}

We are interested in providing highly efficient search performance without
sacrificing semantic security in PEKS. Observe that a keyword space is usually of no high min-entropy in many
scenarios. Semantic security is crucial to guarantee keyword privacy in such
applications. Thus the linear search complexity of existing schemes is the major obstacle to their adoption.
Unfortunately, the linear complexity seems to be inevitable because the server has to scan and test each ciphertext, due to the fact that these ciphertexts (corresponding to the same keyword or not) are indistinguishable to the server.

\begin{figure}
\centering
\includegraphics [width=0.48\textwidth,height=0.25\textheight]{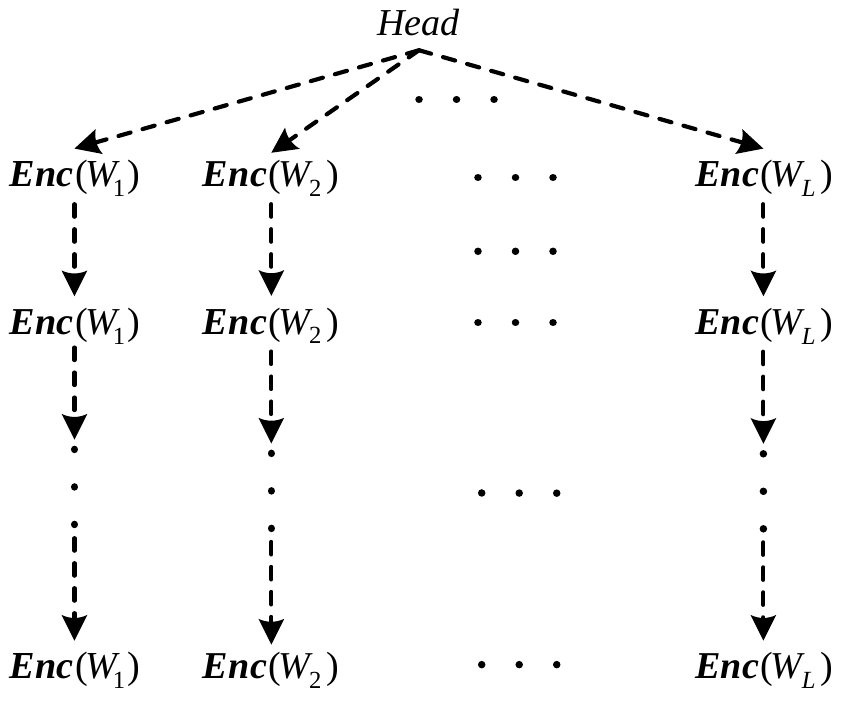}
\caption{Hidden star-like structure formed by keyword searchable ciphertexts. 
(The dashed arrows denote the hidden relations. $Enc(W_i)$ denotes the searchable ciphertext of keyword $W_i$.)}\label{F.Mainidea}
\end{figure}

A closer look shows that there is still space to improve search performance in PEKS without sacrificing semantic security if one can organize the ciphertexts with elegantly designed but hidden relations. Intuitively, if the keyword-searchable ciphertexts have a hidden star-like structure, as shown in Figure \ref{F.Mainidea}, then search over ciphertexts containing a specific keywords may be accelerated. Specifically, suppose all ciphertexts of the same keyword form a chain by the correlated hidden relations, and also a hidden relation exists from a public \emph{Head} to the first ciphertext of each chain. With a keyword search trapdoor and the \emph{Head}, the server seeks out the first matching ciphertext via the corresponding relation from the \emph{Head}. Then another relation can be disclosed via the found ciphertext and guides the searcher to seek out the next matching ciphertext. By carrying on in this way, all matching ciphertexts can be found. Clearly, the search time depends on the actual number of the ciphertexts containing the queried keyword, rather than on the total number of all ciphertexts.

To guarantee appropriate security, the hidden star-like structure should 
preserve the semantic security of keywords, which indicates that partial relations are disclosed only when the corresponding keyword search trapdoor is known. Each sender should be able to generate the keyword-searchable ciphertexts with the hidden star-like structure by the receiver's public-key; the server having a keyword search trapdoor should be able to disclose partial relations, which is related to all matching ciphertexts. Semantic security is preserved 1) if no keyword search trapdoor is
known, all ciphertexts are indistinguishable, and no information is leaked about the structure, and 2) given a keyword search trapdoor, only the corresponding relations can be disclosed, and the matching ciphertexts leak no information about the rest of ciphertexts, except the fact that the rest do not contain the queried keyword.

\subsection{Our Work}

We start by formally defining the concept of Searchable Public-key Ciphertexts with Hidden Structures (SPCHS) and its semantic security. In this new concept, keyword-searchable ciphertexts with their hidden structures can be generated in the public key setting; with a keyword search trapdoor, partial relations can be disclosed to guide the discovery of all matching ciphertexts. Semantic security is defined for both the keywords and the hidden structures. It is worth noting 
that this new concept and its semantic security are suitable for keyword-searchable ciphertexts with any kind of hidden structures. In contrast, the concept of traditional PEKS does not contain any hidden structure among the PEKS ciphertexts; correspondingly, its semantic security is only defined for the keywords. 

Following the SPCHS definition, we construct a simple SPCHS from scratch in the random oracle (RO) model. The scheme generates keyword-searchable ciphertexts with a hidden star-like structure. The search performance mainly depends on the actual number of the ciphertexts containing the queried keyword. For security, the scheme is proven semantically secure based on the Decisional Bilinear Diffie-Hellman (DBDH) assumption \cite{BB04} in the RO model.

We are also interested in providing a generic SPCHS construction to generate keyword-searchable ciphertexts with a hidden star-like structure. Our generic SPCHS is inspired by several interesting observations on Identity-Based Key Encapsulation Mechanism (IBKEM). In IBKEM, a sender encapsulates a key $K$ to an intended receiver $ID$. Of course, receiver $ID$ can decapsulate and obtain $K$, and the sender knows that receiver $ID$ will obtain $K$. However, a non-intended receiver $ID^\prime$ may also try to decapsulate and obtain $K^\prime$. We observe that, (1) it is usually the case that $K$ and $K^\prime$ are independent of each other from the view of the receivers, and (2) in some IBKEM the sender may also know $K^\prime$ obtained by receiver $ID^\prime$. We refer to the former property as \emph{collision freeness} and to the latter as \emph{full-identity malleability}. An IBKEM scheme is said to be \emph{collision-free full-identity malleable} if it possesses both properties. 

We build a generic SPCHS construction with Identity-Based Encryption (IBE) and collision-free full-identity malleable IBKEM. The resulting SPCHS can generate keyword-searchable ciphertexts with a hidden star-like structure. Moreover, if both the underlying IBKEM and IBE have semantic security and anonymity (\emph{i.e.} the privacy of receivers' identities), the resulting SPCHS is semantically secure. As there are known IBE schemes \cite{BW06,G06,AG09,D10} in both the RO model and the standard model, an SPCHS construction is reduced to collision-free full-identity malleable IBKEM with anonymity. In 2013, Abdalla \emph{et al.} proposed several IBKEM schemes to construct Verifiable Random Functions\footnote{VRF behaves like a pseudo-random function but one can verify that the output was
pseudo-random.} (VRF) \cite{ACF13}. We show that one of these IBKEM schemes is anonymous and collision-free full-identity malleable in the RO model. In \cite{FHPS13}, Freire \emph{et al.}  utilized the ``approximation'' of multilinear maps \cite{GGH13} to construct a standard-model version of Boneh-and-Franklin (BF) IBE scheme \cite{BF01}. We transform this IBE scheme into a collision-free full-identity malleable IBKEM scheme with semantic security and anonymity in the standard model. Hence, this new IBKEM scheme allows us to build SPCHS schemes secure in the standard model with the same search performance as the previous SPCHS construction 
from scratch 
in the RO model.

\subsection{Other Applications of Collision-Free Full-Identity Malleable IBKEM}

We note that collision-free full-identity malleable IBKEM is of independent interest. In addition to being a building block for the generic SPCHS construction, it may also find other applications, as outlined in the sequel.

\textbf{Batch identity-based key distribution.} A direct application of collision-free full-identity malleable IBKEM is to achieve batch identity-based key distribution. In such an application, a sender would like to distribute different secret session keys to multiple receivers so that each receiver can only know the session key to himself/herself. With collision-free full-identity malleable IBKEM, a sender just needs to broadcast an IBKEM encapsulation in the identity-based cryptography setting, e.g., encapsulating a session key $K$ to a single user $ID$. According to the collision-freeness of IBKEM, each receiver $ID'$ can decapsulate and obtain a different key $K'$ with his/her secret key in the identity based crypto-system. Due to the full-identity malleability, the sender knows the decapsulated keys of all the receivers. In this way, the sender efficiently shares different session keys with different receivers, at the cost of only a single encapsulation and one pass of communication.

\textbf{Anonymous identity-based broadcast encryption.}  A slightly more complicated application is anonymous identity-based broadcast encryption with efficient decryption. An analogous application was proposed respectively by Barth \emph{et al.} \cite{BBW06} and Libert \emph{et al.} \cite{LPQ12} in the traditional public-key setting. With collision-free full-identity malleable IBKEM, a sender generates an identity-based broadcast ciphertext $\langle C_1$, $C_2$, $(K_1^1||SE(K_2^1,F_1))$, $...$, $(K_1^N||SE(K_2^N,F_N))\rangle$,  where $C_1$ and $C_2$ are two IBKEM encapsulations, $K_1^i$ is the encapsulated key in $C_1$ for receiver $ID_i$, $K_2^i$ is the encapsulated key in $C_2$ for receiver $ID_i$, and $SE(K_2^i,F_i)$ is the symmetric-key encryption of file $F_i$ using the encapsulated key $K_2^i$. In this ciphertext, the encapsulated key $K_1^i$ is not used to encrypt anything. Indeed, it is an index to secretly inform receiver $ID_i$ on which part of this ciphertext belongs to him. To decrypt the encrypted file $F_i$, receiver $ID_i$ decapsulates and obtains $K_1^i$ from $C_1$, finds out $K_1^i||SE(K_2^i,F_i)$ by matching $K_1^i$, and finally extracts $F_i$ with the decapsulated key $K_2^i$ from $C_2$. It can be seen that the application will work if the IBKEM is collision-free full-identity malleable. It preserves the anonymity of receivers if the IBKEM is anonymous. Note that trivial anonymous broadcast encryption suffers decryption cost linear with the number of the receivers. In contrast, our anonymous identity-based broadcast encryption enjoys constant decryption cost, plus logarithmic complexity to search the matching index in a set $(K_1^1,...,K_1^N)$ organized by a certain partial order, e.g., a dictionary order according to their binary representations.

\subsection{Related Work}

Search on encrypted data has been extensively investigated in recent years. From a cryptographic perspective, the
existing works fall into two categories, {\em i.e.}, symmetric searchable encryption \cite{CGK06} and public-key
searchable encryption.

Symmetric searchable encryption is occasionally referred to as symmetric-key encryption with keyword search (SEKS).
This primitive was introduced by Song \emph{et al.} in \cite{SWP00}. Their instantiated scheme takes search time linear with the size of the database. A number of efforts \cite{G03,BC04,AKSX04,CM05,BCLN09} follow this research line and
refine Song \emph{et al.}'s original work. The SEKS scheme due to Curtmola \emph{et al}. \cite{CGK06} has been proven to be semantically secure against an
adaptive adversary. It allows the search to be processed in logarithmic time, although the keyword search trapdoor has
length linear with the size of the database. In addition to the above efforts devoted to either provable security or
better search performance, attention has recently been paid to achieving versatile SEKS schemes as follows. The works
in \cite{CGK06,BDDY08} extend SEKS to a multi-sender scenario. The work in \cite{LWWCRL10} realizes fuzzy keyword
search in the SEKS setting. The work in \cite{WBDS04} shows practical applications of SEKS and employs it to realize
secure and searchable audit logs. Chase \emph{et al.} \cite{CK10} proposed to encrypt structured data and a secure method to search these data. To support the dynamic update of the encrypted data, Kamara \emph{et al.}  proposed the dynamic searchable symmetric encryption in \cite{KPR12} and further enhanced its security in \cite{KP13} at the cost of large index. The very recent work \cite{CJJJKRS14} due to Cash \emph{et al.} simultaneously achieves strong security and high efficiency.

Following the seminal work on PEKS, Abdalla\emph{ et al}. \cite{ABC05} fills some gaps w.r.t. consistency for PEKS
and deals with the transformations among primitives related to PEKS. Some efforts have also
been devoted to make PEKS versatile. The work of this kind includes conjunctive search
\cite{PKL04,GSW04,BKM05,HL07,RT07,BSS08}, range search \cite{BCP06,SBC07,BW07}, subset search \cite{BW07}, time-scope search \cite{ABC05,DMR04}, similarity search \cite{CMWYZ10}, authorized search \cite{TC11,INH11}, equality test between heterogeneous ciphertexts \cite{YTHW10}, and fuzzy keyword search \cite{XHW13}. In addition, Arriaga et al. \cite{ATR14} proposed a PEKS scheme to keep the privacy of keyword search trapdoors.

In the above PEKS schemes, the search complexity takes time linear with the number of all ciphertexts. In [24],  an oblivious generation of keyword search trapdoor is to  maintain the privacy of the keyword against a curious trapdoor generation. A chain-like structure is described to speed up the search on encrypted keywords. One may note that the chain in \cite{CKRS09} cannot be fully hidden to the server and leaks the frequency of the keywords (see Supplemental Materials A for details). To realize an efficient keyword search, Bellare \emph{et al.} \cite{BBN07} introduced
deterministic public-key encryption (PKE) and formalized a security notion ``as strong as possible'' (stronger than
onewayness but weaker than semantic security). A deterministic searchable encryption scheme allows efficient keyword search as if the keywords were not
encrypted. Bellare \emph{et al.} \cite{BBN07} also presented a deterministic PKE scheme (\emph{i.e.}, RSA-DOAEP) and a
generic transformation from a randomized PKE to a deterministic PKE in the random oracle model. Subsequently,
deterministic PKE schemes secure in the standard model were independently proposed by Bellare \emph{et al.}
\cite{BFNR08} and Boldyreva \emph{et al.} \cite{BFN08}. The former uses general complexity assumptions and the
construction is generic, while the latter exploits concrete complexity assumptions and has better efficiency. Brakerski \emph{et al.} \cite{BS11} proposed the deterministic PKE schemes with better security, although these schemes are still not semantically secure. So far, deterministic PEKS schemes can guarantee semantic security only if the keyword space has a high
min-entropy. Otherwise, an adversary can extract the encrypted keyword by a simple encrypt-and-test attack. Hence, deterministic PEKS schemes are applicable to applications where the keyword space is of a high min-entropy. 

\subsection{Organization of this article}
The remaining sections are as follows. Section \ref{S.SPCHS.Concept} defines SPCHS and its semantic security. A simple SPCHS scheme is constructed in Section \ref{S.SPCHS.Instance}. A general construction of SPCHS is given in Section \ref{S.SPCHS.Generic}. Two collision-free full-identity malleable IBKEM schemes, respectively in the RO and standard models, are introduced in Section \ref{S.IBKEM.Instances}. Section \ref{S.Conclusion} concludes this paper. 

\section{Modeling SPCHS}\label{S.SPCHS.Concept}

We first explain intuitions behind SPCHS. We describe a hidden structure formed by ciphertexts as $(\mathbb{C},\mathbf{Pri},\mathbf{Pub})$, where $\mathbb{C}$ denotes the set of all ciphertexts, $\mathbf{Pri}$ denotes the hidden relations among $\mathbb{C}$, and $\mathbf{Pub}$ denotes the public parts. In case there is more than one hidden structure formed by ciphertexts, the description of multiple hidden structures formed by ciphertexts can be $(\mathbb{C},(\mathbf{Pri}_1,\mathbf{Pub}_1),...,(\mathbf{Pri}_N,\mathbf{Pub}_N))$, where $N\in\mathbb{N}$. Moreover, given $(\mathbb{C},\mathbf{Pub}_1,...,\mathbf{Pub}_N)$ and $(\mathbf{Pri}_1,...,\mathbf{Pri}_N)$ except $(\mathbf{Pri}_i,\mathbf{Pri}_j)$ (where $i\neq j$), one can neither learn anything about $(\mathbf{Pri}_i,\mathbf{Pri}_j)$ nor decide whether a ciphertext is associated with $\mathbf{Pub}_i$ or $\mathbf{Pub}_j$.

In SPCHS, the encryption algorithm has two functionalities. One is to encrypt a keyword, and the other is to generate a hidden relation, which can associate the generated ciphertext to the hidden structure. Let $(\mathbf{Pri},\mathbf{Pub})$ be the hidden structure. The encryption algorithm must take $\mathbf{Pri}$ as input, otherwise the hidden relation cannot be generated since $\mathbf{Pub}$ does not contain anything about the hidden relations. At the end of the encryption procedure, the $\mathbf{Pri}$ should be updated since a hidden relation is newly generated (but the specific method to update $\mathbf{Pri}$ relies on the specific instance of SPCHS). In addition, SPCHS needs an algorithm to initialize $(\mathbf{Pri},\mathbf{Pub})$ by taking the master public key as input, and this algorithm will be run before the first time to generate a ciphertext. With a keyword search trapdoor, the search algorithm of SPCHS can disclose partial relations
to guide the discovery of the ciphertexts containing the queried keyword with the hidden structure. 

\begin{definition}[SPCHS]\label{D.SPCHS.Concept}
SPCHS consists of five algorithms:
\begin{itemize}
\item $\mathbf{SystemSetup}(1^k,\mathcal{W})$: Take as input a security parameter $1^k$ and a keyword space $\mathcal{W}$, and probabilistically output a pair of master public-and-secret keys $(\mathbf{PK},\mathbf{SK})$, where $\mathbf{PK}$ includes the keyword space $\mathcal{W}$ and the ciphertext space $\mathcal{C}$.

\item $\mathbf{StructureInitialization}(\mathbf{PK})$: Take as input $\mathbf{PK}$, and probabilistically initialize a hidden structure by outputting its private and public parts $(\mathbf{Pri},\mathbf{Pub})$.

\item $\mathbf{StructuredEncryption}(\mathbf{PK},W,\mathbf{Pri})$: Take as inputs $\mathbf{PK}$, a keyword $W\in\mathcal{W}$ and a hidden structure's private part $\mathbf{Pri}$, and probabilistically output a keyword-searchable ciphertext $C$ of keyword $W$ with the hidden structure, and update $\mathbf{Pri}$.

\item $\mathbf{Trapdoor}(\mathbf{SK},W)$: Take as inputs $\mathbf{SK}$ and a keyword $W\in\mathcal{W}$, and output a keyword search trapdoor $T_W$ of $W$.

\item $\mathbf{StructuredSearch}(\mathbf{PK},\mathbf{Pub},\mathbb{C},T_W)$: Take as inputs $\mathbf{PK}$, a hidden structure's public part $\mathbf{Pub}$, all keyword-searchable ciphertexts $\mathbb{C}$ and a keyword search trapdoor $T_W$ of keyword $W$, disclose partial relations to guide finding out the ciphertexts containing keyword $W$ with the hidden structure.
\end{itemize}
An SPCHS scheme must be consistent in the sense that given any keyword search trapdoor $T_W$ and any hidden structure's public part $\mathbf{Pub}$, algorithm $\mathbf{StructuredSearch}(\mathbf{PK},\mathbf{Pub},\mathbb{C},T_W)$ finds out all ciphertexts of keyword $W$ with the hidden structure $\mathbf{Pub}$.
\end{definition}

In the application of SPCHS, a receiver runs algorithm $\mathbf{SystemSetup}$ to set up SPCHS. Each sender uploads the public part of his hidden structure and keyword-searchable ciphertexts to a server, respectively by algorithms $\mathbf{StructureInitialization}$ and $\mathbf{StructuredEncryption}$. Algorithm $\mathbf{Trapdoor}$ allows the receiver to delegate a keyword search trapdoor to the server. Then the server runs algorithm $\mathbf{StructuredSearch}$ for all senders' structures to find out the ciphertexts of the queried keyword.

The above SPCHS definition requires each sender to maintain the private part of his hidden structure for algorithm $\mathbf{StructuredEncryption}$. A similar requirement appears in symmetric-key encryption with keyword search (SEKS) in which each sender is required to maintain a secret key shared with the receiver. This implies interactions via authenticated confidential channels before a sender encrypts the keywords to the receiver in SEKS. In contrast,  each sender in SPCHS just generates and maintains his/her private values locally, i.e., without requirement of extra secure interactions before encrypting keywords. 

In the general case of SPCHS, each sender keeps his/her private values $\mathbf{Pri}$. We could let each sender be stateless by storing his/her $\mathbf{Pri}$ in encrypted form at a server and having each sender download and re-encrypt his/her $\mathbf{Pri}$ for each update of $\mathbf{Pri}$. A similar method also has been suggested by \cite{CJJJKRS14}.

The semantic security of SPCHS is to resist adaptively chosen keyword and structure attacks (SS-CKSA). In this security notion, a probabilistic polynomial-time (PPT) adversary is allowed to know all structures' public parts, query the trapdoors for adaptively chosen keywords, query the private parts of adaptively chosen structures, and query the ciphertexts of adaptively chosen keywords and structures (including the keywords and structures which the adversary would like to be challenged). The adversary will choose two challenge keyword-structure pairs. The SS-CKSA security means that for a ciphertext of one of two challenge keyword-structure pairs, the adversary cannot determine which challenge keyword or which challenge structure the challenge ciphertext corresponds to, provided that the adversary does not know the two challenge keywords' search trapdoors and the two challenge structures' private parts.

\begin{definition}[SS-CKSA Security]\label{D.SPCHS.Security}
Suppose there are at most $N\in\mathbb{N}$ hidden structures. An SPCHS scheme is SS-CKSA secure,
if any PPT adversary $\mathcal{A}$ has only a negligible advantage $Adv^{\text{SS-CKSA}}_{SPCHS,\mathcal{A}}$  to win in the following SS-CKSA game:

\begin{itemize}
\item \textbf{Setup Phase}: A challenger sets up the SPCHS scheme by running algorithm $\mathbf{SystemSetup}$ to generate a pair of master public-and-secret keys $(\mathbf{PK},\mathbf{SK})$, and initializes $N$ hidden structures by running algorithm $\mathbf{StructureInitialization}$ $N$ times (let $\mathbf{PSet}$ be the set of all public parts of these $N$ hidden structures.); finally the
challenger sends $\mathbf{PK}$ and $\mathbf{PSet}$ to $\mathcal{A}$.
\item \textbf{Query Phase 1}: $\mathcal{A}$ adaptively issues the following queries multiple times.
	\begin{itemize}
	 \item Trapdoor Query $\mathcal{Q}_{Trap}(W)$: Taking as input a keyword $W\in\mathcal{W}$, the challenger outputs the keyword search trapdoor of keyword $W$;
	\item Privacy Query $\mathcal{Q}_{Pri}(\mathbf{Pub})$: Taking as input a hidden structure's public part $\mathbf{Pub}\in\mathbf{PSet}$, the challenger outputs the corresponding private part of this structure;
	\item Encryption Query $\mathcal{Q}_{Enc}(W,\mathbf{Pub})$: Taking as inputs a keyword $W\in\mathcal{W}$ and a hidden structure's public part $\mathbf{Pub}$, the challenger outputs an SPCHS ciphertext of keyword $W$ with the hidden structure $\mathbf{Pub}$.
     \end{itemize}
\item \textbf{Challenge Phase}: $\mathcal{A}$ sends two challenge keyword-and-structure pairs $(W^*_0,\mathbf{Pub}^*_0)\in\mathcal{W}\times\mathbf{PSet}$ and $(W^*_1,\mathbf{Pub}^*_1)\in\mathcal{W}\times\mathbf{PSet}$ to the challenger; The challenger randomly chooses $d\in\{0,1\}$, and sends a challenge ciphertext $C^*_d$ of keyword $W_d^*$ with the hidden structure $\mathbf{Pub}_d^*$ to $\mathcal{A}$.
\item \textbf{Query Phase 2}: This phase is the same as \textbf{Query Phase 1}. Note that in \textbf{Query Phase 1} and \textbf{Query Phase 2}, $\mathcal{A}$ cannot query the corresponding private parts both of $\mathbf{Pub}^*_0$ and $\mathbf{Pub}^*_1$ and the keyword search trapdoors both of $W^*_0$ and $W^*_1$.
\item \textbf{Guess Phase}: $\mathcal{A}$ sends a guess $d^\prime$ to the challenger. We say that $\mathcal{A}$ wins if $d=d^\prime$. And let $Adv^{\text{SS-CKSA}}_{SPCHS,\mathcal{A}}=Pr[d=d^\prime]-\frac{1}{2}$ be the advantage of $\mathcal{A}$ to win in the above game.
\end{itemize}
\end{definition}

A weaker security definition of SPCHS is the selective-keyword security. We refer to this weaker security notion as SS-sK-CKSA security, and the corresponding attack game as SS-sK-CKSA game. In this attack game, the adversary $\mathcal{A}$ chooses two challenge keywords before the SPCHS scheme is set up, but the adversary still adaptively chooses two challenge hidden structures at $\textbf{Challenge Phase}$. Let $Adv^{\text{SS-sK-CKSA}}_{SPCHS,\mathcal{A}}$ denote the advantage of adversary $\mathcal{A}$ to win in this game.

\section{A Simple SPCHS Scheme from Scratch}\label{S.SPCHS.Instance}

Let $\gamma\overset{\$}\leftarrow \Re$ denote an element $\gamma$ randomly sampled from $\Re$. Let $\mathbb{G}$ and $\mathbb{G}_1$ denote two multiplicative groups of
prime order $q$. Let $g$ be a generator of $\mathbb{G}$. A bilinear map $\hat{e}:\mathbb{G}\times\mathbb{G} \rightarrow
\mathbb{G}_1$ \cite{MOV93,FMR99} is an efficiently computable and non-degenerate function, with the bilinearity
property $\hat{e}(g^a,g^b)=\hat{e}(g,g)^{ab}$, where $(a,b)\overset{\$} {\leftarrow} \mathbb{Z}_q^*$ and $\hat{e}(g,g)$
is a generator of $\mathbb{G}_1$. Let $\mathbf{BGen}(1^k)$ be an efficient bilinear map generator that takes as input a security
parameter $1^k$ and probabilistically outputs $(q,\mathbb{G},\mathbb{G}_1,g,\hat{e})$. Let keyword space $\mathcal{W}=\{0,1\}^*$. 

A simple SPCHS scheme secure in the random oracle model is constructed as follows.

\begin{itemize}
\item $\mathbf{SystemSetup}(1^k,\mathcal{W})$: Take as input $1^k$ and the keyword space $\mathcal{W}$, compute $(q,\mathbb{G}, \mathbb{G}_1, g,\hat{e})=\mathbf{BGen}(1^k)$, pick $s\overset{\$}\leftarrow \mathbb{Z}_q^*$, set $P=g^s$, choose a cryptographic hash function $H: \mathcal{W}\rightarrow \mathbb{G}$, set the ciphertext space $\mathcal{C}\subseteq\mathbb{G}_1\times\mathbb{G}\times\mathbb{G}_1$, and finally output the master public key $\mathbf{PK}=(q,\mathbb{G},\mathbb{G}_1,g,\hat{e},P,H,\mathcal{W},\mathcal{C})$, and the master secret key $\mathbf{SK}=s$. 

\item $\mathbf{StructureInitialization}(\mathbf{PK})$: Take as input $\mathbf{PK}$, pick $u\overset{\$}{\leftarrow}\mathbb{Z}_q^*$, and initialize a hidden structure by outputting a pair of private-and-public parts $(\mathbf{Pri}=(u),\mathbf{Pub}=g^u)$. Note that $\mathbf{Pri}$ here is a variable list formed as $(u,\{(W,Pt[u,W])|W\in\mathcal{W}\})$, which is initialized as $(u)$.

\item $\mathbf{StructuredEncryption}(\mathbf{PK},W,\mathbf{Pri})$: Take as inputs $\mathbf{PK}$, a keyword $W\in\mathcal{W}$, a hidden structure's private part $\mathbf{Pri}$, pick $r\overset{\$}{\leftarrow}\mathbb{Z}_q^*$ and do the following steps:
\begin{enumerate} 
\item Search $(W,Pt[u,W])$ for $W$ in $\mathbf{Pri}$;
\item If it is not found, insert $(W,Pt[u,W]\overset{\$}\leftarrow
\mathbb{G}_1)$ to $\mathbf{Pri}$, and output the keyword-searchable ciphertext $C=(\hat{e}(P,H(W))^u,g^r,\hat{e}(P,H(W))^r\cdot Pt[u,W])$;
\item Otherwise, pick $R\overset{\$}{\leftarrow}\mathbb{G}_1$, 
set $C=(Pt[u,W],g^r,\hat{e}(P,H(W))^r\cdot R)$, update $Pt[u,W]=R$, and output the keyword-searchable ciphertext $C$;
\end{enumerate}

\item $\mathbf{Trapdoor}(\mathbf{SK},W)$: Take as inputs $\mathbf{SK}$ and a keyword $W\in\mathcal{W}$, and output a keyword search trapdoor $T_W=H(W)^s$ of keyword $W$.

\item $\mathbf{StructuredSearch}(\mathbf{PK},\mathbf{Pub},\mathbb{C},T_W)$: Take as inputs $\mathbf{PK}$, a hidden structure's public part $\mathbf{Pub}$, all keyword-searchable ciphertexts $\mathbb{C}$ (let $\mathbb{C}[i]$ denote one ciphertext of $\mathbb{C}$, and this ciphertext can be parsed as $(\mathbb{C}[i,1],\mathbb{C}[i,2],\mathbb{C}[i,3])\in\mathbb{G}_1\times\mathbb{G}\times\mathbb{G}_1$) and a keyword trapdoor $T_W$ of keyword $W$, set $\mathbb{C}^\prime=\phi$, and do the following steps:
\begin{enumerate}
\item Compute $Pt^\prime=\hat{e}(\mathbf{Pub},T_W)$;
\item Seek a ciphertext $\mathbb{C}[i]$ having $\mathbb{C}[i,1]=Pt^\prime$; 
if it exists, add $\mathbb{C}[i]$ into $\mathbb{C}^\prime$; 
\item If no matching ciphertext is found, output $\mathbb{C}^\prime$;
\item Compute $Pt^\prime=\hat{e}(\mathbb{C}[i,2],T_W)^{-1}\cdot\mathbb{C}[i,3]$, and go to Step 2.
\end{enumerate}
\end{itemize}

Figure \ref{F.SPCHS.Instance} shows a hidden star-like structure, which is generated by the SPCHS instance. When running algorithm $\mathbf{StructuredSearch}(\mathbf{PK},\mathbf{Pub},\mathbb{C},T_{W_i})$, it discloses the value $\hat{e}(P,H(W_i))^u$ by computing $\hat{e}(\mathbf{Pub},T_{W_i})$,  and matches $\hat{e}(P,H(W_i))^u$ with all ciphertexts to find out the ciphertext $(\hat{e}(P,H(W_i))^u,g^r,\hat{e}(P,H(W_i))^r\cdot Pt[u,W_i])$. Then the algorithm discloses $Pt[u,W_i]$ by computing $\hat{e}(g^r,T_{W_i})^{-1}\cdot\hat{e}(P,H(W_i))^r\cdot Pt[u,W_i]$, and matches $Pt[u,W_i]$ with all ciphertexts to find out the ciphertext $(Pt[u,W_i],g^r,\hat{e}(P,H(W_i))^r\cdot R)$. 
By carrying on in this way,
the algorithm will find out all ciphertexts of keyword $W_i$ with the hidden star-like structure, and stop the search if no matching ciphertext is found. 

\begin{figure*}[!htp]
\centering\begin{boxedminipage}{0.9\textwidth}
\includegraphics[scale=0.7]{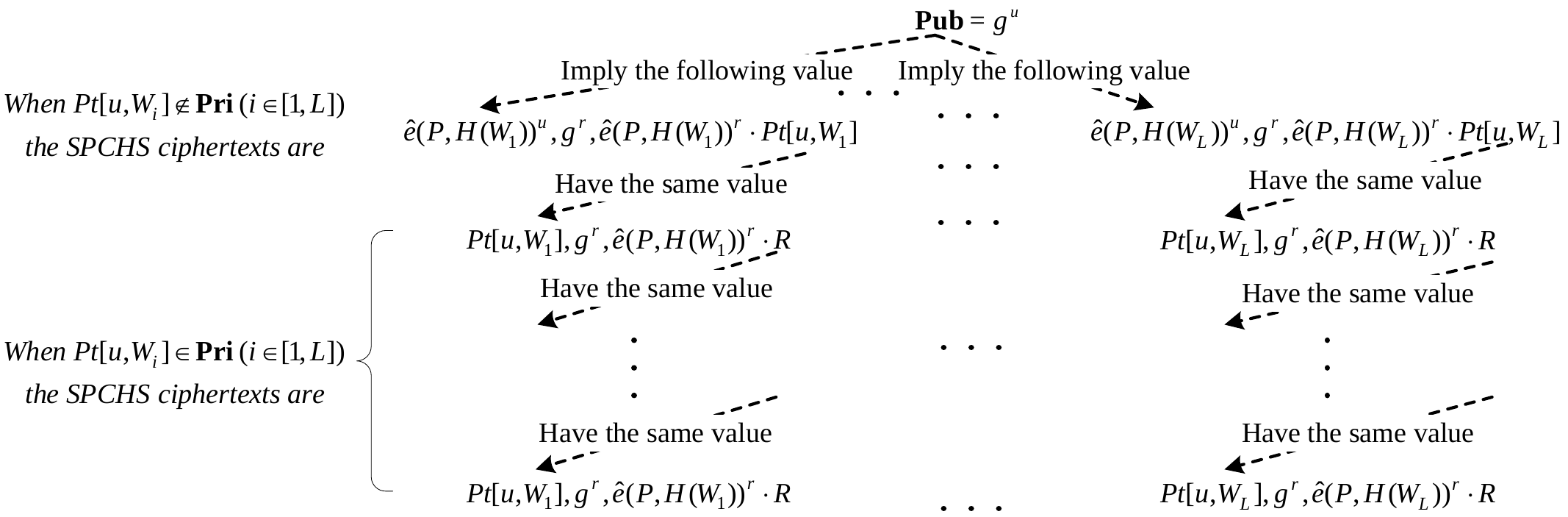}
\begin{quote}
Note that, in each ciphertext, the value $R$ and the value $r$ are randomly chosen. For $i\in[1,L]$, $Pt[u,W_i]$ is initialized with a random value when generating the first ciphertext of keyword $W_i$, and it will be updated into $R$ after generating each subsequent ciphertext of keyword $W_i$. 
\end{quote}  
\end{boxedminipage}
\caption{Hidden star-like structure generated by the above SPCHS instance}\label{F.SPCHS.Instance}
\end{figure*}

\textbf{Consistency.} Roughly speaking, algorithm $\mathbf{StructuredSearch}$ repetitively discloses the value of $Pt^\prime$ and matches the value with all ciphertexts' first parts to find out the matching ciphertexts. Since all disclosed values of $Pt^\prime$ are either collision-free (due to the hash function $H$) and random (according to algorithm $\mathbf{StructuredEncryption}$), no more than one ciphertext matches in each matching process. The found ciphertexts should contain the queried keyword, since given a keyword search trapdoor, algorithm $\mathbf{StructuredSearch}$ only can disclose the values of $Pt^\prime$, which are corresponding to the queried keyword. Formally, we have Theorem \ref{T.SPCHS.Instance.Consistency} on consistency whose proof can be found in Supplemental Materials B.

\begin{theorem}\label{T.SPCHS.Instance.Consistency}
Suppose the hash function $H$ is collision-free, except with a negligible probability in the security parameter $k$. The above SPCHS instance is consistent, also except with a negligible probability in the security parameter $k$.
\end{theorem}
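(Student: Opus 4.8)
The plan is to reduce consistency to the absence of a ``bad'' collision event $\mathsf{Bad}$ among the group elements that $\mathbf{StructuredSearch}$ matches against. First I would fix notation for the objects an honest execution produces: each hidden structure carries a secret $u\overset{\$}\leftarrow\mathbb{Z}_q^*$ with public part $g^u$; for a keyword $W$ under structure $u$, the $j$-th ciphertext is $C_j=(f_j,g^{r_j},\hat{e}(P,H(W))^{r_j}\cdot p_j)$, where the ``head'' $f_1=\hat{e}(P,H(W))^u$, the later first components satisfy $f_{j+1}=p_j$, and each $p_j$ (the initial $Pt[u,W]$ or a later $R$) is an independent uniform element of $\mathbb{G}_1$. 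I would then define $\mathsf{Bad}$ as the union of: (i) a collision of $H$ on the keywords actually used, or $H(W)=1_{\mathbb{G}}$ for some such $W$; (ii) two distinct pairs $(u,W)\neq(u',W')$ with equal heads $\hat{e}(P,H(W))^u=\hat{e}(P,H(W'))^{u'}$; and (iii) some pointer $p^{(u,W)}_j$ colliding with a head or with another pointer. Assuming, as usual, that an efficient execution produces only polynomially many structures and ciphertexts, I would bound $\Pr[\mathsf{Bad}]$ by a union bound: (i) is negligible by the collision-freeness hypothesis and, in the RO model, the uniformity of $H$; (iii) is negligible because each $p^{(u,W)}_j$ is fresh and uniform, hence equals any fixed element with probability $1/q$; and (ii) — the delicate case — I would handle by taking discrete logarithms base $\hat{e}(g,g)$, reducing the head collision to $s\log_g H(W)\,u\equiv s\log_g H(W')\,u'\pmod q$, and observing that off case (i) the coefficients $\log_g H(W),\log_g H(W')$ differ whenever $W\neq W'$, so this is a nontrivial linear relation in the independently and uniformly sampled $u,u'$ and holds with probability at most $1/(q-1)$ per pair.

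Conditioned on $\neg\mathsf{Bad}$, I would then show that $\mathbf{StructuredSearch}(\mathbf{PK},g^u,\mathbb{C},T_W)$ with $T_W=H(W)^s$ returns exactly the ciphertexts of $W$ under that structure. Step~1 computes $Pt'=\hat{e}(g^u,H(W)^s)=\hat{e}(P,H(W))^u=f_1$, which by (ii) is the first component of at most one ciphertext, namely $C_1$ (and of none if the chain is empty, giving the correct output $\emptyset$). The key identity is that from $C_j$ the algorithm's Step~4 recomputes $Pt'=\hat{e}(g^{r_j},H(W)^s)^{-1}\cdot\hat{e}(P,H(W))^{r_j}\cdot p^{(u,W)}_j=p^{(u,W)}_j$, using $\hat{e}(g^{r_j},H(W)^s)=\hat{e}(P,H(W))^{r_j}$; by (iii) this equals the first component of exactly $C_{j+1}$ when it exists and of nothing when $C_j$ is the last ciphertext of the chain (its outgoing pointer is never reused as a first component). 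A straightforward induction along the chain then shows the loop appends $C_1,\dots,C_m$, each exactly once, and halts, so $\mathbb{C}^\prime$ is precisely the set of all ciphertexts of $W$ under $g^u$; and no ciphertext of another keyword or another structure is ever matched, again by (ii)--(iii). Hence consistency fails only on $\mathsf{Bad}$, which is negligible in $k$.

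I expect the main obstacle to be case~(ii) of the collision bound: it is the only place where the collision-freeness of $H$ is genuinely used, and one must argue carefully that the randomness of the structure secrets $u$ is independent of whatever determines $\log_g H(W)$ (immediate in the RO model, and still fine in general since the $u$'s are drawn inside $\mathbf{StructureInitialization}$), and that the union bound ranges only over the polynomially many $(u,W)$ pairs that are actually instantiated rather than over the whole keyword space.
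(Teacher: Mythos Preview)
Your proposal is correct and follows essentially the same route as the paper's proof: both argue that $\mathbf{StructuredSearch}$ walks the chain correctly---computing the head $\hat{e}(P,H(W))^u$ in Step~1 and recovering each subsequent pointer via the cancellation identity in Step~4---provided no first-component collisions occur, and that such collisions are negligible. Your treatment is in fact more careful than the paper's: you make the bad events explicit and separately bound cross-structure head collisions (your case~(ii) with $u\neq u'$), which the paper's proof glosses over by focusing on a single structure and invoking only the collision-freeness of $H$ for distinct keywords.
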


\textbf{Semantic Security.} The SS-CKSA security of the above SPCHS scheme relies on the DBDH assumption in $\mathbf{BGen}(1^k)$. The definition of DBDH assumption \cite{BB04} is as follows.

\begin{definition}[The DBDH Assumption]\label{D.DBDH} The DBDH problem in $\mathbf{BGen}(1^k)=(q,\mathbb{G},\mathbb{G}_1,g,\hat{e})$ is defined as the advantage of any PPT algorithm $\mathcal{B}$ to distinguish the tuples $(g^a,g^b,g^c,\hat{e}(g,g)^{abc})$ and $(g^a,g^b,g^c,\hat{e}(g,g)^y)$, where $(a,b,c,y)\overset{\$}\leftarrow \mathbb{Z}_q^{*4}$. Let  $Adv_\mathcal {B}^{DBDH}(1^k)=\text{Pr}[\mathcal{B}(g^a,g^b,g^c,\hat{e}(g,g)^{abc})=1]-\text{Pr}[\mathcal{B}(g^a,g^b,g^c,\hat{e}(g,g)^y)=1]$ be the advantage of algorithm $\mathcal{B}$ to solve the DBDH problem. We say that the DBDH assumption holds in $\mathbf{BGen}(1^k)$, if the advantage $Adv_\mathcal {B}^{DBDH}(1^k)$ is negligible in the parameter $k$.
\end{definition}

In the security proof, we prove that if there is an adversary who can break the SS-CKSA security of the above SPCHS instance in the RO model, then there is an algorithm which can solve the DBDH problem in $\mathbf{BGen}(1^k)$. Formally we have Theorem \ref{T.SPCHS.Instance.Security} whose proof can be found in Supplemental Materials C.

\begin{theorem}\label{T.SPCHS.Instance.Security}
Let the hash function $H$ be modeled as the random oracle $\mathcal{Q}_{H}(\cdot)$. Suppose there are at most $N\in\mathbb{N}$ hidden structures, and a PPT adversary $\mathcal{A}$ wins in the SS-CKSA game of the above SPCHS instance with advantage $Adv^{\text{SS-CKSA}}_{SPCHS,\mathcal{A}}$, in which
$\mathcal{A}$ makes at most $q_t$ queries to oracle $\mathcal{Q}_{Trap}(\cdot)$ and at most $q_p$ queries to oracle $\mathcal{Q}_{Pri}(\cdot)$. Then there is a PPT algorithm $\mathcal{B}$ that solves the
DBDH problem in $\mathbf{BGen}(1^k)$ with advantage
$$Adv_\mathcal {B}^{DBDH}(1^k)\approx\frac{27}{(e\cdot q_t\cdot q_p)^3}\cdot Adv^{\text{SS-CKSA}}_{SPCHS,\mathcal{A}},$$ where $e$ is the base of natural logarithms.
\end{theorem}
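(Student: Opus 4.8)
The plan is a reduction from DBDH. Given a PPT adversary $\mathcal{A}$ that wins the SS-CKSA game, I build a PPT algorithm $\mathcal{B}$ which on a DBDH instance $(g^{a},g^{b},g^{c},Z)$ simulates the whole game so that a ``real'' $Z=\hat{e}(g,g)^{abc}$ yields a perfect simulation whereas a random $Z$ makes the challenge ciphertext information-theoretically independent of the hidden bit $d$; outputting ``real'' exactly when $\mathcal{A}$ guesses $d$ then turns $\mathcal{A}$'s advantage into a DBDH advantage. The observation driving the embedding is that the only part of any ciphertext tying it to a keyword-and-structure pair is the \emph{head} component of the first ciphertext of a chain, namely $\hat{e}(P,H(W))^{u}=\hat{e}(g,g)^{s\cdot h\cdot u}$ where $H(W)=g^{h}$ and $u$ is the structure secret --- a DBDH monomial; the remaining components $g^{r}$, $\hat{e}(P,H(W))^{r}\cdot Pt$, and the head components $Pt$ of non-first ciphertexts are, for an adversary lacking the matching trapdoor and structure secret, fresh uniform elements of their groups.

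I would set $P=g^{a}$, so the master secret is implicitly $s=a$, and program the random oracle $\mathcal{Q}_{H}$ and the $N$ structures in Coron's partitioning style. On each new $\mathcal{Q}_{H}(W)$, $\mathcal{B}$ flips a $\delta_{1}$-biased coin: with probability $\delta_{1}$ it ``rigs'' $W$ by returning $H(W)=(g^{b})^{\beta_{W}}$ and otherwise returns $H(W)=g^{\beta_{W}}$, storing the fresh $\beta_{W}$; when creating structure $j$ it flips a $\delta_{2}$-biased coin, ``rigging'' it by $\mathbf{Pub}_{j}=(g^{c})^{\gamma_{j}}$ with probability $\delta_{2}$ and otherwise $\mathbf{Pub}_{j}=g^{\gamma_{j}}$. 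A trapdoor query on an unrigged $W$ is answered $P^{\beta_{W}}=H(W)^{a}$, a privacy query on an unrigged structure is answered $\gamma_{j}$, and a query hitting a rigged keyword or structure aborts. An encryption query $\mathcal{Q}_{Enc}(W,\mathbf{Pub}_{j})$ is answered by executing $\mathbf{StructuredEncryption}$ symbolically: chain pointers are drawn as recorded uniform elements of $\mathbb{G}_{1}$ and the third component is masked through $\hat{e}(g^{r},\cdot)$, which $\mathcal{B}$ can always evaluate; if at most one of $W,\mathbf{Pub}_{j}$ is rigged the head component is computable directly (e.g.\ $\hat{e}(g^{a},g^{b})^{\beta_{W}\gamma_{j}}$, $\hat{e}(g^{a},g^{c})^{\beta_{W}\gamma_{j}}$, or $\hat{e}(g^{a},g)^{\beta_{W}\gamma_{j}}$); if both are rigged, $\mathcal{B}$ plants $Z^{\beta_{W}\gamma_{j}}$ as the head component. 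The key consistency fact is that $\mathcal{A}$ never holds $H(W)^{a}$ for a rigged $W$ nor the secret of a rigged structure, so it cannot recompute a doubly-rigged head component; when $Z$ is real that planted value is exactly $\hat{e}(P,H(W))^{u}$, and when $Z$ is random it is a fresh uniform element of $\mathbb{G}_{1}$ --- in either case distributed as in the real scheme from $\mathcal{A}$'s view.

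At the Challenge Phase $\mathcal{B}$ aborts unless $W^{*}_{0},W^{*}_{1}$ are rigged keywords and $\mathbf{Pub}^{*}_{0},\mathbf{Pub}^{*}_{1}$ are rigged structures (consistent with $\mathcal{A}$ being forbidden to expose any of the four relevant secrets); it picks $d\overset{\$}{\leftarrow}\{0,1\}$ --- choosing, when possible, a pair not yet encryption-queried so the challenge is a genuine first ciphertext of its chain --- and returns the challenge as that first ciphertext with head component $Z^{\beta_{W^{*}_{d}}\gamma_{j_{d}}}$; in the degenerate case where both challenge pairs were already encryption-queried, $\mathcal{A}$'s advantage is zero because the chain-successor ciphertext it then receives is statistically independent of $d$, so $\mathcal{B}$ may answer arbitrarily. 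Finally $\mathcal{B}$ outputs $1$ iff $\mathcal{A}$'s guess $d'$ equals $d$. Conditioned on no abort, if $Z=\hat{e}(g,g)^{abc}$ the view is exactly the real SS-CKSA game and $\Pr[d'=d]=\tfrac12+Adv^{\text{SS-CKSA}}_{SPCHS,\mathcal{A}}$; if $Z$ is uniform, the challenge ciphertext and everything downstream are independent of $d$ (again using that $\mathcal{A}$ lacks the challenge trapdoors and structure secrets, which is what keeps the hidden chain links masked) so $\Pr[d'=d]=\tfrac12$. Hence $Adv_{\mathcal{B}}^{DBDH}(1^{k})=\Pr[\neg\mathrm{abort}]\cdot Adv^{\text{SS-CKSA}}_{SPCHS,\mathcal{A}}$.

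\textbf{The main obstacle is the abort analysis.} Non-abort requires that all (at most $q_{t}$) trapdoor-queried keywords be unrigged and all (at most $q_{p}$) privacy-queried structures be unrigged while, simultaneously, the two challenge keywords and two challenge structures all be rigged; bounding this probability below over $\delta_{1},\delta_{2}$ (together with any further guessing the concrete reduction performs to cope with the two-pair structure of the challenge) and then optimizing the biases is precisely what produces a factor of the stated shape $Adv_{\mathcal{B}}^{DBDH}(1^{k})\approx\frac{27}{(e\cdot q_{t}\cdot q_{p})^{3}}\cdot Adv^{\text{SS-CKSA}}_{SPCHS,\mathcal{A}}$. A secondary, more mechanical but delicate point is to verify that $\mathcal{B}$'s symbolic run of $\mathbf{StructuredEncryption}$ --- re-using a recorded pointer as the next head component, masking via $\hat{e}(g^{r},\cdot)$, and planting $Z^{\beta\gamma}$ on doubly-rigged chains --- reproduces the real ciphertext distribution and stays consistent with every later trapdoor or privacy exposure that the game permits $\mathcal{A}$ to request.
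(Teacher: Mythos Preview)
Your reduction follows essentially the same blueprint as the paper's proof: set $P=g^{a}$, Coron-partition both the random-oracle outputs and the structure public parts, plant the DBDH element $Z$ in the head component of doubly-rigged chains, and output $1$ iff $\mathcal{A}$ guesses $d$. Two technical differences are worth noting.

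First, the paper uses a \emph{single} bias $\sigma$ for both the keyword and the structure coins, and at the challenge phase it requires only \emph{three} values to be rigged: both challenge structures and the single keyword $W^{*}_{d}$ actually used (not both $W^{*}_{0},W^{*}_{1}$). This is why the non-abort probability has the form $(1-\sigma)^{q_{t}\cdot q_{p}}\sigma^{3}$ and, after optimizing $\sigma=3/(3+q_{t}q_{p})$, gives exactly the stated $27/(e\,q_{t}q_{p})^{3}$. Your setup rigs all four challenge values with two separate biases; optimizing $(1-\delta_{1})^{q_{t}}\delta_{1}^{2}\cdot(1-\delta_{2})^{q_{p}}\delta_{2}^{2}$ yields roughly $16/(e^{2}q_{t}q_{p})^{2}$, which is a valid (indeed tighter) loss but not the constant claimed in the theorem, so your last paragraph's assertion that ``optimizing the biases \ldots\ produces a factor of the stated shape'' does not follow from your own construction.

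Second, the paper does not pick a fresh $r$ in the challenge ciphertext; it sets the second component to $g^{b}$ and the third to $Z^{x^{*}_{d}}\cdot Pt$ (resp.\ $Z^{x^{*}_{d}}\cdot R$). This lets $Z$ enter both the first and third components, so the ``already-encrypted'' case is handled by the same mechanism rather than by the separate argument you sketch. Your claim that in that degenerate case the challenge is ``statistically independent of $d$'' is not right as stated: the first component $Pt[u^{*}_{d},W^{*}_{d}]$ is the pointer of chain $d$, which $\mathcal{A}$ has seen masked in an earlier third component; its hiding is computational (exactly DBDH-type), not statistical. The paper's embedding sidesteps this by making $Z$ itself carry the randomness in that case.
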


\textbf{Forward and Backward Security.} Even in the case that a sender 
gets his local privacy $\mathbf{Pri}$ compromised, SPCHS still offers forward security. This means that the existing hidden structure of ciphertexts stays confidential, since the local privacy only contains the relationship of the new generated ciphertexts. To offer backward security with SPCHS, the sender can initialize a new structure by algorithm $\mathbf{StructureInitialization}$ for the new generated ciphertexts. Because the new structure is independent of the old one, the compromised local privacy will not leak the new generated structure.

\textbf{Search Complexity.} All keyword-searchable ciphertexts can be indexed by their first parts' binary bits. Assume that there are in total $n$ ciphertexts from $n_{\mbox{\tiny \emph S}}$ hidden structures, and the $i$-th hidden structure contains $n_{{\mbox{\tiny \emph W}},i}$ ciphertexts of keyword $W\in\mathcal{W}$. With the $i$-th hidden structure, the search complexity is $O(n_{{\mbox{\tiny \emph W}},i}\log n)$. For all hidden structures, the sum search complexity is $O((n_{\mbox{\tiny \emph S}}+n_{{\mbox{\tiny \emph W}}})\log n)$, where $n_{\mbox{\tiny \emph W}}=\sum_{i=1}^{n_{\mbox{\tiny \emph S}}} n_{{\mbox{\tiny \emph W}},i}$. Since $n=\sum_{W\in\mathcal{W}} n_{\mbox{\tiny \emph W}}$ and $n_{\mbox{\tiny \emph W}}=\sum_{i=1}^{n_{\mbox{\tiny \emph S}}} n_{{\mbox{\tiny \emph W}},i}$, we have that $n_{\mbox{\tiny \emph S}}\ll n_{\mbox{\tiny \emph W}}\ll n$. Thus the above SPCHS instance allows a much more efficient search than existing PEKS schemes, which have $O(n)$ search complexity.

One may note that SPCHS loses its significant advantage in search performance compared with PEKS if $n_{\mbox{\tiny \emph S}}=n$ holds. However, this special case seldom happens. In practice, a sender will extract several keywords from 
each of his files. So we usually have  $n_{\mbox{\tiny \emph S}}\ll n$ even if each sender only has one file. In addition, most related works on SEKS and PEKS assume that each file has several keywords.  

\textbf{Experiment.} We coded our SPCHS scheme, and tested the time cost of algorithm $\mathbf{StructuredSearch}$ to execute its cryptographic operations for different numbers of matching ciphertexts. We also coded the PEKS scheme \cite{BCO04}. Table \ref{T.Parameters} shows the system parameters including hardware, software and the chosen elliptic curve. Assume there are in total $10^4$ searchable ciphertexts. PEKS takes about 53.8 seconds search time per keyword, since it must test all ciphertexts for each search. 
Figure \ref{F.Result} shows the experimental results of SPCHS. It is clear that the time cost of SPCHS is linear with the number of matching ciphertexts, whereas for PEKS it is linear with 
the number of total ciphertexts. Hence, SPCHS is much more efficient than PEKS. 

\begin{table}
\centering
\caption{System parameters}\label{T.Parameters}
\begin{tabular}{|c|c|c|}
\hline
Hardware & \multicolumn{2}{c|}{Intel CPU
	E5300 @ 2.60GHz}
\\ \hline OS and compiler & \multicolumn{2}{c|}{Win XP and
Microsoft VC++ 6.0}
\\ \hline Program library & \multicolumn{2}{c|}{MIRACL version 5.4.1}
\\ \hline \multicolumn{3}{|c|}{Parameters of bilinear map}
\\ \hline Elliptic curve & \multicolumn{2}{c|}{$y^2=x^3+A\cdot x+B\cdot x$}
\\ \hline Pentanomial basis & \multicolumn{2}{c|}{$t^m+t^a+t^b+t^c+1$}
\\ \hline Base field: $2^m$ & \multicolumn{2}{c|}{$m=379$}
\\ \hline A & \multicolumn{2}{c|}{1}
\\ \hline B & \multicolumn{2}{c|}{1}
\\ \hline Group order: $q$ & \multicolumn{2}{c|}{$2^m+2^{(m+1)/2}+1$}
\\ \hline a & \multicolumn{2}{c|}{315}
\\ \hline b & \multicolumn{2}{c|}{301}
\\ \hline c & \multicolumn{2}{c|}{287}
\\ \hline \multicolumn{3}{|l|}{The default unit is decimal.}
\\ \hline
\end{tabular}
\end{table}

\begin{figure}
\centering
\includegraphics[width=0.5\textwidth,height=0.195\textheight]{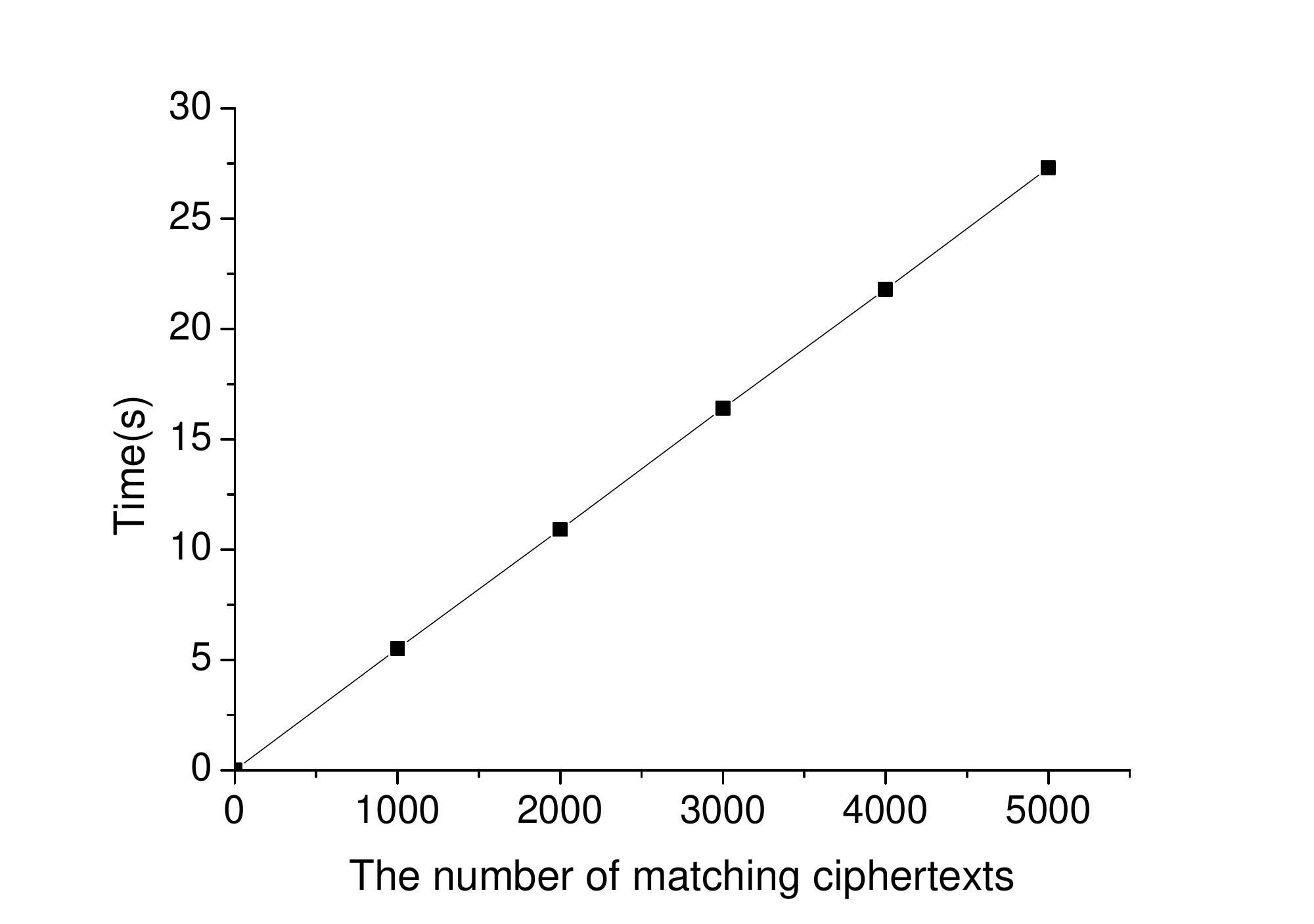}
\caption{Time cost of SPCHS}\label{F.Result}
\end{figure}

\section{A Generic Construction of SPCHS \\ from IBKEM and IBE}\label{S.SPCHS.Generic}

In this section, we formalize collision-free full-identity malleable IBKEM and a generic SPCHS construction from IBKEM and IBE. 

\subsection{Reviewing IBE}
Before the generic SPCHS construction, let us review the concept of IBE and its Anonymity and Semantic Security both under adaptive-ID and Chosen Plaintext Attacks (Anon-SS-ID-CPA).

\begin{definition}[IBE \cite{BF01}]\label{D.IBE}
IBE consists of four algorithms:
\begin{itemize}
\item $\mathbf{Setup}_{\mbox{\tiny IBE}}(1^k,\mathcal{ID}_{\mbox{\tiny IBE}})$: Take as inputs a security parameter $1^k$ and an identity space $\mathcal{ID}_{\mbox{\tiny IBE}}$, and probabilistically output the master public-and-secret-key pair $(\mathbf{PK}_{\mbox{\tiny IBE}},\mathbf{SK}_{\mbox{\tiny IBE}})$, where $\mathbf{PK}_{\mbox{\tiny IBE}}$ includes the message space $\mathcal{M}_{\mbox{\tiny IBE}}$, the ciphertext space $\mathcal{C}_{\mbox{\tiny IBE}}$ and the identity space $\mathcal{ID}_{\mbox{\tiny IBE}}$.

\item $\mathbf{Extract}_{\mbox{\tiny IBE}}(\mathbf{SK}_{\mbox{\tiny IBE}},ID)$: Take as inputs $\mathbf{SK}_{\mbox{\tiny IBE}}$ and an identity $ID\in\mathcal{ID}_{\mbox{\tiny IBE}}$, and output a decryption key $\tilde{S}_{ID}$ of $ID$.

\item $\mathbf{Enc}_{\mbox{\tiny IBE}}(\mathbf{PK}_{\mbox{\tiny IBE}},ID,M)$: Take as inputs $\mathbf{PK}_{\mbox{\tiny IBE}}$, an identity $ID\in\mathcal{ID}_{\mbox{\tiny IBE}}$ and a message $M$, and probabilistically output a ciphertext $\tilde{C}$.

\item $\mathbf{Dec}_{\mbox{\tiny IBE}}(\tilde{S}_{ID^\prime},\tilde{C})$: Take as inputs the decryption key $\tilde{S}_{ID^\prime}$ of identity $ID^\prime$ and a ciphertext $\tilde{C}$, and output a message or $\bot$, if the ciphertext is invalid.
\end{itemize}

An IBE scheme must be consistent in the sense that for any $\tilde{C}=\mathbf{Enc}_{\mbox{\tiny
IBE}}(\mathbf{PK}_{\mbox{\tiny IBE}},ID,M)$ and $\tilde{S}_{ID^\prime}=\mathbf{Extract}_{\mbox{\tiny IBE}}(\mathbf{SK}_{\mbox{\tiny
IBE}},ID^\prime)$, $\mathbf{Dec}_{\mbox{\tiny IBE}}(\tilde{S}_{ID^\prime},\tilde{C})=M$ holds if $ID^\prime=ID$, except with a negligible probability in the security parameter $k$.
\end{definition}

In the Anon-SS-ID-CPA security notion of IBE, a PPT adversary is allowed to query the decryption keys for adaptively chosen identities, and adaptively choose two challenge identity-and-message pairs. The Anon-SS-ID-CPA security of IBE means that for a challenge ciphertext, the adversary cannot determine which challenge identity and which challenge message it corresponds to, provided that the adversary does not know the two challenge identities' decryption keys. The Anon-SS-ID-CPA security of an IBE scheme is as follows. 

\begin{definition}[Anon-SS-ID-CPA security of IBE \cite{ABN10}] 
An IBE scheme is Anon-SS-ID-CPA secure if any PPT adversary $\mathcal{B}$ has only a negligible advantage $Adv^{\text{Anon-SS-ID-CPA}}_{IBE,\mathcal{B}}$ to win in the following Anon-SS-ID-CPA game:

\begin{itemize}
\item \textbf{Setup Phase}: A challenger sets up the IBE scheme by running algorithm $\mathbf{Setup}_{\mbox{\tiny IBE}}$ to generate the master public-and-secret-keys pair $(\mathbf{PK}_{\mbox{\tiny IBE}},\mathbf{SK}_{\mbox{\tiny IBE}})$, and sends $\mathbf{PK}_{\mbox{\tiny IBE}}$ to $\mathcal{B}$. 

\item \textbf{Query Phase 1}: Adversary $\mathcal{B}$ adaptively issues the following query multiple times.
\begin{itemize}
\item Decryption Key Query $\mathcal{Q}_{DK}^{IBE}(ID)$: Taking as input an identity $ID\in\mathcal{ID}_{\mbox{\tiny IBE}}$, the challenger outputs the decryption key of identity $ID$. 
\end{itemize}

\item \textbf{Challenge Phase}: Adversary $\mathcal{B}$ sends two challenge identity-and-message pairs $(ID_0^*,M_0^*)$ and $(ID_1^*,M_1^*)$ to the challenger; the challenger picks $\tilde{d}\overset{\$}\leftarrow\{0,1\}$, and sends the challenge IBE ciphertext $\tilde{C}^*_{\tilde{d}}=\mathbf{Enc}_{\mbox{\tiny IBE}}(\mathbf{PK}_{\mbox{\tiny IBE}},ID^*_{\tilde{d}},M^*_{\tilde{d}})$ to $\mathcal{B}$.

\item \textbf{Query Phase 2}: This phase is the same as \textbf{Query Phase 1}. Note that in \textbf{Query Phase 1} and \textbf{Query Phase 2}, $\mathcal{B}$ cannot query the decryption key corresponding to the challenge identity $ID_0^*$ or $ID_1^*$.

\item \textbf{Guess Phase}: Adversary $\mathcal{B}$ sends a guess $\tilde{d}^\prime$ to the challenger. We say that $\mathcal{B}$ wins if $\tilde{d}^\prime=\tilde{d}$. Let $Adv^{\text{Anon-SS-ID-CPA}}_{IBE,\mathcal{B}}=Pr[\tilde{d}^\prime=\tilde{d}]-\frac{1}{2}$ be the advantage of $\mathcal{B}$ to win in the above game. 
\end{itemize}
\end{definition}

\subsection{The Collision-Free Full-Identity Malleable IBKEM}

Our generic construction also relies on a notion of collision-free full-identity malleable IBKEM. The following IBKEM definition is derived from \cite{IP08}. A difference only appears in algorithm $\mathbf{Encaps}_{\mbox{\tiny IBKEM}}$. In order to highlight that the generator of an IBKEM encapsulation knows the chosen random value used in algorithm $\mathbf{Encaps}_{\mbox{\tiny IBKEM}}$, we take the random value as an input of the algorithm.

\begin{definition}[IBKEM]\label{D.IBKEM}
IBKEM consists of four algorithms:

\begin{itemize}
\item $\mathbf{Setup}_{\mbox{\tiny IBKEM}}(1^k,\mathcal{ID}_{\mbox{\tiny IBKEM}})$: Take as inputs a security parameter $1^k$ and an identity space $\mathcal{ID}_{\mbox{\tiny IBKEM}}$, and probabilistically output the master public-and-secret-keys pair $(\mathbf{PK}_{\mbox{\tiny IBKEM}},\mathbf{SK}_{\mbox{\tiny IBKEM}})$, where $\mathbf{PK}_{\mbox{\tiny IBKEM}}$ includes the identity space $\mathcal{ID}_{\mbox{\tiny IBKEM}}$, the encapsulated key space $\mathcal{K}_{\mbox{\tiny IBKEM}}$ and the encapsulation space $\mathcal{C}_{\mbox{\tiny IBKEM}}$.

\item $\mathbf{Extract}_{\mbox{\tiny IBKEM}}(\mathbf{SK}_{\mbox{\tiny IBKEM}},ID)$: Take as inputs $\mathbf{SK}_{\mbox{\tiny IBKEM}}$ and an identity $ID\in\mathcal{ID}_{\mbox{\tiny IBKEM}}$, and output a decryption key $\hat{S}_{ID}$ of $ID$.

\item $\mathbf{Encaps}_{\mbox{\tiny IBKEM}}(\mathbf{PK}_{\mbox{\tiny IBKEM}},ID,r)$: Take as inputs $\mathbf{PK}_{\mbox{\tiny IBKEM}}$, an identity $ID\in\mathcal{ID}_{\mbox{\tiny IBKEM}}$ and a random value $r$, and deterministically output a key-and-encapsulation pair $(\hat{K},\hat{C})$ of $ID$.

\item $\mathbf{Decaps}_{\mbox{\tiny IBKEM}}(\hat{S}_{ID^\prime},\hat{C})$: Take as inputs the decryption key $\hat{S}_{ID^\prime}$ of identity $ID^\prime$ and an encapsulation $\hat{C}$, and output an encapsulated key or $\bot$, if the encapsulation is invalid.
\end{itemize}

An IBKEM scheme must be consistent in the sense that for any $(\hat{K},\hat{C})=\mathbf{Encaps}_{\mbox{\tiny
IBKEM}}($ $\mathbf{PK}_{\mbox{\tiny IBKEM}},ID,r)$, $\mathbf{Decaps}_{\mbox{\tiny IBKEM}}(\hat{S}_{ID^\prime},\hat{C})=\hat{K}$ holds if $ID^\prime=ID$, except with a negligible probability in the security parameter $k$.
\end{definition}

The collision-free full-identity malleable IBKEM implies the following characteristics: all
identities' decryption keys can decapsulate the same encapsulation; all decapsulated keys are collision-free; the generator of the encapsulation can also compute these decapsulated keys; the decapsulated keys of different encapsulations are also collision-free.

\begin{definition}[Collision-Free Full-Identity Malleable IBKEM]\label{D.FIM}
IBKEM is collision-free full-identity malleable, if there is an efficient
function $\mathbf{FIM}$ that for any $(\hat{K},\hat{C})=\mathbf{Encaps}_{\mbox{\tiny IBKEM}}($ $\mathbf{PK}_{\mbox{\tiny IBKEM}},ID,r)$, the function $\mathbf{FIM}$ satisfies the following features:

\begin{itemize}
\item (Full-Identity Malleability) For any identity $ID^\prime\in\mathcal{ID}_{\mbox{\tiny IBKEM}}$, the equation $\mathbf{FIM}(ID^\prime,r)=\mathbf{Decaps}_{\mbox{\tiny IBKEM}}(\hat{S}_{ID^\prime},\hat{C})$ always holds, where
$\hat{S}_{ID^\prime}=\mathbf{Extract}_{\mbox{\tiny IBKEM}}(\mathbf{SK}_{\mbox{\tiny IBKEM}},ID^\prime)$;
\item (Collision-Freeness) For any identity $ID^\prime\in\mathcal{ID}_{\mbox{\tiny IBKEM}}$ and any random value $r^\prime$, if $ID\neq ID^\prime\bigvee r\neq r^\prime$, then $\mathbf{FIM}(ID,r)\neq \mathbf{FIM}(ID^\prime,r^\prime)$ holds,  except with a negligible probability in the security parameter $k$.
\end{itemize}
\end{definition}

A collision-free full-identity malleable IBKEM scheme may preserve semantic security and anonymity. We incorporate the semantic security and anonymity into Anon-SS-ID-CPA secure IBKEM. But this security is different from the traditional version  \cite{IP08} of the Anon-SS-ID-CPA security due to the full-identity malleability of IBKEM. The difference will be introduced after defining that security. In that security, a PPT adversary is allowed to query the decryption keys for adaptively chosen identities, and adaptively choose two challenge identities.  The Anon-SS-ID-CPA security of IBKEM means that for a challenge key-and-encapsulation pair, the adversary cannot determine the correctness of this pair and the challenge identity of this pair, given that the adversary does not know the two challenging identities' decryption keys. The Anon-SS-ID-CPA security of a collision-free full-identity malleable IBKEM scheme is as follows. 

\begin{definition}[Anon-SS-ID-CPA security of IBKEM]\label{D.IBKEM.Security}
An IBKEM scheme is Anon-SS-ID-CPA secure if any PPT adversary $\mathcal{B}$ has only a negligible advantage $Adv^{\text{Anon-SS-ID-CPA}}_{IBKEM,\mathcal{B}}$ to win in the following Anon-SS-ID-CPA game:

\begin{itemize}
\item \textbf{Setup Phase}: A challenger sets up the IBKEM scheme by running algorithm $\mathbf{Setup}_{\mbox{\tiny IBKEM}}$ to generate the master public-and-secret-keys pair $(\mathbf{PK}_{\mbox{\tiny IBKEM}},\mathbf{SK}_{\mbox{\tiny IBKEM}})$, and sends $\mathbf{PK}_{\mbox{\tiny IBKEM}}$ to $\mathcal{B}$. 

\item \textbf{Query Phase 1}: $\mathcal{B}$ adaptively issues the following query multiple times.
\begin{itemize}
\item Decryption Key Query $\mathcal{Q}_{DK}^{IBKEM}(ID)$: Taking as input an identity $ID\in\mathcal{ID}_{\mbox{\tiny IBKEM}}$, the challenger outputs the decryption key of identity $ID$. 
\end{itemize}

\item \textbf{Challenge Phase}: $\mathcal{B}$ sends two challenge identities $ID^*_0$ and $ID_1^*$ to the challenger; the challenger picks $\hat{d}\overset{\$}\leftarrow\{0,1\}$, computes $(\hat{K}^*_0,\hat{C}^*_0)=\mathbf{Encaps}_{\mbox{\tiny IBKEM}}(PK_{\mbox{\tiny IBKEM}},ID^*_0,r_0)$
and $(\hat{K}^*_1,\hat{C}^*_1)=\mathbf{Encaps}_{\mbox{\tiny IBKEM}}(PK_{\mbox{\tiny IBKEM}},ID^*_1,r_1)$, and sends the challenge key-and-encapsulation pair $(\hat{K}_{\hat{d}}^*,\hat{C}_0^*)$ to $\mathcal{B}$, where $r_0$ and $r_1$ are randomly chosen.

\item \textbf{Query Phase 2}: This phase is the same as \textbf{Query Phase 1}. Note that in \textbf{Query Phase 1} and \textbf{Query Phase 2}, $\mathcal{B}$ cannot query the decryption keys both of the challenge identities $ID_0^*$ and $ID_1^*$.

\item \textbf{Guess Phase}: $\mathcal{B}$ sends a guess $\hat{d}^\prime$ to the challenger. We say that $\mathcal{B}$ wins if $\hat{d}^\prime=\hat{d}$. Let $Adv^{\text{Anon-SS-ID-CPA}}_{IBKEM,\mathcal{B}}=Pr[\hat{d}^\prime=\hat{d}]-\frac{1}{2}$ be the advantage of $\mathcal{B}$ to win in the above game. 
\end{itemize}
\end{definition}

In the above definition, the anonymity of the encapsulated keys is defined by the indistinguishability of $\hat{K}_0^*$ and $\hat{K}_1^*$. But we do not define the anonymity of the IBKEM encapsulations (\emph{i.e.} the challenge key-and-encapsulation pair consists of $\hat{C}_0^*$ instead of $\hat{C}_{\hat{d}}^*$), since the full-identity malleability of IBKEM implies that any IBKEM encapsulation is valid for all identities.

A weaker security definition of IBKEM is the selective-identity security, referred to as the Anon-SS-sID-CPA security. The corresponding attack game is called the Anon-SS-sID-CPA game in which the adversary must commit to the two challenge identities before the system is set up.

\subsection{The Proposed Generic SPCHS Construction}

Let keyword space $\mathcal{W}\subset\mathcal{ID}_{\mbox{\tiny IBKEM}}=\mathcal{ID}_{\mbox{\tiny IBE}}$. Our generic SPCHS construction from the collision-free full-identity malleable IBKEM and IBE is as follows.

\begin{itemize}
\item $\mathbf{SystemSetup}(1^k,\mathcal{W})$: Take as inputs a security parameter $1^k$ and the keyword space $\mathcal{W}$, run $(\mathbf{PK}_{\mbox{\tiny IBKEM}},\mathbf{SK}_{\mbox{\tiny IBKEM}})=\mathbf{Setup}_{\mbox{\tiny IBKEM}}(1^k,\mathcal{ID}_{\mbox{\tiny IBKEM}})$ and $(\mathbf{PK}_{\mbox{\tiny IBE}},\mathbf{SK}_{\mbox{\tiny IBE}})=\mathbf{Setup}_{\mbox{\tiny IBE}}(1^k,$ $\mathcal{ID}_{\mbox{\tiny IBE}})$, and output a pair of master public-and-secret keys $(\mathbf{PK}=(\mathbf{PK}_{\mbox{\tiny IBKEM}},\mathbf{PK}_{\mbox{\tiny IBE}}),\mathbf{SK}=(\mathbf{SK}_{\mbox{\tiny IBKEM}},\mathbf{SK}_{\mbox{\tiny IBE}}))$. Let the SPCHS ciphertext space $\mathcal{C}=\mathcal{K}_{\mbox{\tiny IBKEM}}\times\mathcal{C}_{\mbox{\tiny IBE}}$, and $\mathcal{K}_{\mbox{\tiny IBKEM}}=\mathcal{M}_{\mbox{\tiny IBE}}$.

\item $\mathbf{StructureInitialization}(\mathbf{PK})$: Take as input $\mathbf{PK}$, arbitrarily pick a keyword $W\in\mathcal{W}$ and a random value $u$, generate an IBKEM encapsulated key and its encapsulation $(\hat{K},\hat{C})=\mathbf{Encaps}_{\mbox{\tiny IBKEM}}(\mathbf{PK}_{\mbox{\tiny IBKEM}},W,u)$, and initialize a hidden structure by outputting a pair of private-and-public parts $(\mathbf{Pri}=(u),\mathbf{Pub}=\hat{C})$. Note that $\mathbf{Pri}$ here is a variable list formed as $(u,\{(W,Pt[u,W])|W\in\mathcal{W}\})$, which is initialized as $(u)$.

(In the above, an IBKEM encapsulation and its related random value are respectively taken as the public-and-private parts of a hidden structure. To generate these two parts , an arbitrary keyword have to be chosen to run algorithm $\mathbf{Encaps}_{\mbox{\tiny IBKEM}}$.)

\item $\mathbf{StructuredEncryption}(\mathbf{PK},W,\mathbf{Pri})$: Take as inputs $\mathbf{PK}$, a keyword $W\in\mathcal{W}$, a hidden structure's private part $\mathbf{Pri}$, and do the following steps: \begin{enumerate}
\item Search $(W,Pt[u,W])$ for $W$ in $\mathbf{Pri}$;
\item If it is not found, insert $(W,Pt[u,W]\overset{\$}\leftarrow
\mathcal{M}_{\mbox{\tiny IBE}})$ to $\mathbf{Pri}$, and output the keyword-searchable ciphertext $C=(\mathbf{FIM}(W,u),\mathbf{Enc}_{\mbox{\tiny IBE}}(\mathbf{PK}_{\mbox{\tiny IBE}},W,Pt[u,W])$;
\item Otherwise, pick $R\overset{\$}{\leftarrow}\mathcal{M}_{\mbox{\tiny IBE}}$, set $C=(Pt[u,W],\mathbf{Enc}_{\mbox{\tiny IBE}}(\mathbf{PK}_{\mbox{\tiny IBE}},W,R))$, update $Pt[u,W]=R$, and output the keyword-searchable ciphertext $C$;
\end{enumerate}

\item $\mathbf{Trapdoor}(\mathbf{SK},W)$: Take as inputs $\mathbf{SK}$ and a keyword $W\in\mathcal{W}$, run $\hat{S}_W=\mathbf{Extract}_{\mbox{\tiny IBKEM}}($ $\mathbf{SK}_{\mbox{\tiny IBKEM}},W)$ and $\tilde{S}_W= \mathbf{Extract}_{\mbox{\tiny IBE}}(\mathbf{SK}_{\mbox{\tiny IBE}},W)$, and output a keyword search trapdoor $T_W=(\hat{S}_W,\tilde{S}_W)$ of keyword $W$.

\item $\mathbf{StructuredSearch}(\mathbf{PK},\mathbf{Pub},\mathbb{C},T_W)$: Take as inputs $\mathbf{PK}$, a hidden structure's public part $\mathbf{Pub}$, all keyword-searchable ciphertexts $\mathbb{C}$ (let $\mathbb{C}[i]$ denote one ciphertext of $\mathbb{C}$, and this ciphertext can be parsed as $(\mathbb{C}[i,1],\mathbb{C}[i,2])\in\mathcal{C}=\mathcal{K}_{\mbox{\tiny IBKEM}}\times\mathcal{C}_{\mbox{\tiny IBE}}$) and a keyword trapdoor $T_W=(\hat{S}_W,\tilde{S}_W)$ of keyword $W$, set $\mathbb{C}^\prime=\phi$, and do the following steps:
\begin{enumerate}
\item Compute $Pt^\prime=\mathbf{Decaps}_{\mbox{\tiny IBKEM}}(\hat{S}_W,\mathbf{Pub})$;
\item Seek a ciphertext $\mathbb{C}[i]$ having $\mathbb{C}[i,1]=Pt^\prime$; if 
it exists,  add $\mathbb{C}[i]$ into $\mathbb{C}^\prime$; 
\item If no matching ciphertext is found, output $\mathbb{C}^\prime$;
\item Compute $Pt^\prime=\mathbf{Dec}_{\mbox{\tiny IBE}}(\tilde{S}_{ID^\prime},\mathbb{C}[i,2])$, go to step 2;
\end{enumerate}
\end{itemize}

Figure \ref{F.SPCHS.Generic} shows a hidden star-like structure generated by the generic SPCHS construction. When running algorithm $\mathbf{StructuredSearch}(\mathbf{PK},\mathbf{Pub},\mathbb{C},T_{W_i})$, the full-identity malleability of IBKEM allows the algorithm to disclose the value $\mathbf{FIM}(W_i,u)$ by computing $\mathbf{FIM}(W_i,u)=\mathbf{Decaps}_{\mbox{\tiny IBKEM}}(\hat{S}_{W_i},\mathbf{Pub})$ and find out the ciphertext $(\mathbf{FIM}(W_i,u),\mathbf{Enc}_{\mbox{\tiny IBE}}(\mathbf{PK}_{\mbox{\tiny IBE}},$ $W_i,Pt[u,W_i]))$. Then the consistency of IBE allows the algorithm to disclose $Pt[u,W_i]$ by decrypting $\mathbf{Enc}_{\mbox{\tiny IBE}}(\mathbf{PK}_{\mbox{\tiny IBE}},$ $W_i,Pt[u,W_i])$ and find out the ciphertext $(Pt[u,W_i],\mathbf{Enc}_{\mbox{\tiny IBE}}($ $\mathbf{PK}_{\mbox{\tiny IBE}},$ $W_i,R))$. By carrying on in this way, 
the consistency of IBE allows the algorithm to find out the rest of ciphertexts of keyword $W_i$ with the hidden star-like structure, and stop the search if no more ciphertexts are found. 

\begin{figure*}[!htp]
\centering
\begin{boxedminipage}{0.9\textwidth}
\includegraphics[scale=0.7]{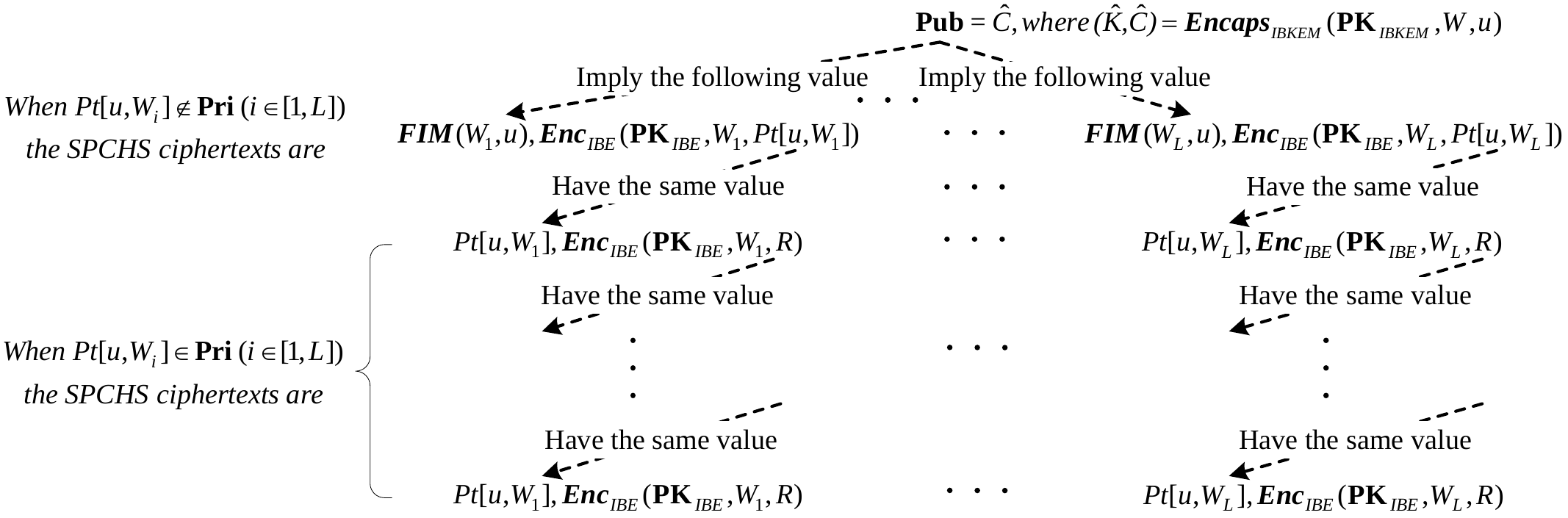}
\begin{quote}
Note that in each ciphertext, the value $R$ is randomly chosen. For $i\in[1,L]$, $Pt[u,W_i]$ is initialized with a random value when generating the first ciphertext of keyword $W_i$, and it will be updated into $R$ after generating each subsequent ciphertext of keyword $W_i$. 
\end{quote} 
\end{boxedminipage}
\caption{Hidden star-like structure generated by the generic SPCHS construction}\label{F.SPCHS.Generic}
\end{figure*}

\textbf{Consistency.} When running the above algorithm $\mathbf{StructuredSearch}(\mathbf{PK},\mathbf{Pub},\mathbb{C},T_W)$, the consistency and full-identity malleability of IBKEM assures that $\mathbf{FIM}(W,u)=\mathbf{Decaps}_{\mbox{\tiny IBKEM}}(\hat{S}_W,\mathbf{Pub})$ holds. The collision-freeness of IBKEM assures that only one ciphertext containing keyword $W$ has the value $\mathbf{FIM}(W,u)$ as its first part. Therefore the algorithm can find out the first ciphertext of keyword $W$ with the hidden structure $\mathbf{Pub}$. Then the consistency of IBE allows the algorithm $\mathbf{StructuredSearch}$ to find out the rest of ciphertexts containing keyword $W$ with the hidden structure $\mathbf{Pub}$. Formally we have Theorem \ref{T.SPCHS.Generic.Consistency}. The proof can be found in Supplemental Materials D.

\begin{theorem}\label{T.SPCHS.Generic.Consistency}
The above generic SPCHS scheme is consistent if its underlying collision-free full-identity malleable IBKEM and IBE schemes are both consistent.
\end{theorem}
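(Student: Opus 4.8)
The plan is to trace the execution of $\mathbf{StructuredSearch}$ against the ciphertexts produced by $\mathbf{StructuredEncryption}$ for a fixed hidden structure $(\mathbf{Pri},\mathbf{Pub})$ and a fixed keyword $W$, and show by induction on the order of encryption that the algorithm recovers exactly the chain of ciphertexts of $W$ attached to $\mathbf{Pub}$. First I would fix notation: let $\mathbf{Pub}=\hat{C}$ where $(\hat{K},\hat{C})=\mathbf{Encaps}_{\mbox{\tiny IBKEM}}(\mathbf{PK}_{\mbox{\tiny IBKEM}},W_0,u)$ for the arbitrary keyword $W_0$ chosen at initialization, and let $C_1,\dots,C_m$ be the ciphertexts of keyword $W$ generated (in that temporal order) under this structure. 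By construction $C_1=(\mathbf{FIM}(W,u),\mathbf{Enc}_{\mbox{\tiny IBE}}(\mathbf{PK}_{\mbox{\tiny IBE}},W,p_1))$ with $p_1\overset{\$}\leftarrow\mathcal{M}_{\mbox{\tiny IBE}}$ the initial value of $Pt[u,W]$, and for $j\ge 2$, $C_j=(p_{j-1},\mathbf{Enc}_{\mbox{\tiny IBE}}(\mathbf{PK}_{\mbox{\tiny IBE}},W,p_j))$ where $p_j$ is the $j$-th sampled value of $Pt[u,W]$.

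The core of the argument is two observations. First, the \emph{entry point} is correct: $\mathbf{StructuredSearch}$ computes $Pt'=\mathbf{Decaps}_{\mbox{\tiny IBKEM}}(\hat{S}_W,\mathbf{Pub})$, and by the full-identity malleability of IBKEM (applied to the encapsulation $\hat{C}$ generated with randomness $u$, evaluated at identity $W$) together with the consistency of IBKEM, this equals $\mathbf{FIM}(W,u)$, which is precisely $\mathbb{C}[\cdot,1]$ of $C_1$. Second, the \emph{traversal step} is correct: if the algorithm has just matched $C_j$, it computes $Pt'=\mathbf{Dec}_{\mbox{\tiny IBE}}(\tilde{S}_W,\mathbb{C}[\cdot,2]\text{ of }C_j)$, which by consistency of IBE equals the plaintext encrypted in the second component of $C_j$ — namely $p_j$ for $j<m$, which is exactly the first component of $C_{j+1}$; and for $j=m$ it equals the final value of $Pt[u,W]$, which was never used as a first component of any ciphertext, so no further match is found and the loop terminates with $\mathbb{C}'=\{C_1,\dots,C_m\}$. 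Combining these by induction gives that all $m$ ciphertexts of $W$ under $\mathbf{Pub}$ are output.

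It remains to argue that the algorithm outputs \emph{only} ciphertexts of $W$ under $\mathbf{Pub}$, i.e. that no spurious ciphertext matches any of the probed values $Pt'$. This is where the collision-freeness of IBKEM is used, and it is the one step requiring a little care. Every probed value $Pt'$ is of the form $\mathbf{FIM}(W,u)$ (at the entry) or an IBE-plaintext $p_j$ sampled uniformly from $\mathcal{M}_{\mbox{\tiny IBE}}=\mathcal{K}_{\mbox{\tiny IBKEM}}$; the first components of all ciphertexts in the system are likewise either of the form $\mathbf{FIM}(W',u')$ for various keyword/randomness pairs, or freshly sampled $p$-values. Collision-freeness guarantees $\mathbf{FIM}(W,u)\neq\mathbf{FIM}(W',u')$ whenever $(W,u)\neq(W',u')$, except with negligible probability, and the uniformly sampled values collide with each other or with any $\mathbf{FIM}$ output only with negligible probability (a union bound over the polynomially many ciphertexts). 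Hence with all but negligible probability every probed $Pt'$ matches at most the intended ciphertext, and since a structure is initialized with a fresh $u$, the chain probed is exactly the one hanging off $\mathbf{Pub}$. The main obstacle is simply making this last collision analysis precise — identifying the full list of values that could appear as first components and invoking collision-freeness (together with the uniform sampling of the $R$ and initial $Pt[u,W]$ values) to rule out accidental matches — but it is routine given the properties already established; the rest follows the same skeleton as the consistency proof of the concrete scheme (Theorem \ref{T.SPCHS.Instance.Consistency}).
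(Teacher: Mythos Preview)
Your proposal is correct and follows essentially the same approach as the paper's own proof: establish the entry point via full-identity malleability of IBKEM, advance along the chain via IBE consistency, and rule out spurious matches via collision-freeness of $\mathbf{FIM}$ together with the uniform sampling of the pointer values. If anything, your treatment is slightly more explicit than the paper's in naming the chain $C_1,\dots,C_m$ and in spelling out the union-bound argument for collisions among the random $p_j$ values; the paper leaves that last point at the level of ``randomly chosen and taken as the first part of only one ciphertext''.
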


\textbf{Semantic Security.} The SS-sK-CKSA security of the above generic SPCHS construction relies on the Anon-SS-sID-CPA security of the underlying IBKEM and the Anon-SS-ID-CPA security of the underlying IBE. In the security proof, we prove that if there is an adversary who can break the SS-sK-CKSA security of the above generic SPCHS construction, then there is another adversary who can break the Anon-SS-sID-CPA security of the underlying IBKEM or the Anon-SS-ID-CPA security of the underlying IBE. Theorem \ref{T.SPCHS.Generic.Security} formally states the semantic security of our generic SPCHS construction. The proof can be found in Supplemental Materials E.

\begin{theorem}\label{T.SPCHS.Generic.Security}
Suppose there are at most $N\in\mathbb{N}$ hidden structures, and a PPT adversary $\mathcal{A}$ wins in the SS-sK-CKSA game with advantage $Adv^{\text{SS-sK-CKSA}}_{SPCHS,\mathcal{A}}$. Then there is a PPT adversary $\mathcal{B}$, who utilizes the capability of $\mathcal{A}$ to win in the Anon-SS-sID-CPA game of the underlying IBKEM or the Anon-SS-ID-CPA game of the underlying IBE with advantage $\frac{1}{4N}\cdot Adv^{\text{SS-sK-CKSA}}_{SPCHS,\mathcal{A}}$.
\end{theorem}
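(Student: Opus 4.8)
The plan is to reduce $\mathcal{A}$'s SS-sK-CKSA advantage to the Anon-SS-sID-CPA advantage against the underlying IBKEM, or the Anon-SS-ID-CPA advantage against the underlying IBE, by a two-step hybrid over the two ``coordinates'' that a challenge SPCHS ciphertext encodes. The first component of a ciphertext is either $\mathbf{FIM}(W,u)$ --- equivalently, the key that $\mathbf{Pub}$ encapsulates, decapsulated under the key of identity $W$ --- or a previously chosen pointer, so it carries information about the \emph{structure}; the second component is $\mathbf{Enc}_{\mbox{\tiny IBE}}(\mathbf{PK}_{\mbox{\tiny IBE}},W,\cdot)$ with a freshly random payload, so it carries information about the \emph{keyword}. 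Accordingly I interpolate between a real ciphertext for $(W_0^*,\mathbf{Pub}_0^*)$ and a real ciphertext for $(W_1^*,\mathbf{Pub}_1^*)$ through an intermediate ciphertext in which exactly one coordinate has been switched, so that distinguishing one adjacent pair is turned into an IBKEM attack and distinguishing the other into an IBE attack.

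Before the simulation, $\mathcal{B}$ flips three independent coins: a bit deciding whether it will attack IBKEM or IBE (equivalently, which of the two hybrid steps it bets $\mathcal{A}$ distinguishes); a guess $j^*\in\{0,1\}$ for which of the two challenge keywords (known up front, since the game is selective-keyword) will \emph{not} have its trapdoor queried; and a guess $\theta\in\{1,\dots,N\}$ for which of the $N$ structures created in the Setup Phase will later be named as the challenge structure whose private part is \emph{not} queried. The SS-sK-CKSA restriction guarantees at least one such keyword and at least one such structure exist, so all three guesses are simultaneously correct with probability at least $\tfrac14\cdot\tfrac1N$; otherwise $\mathcal{B}$ aborts and outputs a random bit.

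Conditioned on correct guesses, $\mathcal{B}$ embeds its challenge. In the IBKEM branch it uses the given $\mathbf{PK}_{\mbox{\tiny IBKEM}}$, generates $(\mathbf{PK}_{\mbox{\tiny IBE}},\mathbf{SK}_{\mbox{\tiny IBE}})$ itself, picks the randomness $u$ of every structure except the $\theta$-th by itself, and sets the $\theta$-th structure's public part to its IBKEM challenge encapsulation; the first component of the challenge SPCHS ciphertext is taken from the IBKEM challenge key (when that ciphertext is the first ciphertext of $W_{j^*}^*$ in the challenge structure) or from a stored random pointer (otherwise), and the second component is produced honestly with $\mathbf{SK}_{\mbox{\tiny IBE}}$. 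Trapdoor queries for keywords other than $W_{j^*}^*$ are answered by combining $\mathbf{SK}_{\mbox{\tiny IBE}}$ with IBKEM decryption-key queries, and privacy queries for structures other than the $\theta$-th by releasing the stored $u$. In the IBE branch the roles are mirrored: $\mathcal{B}$ runs IBKEM itself (so it knows every $u$, can recompute every $\mathbf{FIM}$ value, and can answer all privacy queries), plants the IBE challenge ciphertext as the second component of the challenge SPCHS ciphertext, and answers the remaining trapdoor queries using $\mathbf{SK}_{\mbox{\tiny IBKEM}}$ together with IBE decryption-key queries. In both branches every other ciphertext $\mathcal{B}$ must output in an encryption query is generated with fresh randomness exactly as $\mathbf{StructuredEncryption}$ prescribes, keeping all pointer chains internally consistent; finally $\mathcal{B}$ echoes $\mathcal{A}$'s output.

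The main obstacle I expect is the case analysis that glues the hybrid to this reduction: one has to check that, whenever the guesses are correct, \emph{every} query $\mathcal{A}$ is permitted to make can be answered in the chosen sub-game --- in particular that releasing a non-challenge structure's private part or a non-challenge keyword's trapdoor never forces $\mathcal{B}$ to produce a secret it lacks --- and that the simulated challenge ciphertext is distributed exactly as the relevant hybrid endpoint. The subtlest point is that the challenge ciphertext may be either the first or a later ciphertext of $W_{j^*}^*$ in the challenge structure, so its first component must be, respectively, the embedded $\mathbf{FIM}$-value (whose distribution coincides with $\mathbf{Decaps}_{\mbox{\tiny IBKEM}}(\hat{S}_{W_{j^*}^*},\mathbf{Pub})$ by full-identity malleability) or a uniformly random pointer, and an analogous first-versus-later distinction must be respected for the second component in the IBE branch. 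Once this bookkeeping is settled, combining the two adjacent indistinguishability steps with the $\tfrac14\cdot\tfrac1N$ guessing loss yields the claimed $\tfrac1{4N}\cdot Adv^{\text{SS-sK-CKSA}}_{SPCHS,\mathcal{A}}$ bound.
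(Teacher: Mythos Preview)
Your route is genuinely different from the paper's, and as written it has a gap.

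The paper does \emph{not} run a two-branch hybrid. It builds a single $\mathcal{B}$ that plays the IBKEM and IBE games \emph{simultaneously}: the IBKEM challenge encapsulation $\hat C_0^*$ is planted as one of the $N$ public structures (the only guess, hence the $1/N$), and the SPCHS challenge ciphertext is assembled directly as $(\hat K_{\hat d}^{*},\tilde C_{\tilde d}^{*})$ (or $(Pt[u^*,W_0^*],\tilde C_{\tilde d}^{*})$ if it is not the first ciphertext). A crucial device you do not use is that both challenge identity pairs are taken as $(W_0^*,I_1^*)$ with $I_1^*\in\mathcal{ID}_{\mbox{\tiny IBKEM}}\setminus\mathcal{W}$; since $I_1^*$ is not a keyword, $\mathcal{A}$ can never ask for its trapdoor, so no keyword guess $j^*$ is needed. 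The factor $1/4$ then comes not from coin-flip guessing but from an inclusion--exclusion computation of $\Pr[d'=\hat d\ \vee\ d'=\tilde d]$ over the four equiprobable values of $(\hat d,\tilde d)$, which the paper carries out explicitly.

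The gap in your proposal is the decomposition itself. The first ciphertext component $\mathbf{FIM}(W,u)$ depends on \emph{both} the keyword and the structure randomness, so ``IBKEM switches the structure coordinate, IBE switches the keyword coordinate'' does not close up into a two-step chain between the two real endpoints. In your IBE branch you let $\mathcal{B}$ run IBKEM itself and plant the IBE challenge as the second component, but then the first component must be fixed independently of the IBE bit---say to $\mathbf{FIM}(W_{j^*}^*,u_\theta)$. When the IBE bit flips, the second component becomes an encryption under the other keyword while the first component is still tied to $W_{j^*}^*$; this is neither endpoint. Symmetrically, a single IBKEM step in the sense of Definition~\ref{D.IBKEM.Security} replaces $\hat K_0^*=\mathbf{FIM}(W_{j^*}^*,u_\theta)$ by a key $\hat K_1^*$ coming from a \emph{fresh} encapsulation with fresh randomness, not by $\mathbf{FIM}(W_{1-j^*}^*,u_{\theta'})$. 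To make a hybrid route honest you need at least one additional IBKEM step (randomize the first component, switch the second via IBE, then de-randomize the first component for the other pair), which changes the constant you claim; or else you must import the paper's $I_1^*\notin\mathcal{W}$ trick and argue, as the paper does, about the joint distribution of $(\hat d,\tilde d)$ rather than about a sequential hybrid.
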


\section{Two Collision-Free Full-Identity Malleable IBKEM Instances}\label{S.IBKEM.Instances}

\textbf{The Instance in the RO Model.} Abdalla \emph{et al.} proposed several VRF-suitable IBKEM instances in \cite{ACF13}. An IBKEM instance is VRF-suitable if it provides \emph{unique decapsulation}. This means that given any encapsulation, all the decryption keys corresponding to the same identity decapsulate out the same encapsulated key, and the key is pseudo-random. Here, the decryption key extraction is probabilistic and for the same identity, different decryption key may be extracted in different runs of the key extraction algorithm. It is clear that our proposed collision-free full-identity malleability not only implies \emph{unique decapsulation}, but also implies that the generator of an encapsulation knows what keys will be decapsulated by the decryption keys of all identities. In Supplemental Materials  F, we prove that the VRF-suitable IBKEM instance proposed in Appendix A.2 of \cite{ACF13} is collision-free full-identity malleable. Even though this IBKEM scheme has the traditional Anon-SS-ID-CPA security, we further prove that this IBKEM scheme is Anon-SS-ID-CPA secure based on the DBDH assumption in the RO model according to Definition \ref{D.IBKEM.Security}. 

\textbf{The Instance in the Standard Model.}  In \cite{FHPS13}, Freire \emph{et al.} utilized the ``approximation'' of multilinear maps \cite{GGH13} to construct a programmable hash function in the multilinear setting (MPHF). Then Freire \emph{et al.} utilized this hash function to replace the traditional hash functions of the BF IBE scheme in \cite{BF01} and reconstructed this IBE scheme in the multilinear setting. They finally constructed a new IBE scheme with semantic security in the standard model. We find that this new IBE scheme can be easily transformed into a collision-free full-identity malleable IBKEM scheme with Anon-SS-ID-CPA security in the standard model. To simplify the description of this IBKEM scheme, we do not consider the ``approximation'' of multilinear maps. This means that we will leave out the functions that are the encoding of a group element,  the re-randomization of an encoding and the extraction of an encoding.  Some related definitions are reviewed as follows.

\begin{definition}[Multilinear Maps \cite{FHPS13}]\label{D.Multilinear} 
An $\ell$-group system in the multilinear setting consists of $\ell$ cyclic groups $\mathbb{G}_1,\cdots ,\mathbb{G}_{\ell}$ of prime order $p$, along with bilinear maps $\hat{e}_{i,j} : \mathbb{G}_i \times \mathbb{G}_j \rightarrow \mathbb{G}_{i+j}$ for all $i,j \geq 1$ with $i+j \leq \ell$. Let $g_i$ be a generator of $\mathbb{G}_i$. The map $\hat{e}_{i,j}$ satisfies $\hat{e}_{i,j}(g^a_i,g^b_j) = g^{ab}_{i+j}$ (for all $a,b \in \mathbb{Z}_p$). When $i,j$ are clear, we will simply write $\hat{e}$ instead of $\hat{e}_{i,j}$. It will also be convenient to abbreviate $\hat{e}(h_1, \cdots, h_j) =\hat{e}(h_1,\hat{e}(h_2, \cdots, \hat{e}(h_{j-1},h_j)\cdots))$ for $h_j \in \mathbb{G}_{i_j}$ and $i = (i_1 + i_2 + \cdots + i_j) \leq \ell$. By induction, it is easy to see that this map is $j$-linear. Additionally, We define $\hat{e}(g)= g$. Finally, it can also be useful to define the group $\mathbb{G}_0 = \mathbb{Z}^+_{\left| \mathbb{G}_1 \right|}$ of exponents to which this pairing family naturally extends. In the following, we will assume an $\ell$-group system  $\mathbf{MPG}_{\ell} = \{\{\mathbb{G}_i\}_{i \in [1,\ell]},p, \{\hat{e}_{i,j}\}_{i,j\geq1,i+j\leq\ell}\}$ generated by a \emph{multilinear maps parameter generator} $\mathbf{MG}_{\ell}$ on input a security parameter $1^k$.
\end{definition}

\begin{definition}[The $\ell$-MDDH Assumption \cite{FHPS13}]\label{D.MDDH} 
Given $(g,g^{x_1},\cdots,$ $g^{x_{\ell+1}})$ (for $g\overset{\$}{\leftarrow} \mathbb{G}_1$ and uniform exponents $x_i$), the $\ell$-MDDH assumption is that the element $\hat{e}(g^{x_1},\cdots ,g^{x_\ell})^{x_{\ell+1}} \in \mathbb{G}_\ell$ is computationally indistinguishable from a uniform $\mathbb{G}_\ell$-element.
\end{definition}

\begin{definition}[Group hash function \cite{FHPS13}]\label{D.GroupHash}
A group hash function $\mathbf{H}$ into $\mathbb{G}$ consists of two polynomial-time algorithms: the probabilistic algorithm $\mathbf{HGen}(1^k)$ outputs a key $hk$, and $\mathbf{HEval}($ $hk,X)$ (for a key $hk$ and $X \in {\{0,1\}}^k$) deterministically outputs an image $\mathbf{H}_{hk}(X)\in\mathbb{G}$.
\end{definition}

\begin{definition}[MPHF \cite{FHPS13}]\label{D.MPHF}
Assume an $\ell^\prime$-group system $\mathbf{MPG}_{\ell^\prime}$ as generated by $\mathbf{MG}_{\ell^\prime}(1^k)$. Let $\mathbf{H}$ be a group hash function into $\mathbb{G}_{\ell}(\ell \leq \ell^\prime)$, and let $m,n \in \mathbb{N}$. We say that $\mathbf{H}$ is an (m,n)-programmable hash function in the multilinear setting ((m,n)-MPHF) if there are PPT algorithms $\mathbf{TGen}$ and $\mathbf{TEval}$ as follows.

\begin{itemize}
\item $\mathbf{TGen}(1^k,c_1,\cdots,c_l,h)$ (for $c_i,h \in \mathbb{G}_1$ and $h\neq 1$) outputs a key $hk$ and a trapdoor $td$. We require that for all $c_i,h$, that distribution of $hk$ is statistically close to the output of $\mathbf{HGen}$.

\item $\mathbf{TEval}(td,X)$ (for a trapdoor $td$ and $X \in {\{0,1\}}^k$) deterministically outputs $a_X \in \mathbb{Z}_p^*$ and $B_X \in \mathbb{G}_{\ell-1}$ with $\mathbf{H}_{hk}(X)= \hat{e}(c_1,\cdots,c_\ell)^{a_X} \cdot \hat{e}(B_X,h)$. We require that there is a polynomial $p(k)$ such that for all $hk$ and $X_1, \cdots, X_m$, $Z_1, \cdots, Z_n \in {\{0,1\}}^k$ with ${\{X_i\}}_i \bigcap {\{Z_j\}}_j = \emptyset$,
$P_{hk,\{X_i\},\{Z_j\}} =Pr[(a_{X_1} = \cdots = a_{X_m} = 0) \land (a_{Z_1},\cdots,a_{X_n}\neq0)]\geq 1/p(k)$,
where the probability is over possible trapdoors $td$ output by $\mathbf{TGen}$ along with the given $hk$. Furthermore, we require that 
    $P_{hk,\{X_i\},\{Z_j\}}$ is close to statistically independent of $hk$. (Formally, $|P_{hk,\{X_i\},\{Z_j\}}-P_{hk^\prime,\{X_i\},\{Z_j\}}|\leq v(k)$ for all $hk$ and $hk^\prime$ in the range of $\mathbf{TGen}$, 
    all $\{X_i\},\{Z_j\}$, and negligible $v(k)$.)
\end{itemize}
We say that $\mathbf{H}$ is a $(\mathsf{poly},n)$-MPHF if it is a $(q(k),n)$-MPHF for every polynomial $q(k)$. Note that $\mathbf{TEval}$ algorithm of an MPHF into $\mathbb{G}_1$ yields $B_X \in \mathbb{G}_0$, i.e., exponents $B_X$.
\end{definition}

Let identity space $\mathcal{ID}_{\mbox{\tiny IBKEM}}=\{0,1\}^k$. The IBKEM instance in the standard model is as follows.

\begin{itemize} 
\item $\mathbf{Setup}_{\mbox{\tiny IBKEM}}(1^k,\mathcal{ID}_{\mbox{\tiny IBKEM}})$: Take as input a security parameter $1^k$ and the identity space $\mathcal{ID}_{\mbox{\tiny IBKEM}}$, generate an $(\ell+1)$-group system $\mathbf{MPG}_{\ell+1}= \{\{\mathbb{G}_i\}_{i \in [1,\ell+1]}, p, \{\hat{e}_{i,j}\}_{i,j\geq1,i+j\leq\ell+1}\}\leftarrow\mathbf{MG}_{\ell+1}(1^k)$, generate a $(\mathsf{poly},2)$-MPHF $\mathbf{H}$ into $\mathbb{G}_\ell$ and $hk\leftarrow \mathbf{HGen}(1^k)$, choose $h\overset{\$}{\leftarrow} \mathbb{G}_1$ and  $x\overset{\$}{\leftarrow}\mathbb{Z}_p$, set the encapsulated key space $\mathcal{K}_{\mbox{\tiny IBKEM}}=\mathbb{G}_{\ell+1}$, set the encapsulation space $\mathcal{C}_{\mbox{\tiny IBKEM}}=\mathbb{G}_1$, and output the master public key $\mathbf{PK}_{\mbox{\tiny IBKEM}}=(\mathbf{MPG}_{\ell+1},hk,\mathbf{H},h,h^x,\mathcal{ID}_{\mbox{\tiny IBKEM}},\mathcal{K}_{\mbox{\tiny IBKEM}},$ $\mathcal{C}_{\mbox{\tiny IBKEM}})$ and the master secret key $\mathbf{SK}_{\mbox{\tiny IBKEM}}=(hk,x)$.

\item $\mathbf{Extract}_{\mbox{\tiny IBKEM}}(\mathbf{SK}_{\mbox{\tiny IBKEM}},ID)$: Take as inputs $\mathbf{SK}_{\mbox{\tiny IBKEM}}$ and an identity $ID\in\mathcal{ID}_{\mbox{\tiny IBKEM}}$, and output a decryption key $\hat{S}_{ID}=\mathbf{H}_{hk}(ID)^x$ of $ID$.

\item $\mathbf{Encaps}_{\mbox{\tiny IBKEM}}(\mathbf{PK}_{\mbox{\tiny IBKEM}},ID,r)$: Take as inputs $\mathbf{PK}_{\mbox{\tiny IBKEM}}$, an identity $ID\in\mathcal{ID}_{\mbox{\tiny IBKEM}}$ and a random value $r\in\mathbb{Z}_p^*$, and output a key-and-encapsulation pair $(\hat{K},\hat{C})$, where $\hat{K}=\hat{e}(\mathbf{H}_{hk}(ID),h^x)^r\in\mathbb{G}_{\ell+1}$ and $\hat{C}=h^r$.

\item $\mathbf{Decaps}_{\mbox{\tiny IBKEM}}(\hat{S}_{ID^\prime},\hat{C})$: Take as inputs the decryption key $\hat{S}_{ID^\prime}$ of identity $ID^\prime$ and an encapsulation $\hat{C}$, and output the encapsulated key $\hat{K}=\hat{e}(\hat{C},\hat{S}_{ID^\prime})\in\mathbb{G}_{\ell+1}$ if $\hat{C}\in\mathbb{G}_1$ or output $\bot$ otherwise.
\end{itemize}

\textbf{Consistency.} According to Definitions \ref{D.Multilinear} and \ref{D.GroupHash}, it is very easy to verify the consistency of the above IBKEM scheme. 

\textbf{Collision-Free Full-Identity Malleability.} Let the function $\mathbf{FIM}(ID,r)=\hat{e}(h^x,\mathbf{H}_{hk}(ID))^r\in\mathbb{G}_{\ell+1}$ for any identity $ID\in\mathcal{ID}_{\mbox{\tiny IBKEM}}$ and any random value $r\in\mathbb{Z}_p^*$. Given any $(\hat{K},\hat{C})\leftarrow\mathbf{Encaps}_{\mbox{\tiny IBKEM}}(\mathbf{PK}_{\mbox{\tiny IBKEM}},ID,r)$, we clearly have that: (1) for any identity $ID^\prime$, equation $\mathbf{FIM}(ID^\prime,$ $r)=\mathbf{Decaps}_{\mbox{\tiny IBKEM}}(\hat{S}_{ID^\prime},\hat{C})$ holds; (2) for any identity $ID^\prime$ and any random value $r^\prime$, if $ID^\prime\neq ID\bigvee r^\prime\neq r$ holds, equation $\mathbf{FIM}(ID,r)\neq \mathbf{FIM}(ID^\prime,r^\prime)$ holds except with a negligible probability. So the above IBKEM scheme is collision-free full-identity malleable.

\textbf{Anon-SS-ID-CPA Security.} In \cite{FHPS13}, Freire \emph{et al.} utilized a $(\mathsf{poly},1)$-MPHF to construct a standard-model version of the BF IBE scheme with the SS-ID-CPA security. On the contrary, we use a $(\mathsf{poly},2)$-MPHF in constructing the above IBKEM scheme, since this kind of MPHF is more useful in proving the Anon-SS-ID-CPA security. Theorem \ref{T.IBKEM.Instance.Security} formally states the Anon-SS-ID-CPA security of the above IBKEM scheme. The proof can be found in Supplemental Materials  G.

\begin{theorem}\label{T.IBKEM.Instance.Security}
Assume the above IBKEM scheme is implemented in an $(\ell+1)$-group system, and with a $(\mathsf{poly},2)$-MPHF $\mathbf{H}$ into $\mathbb{G}_\ell$. Then, under the $(\ell+1)$-MDDH assumption, this IBKEM scheme is Anon-SS-ID-CPA secure.
\end{theorem}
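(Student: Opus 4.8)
The plan is to reduce the Anon-SS-ID-CPA security of the IBKEM scheme to the $(\ell+1)$-MDDH assumption via a simulation argument that crucially exploits the $(\mathsf{poly},2)$-programmability of $\mathbf{H}$. Given an adversary $\mathcal{B}$ against the IBKEM, I would build a distinguisher $\mathcal{D}$ that receives an $(\ell+1)$-MDDH challenge $(g,g^{x_1},\dots,g^{x_{\ell+2}},T)$ — where $T$ is either $\hat{e}(g^{x_1},\dots,g^{x_{\ell+1}})^{x_{\ell+2}}$ or uniform in $\mathbb{G}_{\ell+1}$ — and must decide which. The key insight to match parameters: the scheme lives in an $(\ell+1)$-group system, $\mathbf{H}$ maps into $\mathbb{G}_\ell$, and the encapsulated key space is $\mathbb{G}_{\ell+1}=\mathbb{G}_\ell\times\mathbb{G}_1$ (under the pairing); so an MDDH instance over $\mathbb{G}_\ell$ together with one extra exponent for the $h^x$ and $r$ factors is exactly what embeds. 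Concretely I would set $h=g^{x_{\ell+1}}$ and $h^x = g^{x_{\ell+1}x_{\ell+2}}$ implicitly (so $x = x_{\ell+2}$ is unknown), use the first $\ell$ MDDH components $g^{x_1},\dots,g^{x_\ell}$ as the $c_i$ inputs to $\mathbf{TGen}(1^k,g^{x_1},\dots,g^{x_\ell},h)$ to obtain a hash key $hk$ (statistically close to a real $hk$) and trapdoor $td$, and publish $\mathbf{PK}_{\mbox{\tiny IBKEM}}=(\mathbf{MPG}_{\ell+1},hk,\mathbf{H},h,h^x,\dots)$.

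Next I would handle queries. When $\mathcal{B}$ submits its two challenge identities $ID_0^*,ID_1^*$ (in the selective-identity game these come first, which simplifies matters) and makes decryption-key queries $\mathcal{Q}_{DK}^{IBKEM}(ID)$, $\mathcal{D}$ runs $\mathbf{TEval}(td,ID)$ to get $(a_{ID},B_{ID})$ with $\mathbf{H}_{hk}(ID)=\hat{e}(c_1,\dots,c_\ell)^{a_{ID}}\cdot\hat{e}(B_{ID},h)$. The $(\mathsf{poly},2)$-MPHF property guarantees that with non-negligible probability $1/p(k)$ the trapdoor simultaneously gives $a_{ID_0^*}=a_{ID_1^*}=0$ while $a_{ID}\neq 0$ for all queried $ID$ (here I use that two "program-to-zero" points and the non-zero points needed are all distinct — this is precisely why a $(\mathsf{poly},1)$-MPHF as in \cite{FHPS13} is insufficient and a $(\mathsf{poly},2)$-MPHF is required, since the anonymity game has \emph{two} challenge identities). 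Conditioned on this good event: for any queried $ID$ we have $a_{ID}\neq 0$, so $\hat{e}(c_1,\dots,c_\ell)^{a_{ID}}=\hat{e}(g^{x_1},\dots,g^{x_\ell})^{a_{ID}}$ is a computable $\mathbb{G}_\ell$-element from the MDDH components, and the decryption key $\hat{S}_{ID}=\mathbf{H}_{hk}(ID)^x$ can be formed using $h^x$ on the $\hat{e}(B_{ID},h)$ part and the relation $\hat{e}(\cdot,h^x)$-type manipulation on the first part — i.e. $\hat{S}_{ID}=\hat{e}(g^{x_1},\dots,g^{x_\ell},h^x)^{a_{ID}}\cdot\hat{e}(B_{ID},h^x)$ recomputed in $\mathbb{G}_\ell$ using the known $h^x$; the point is none of these require knowing $x$ in the exponent of a $\mathbb{G}_\ell$ generator beyond what $h^x$ already provides. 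For the two challenge identities, $a_{ID_b^*}=0$ forces $\mathbf{H}_{hk}(ID_b^*)=\hat{e}(B_{ID_b^*},h)$, which is the structural feature that lets the MDDH challenge term $T$ be planted in the encapsulated key.

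For the challenge phase, $\mathcal{D}$ picks $\hat{d}\overset{\$}\leftarrow\{0,1\}$ and must produce $(\hat{K}^*_{\hat{d}},\hat{C}^*_0)$ where $\hat{C}^*_0$ is an encapsulation for $ID_0^*$ with some randomness $r_0$; I would implicitly set $r_0 = 1$ (or fold it into the MDDH exponents) so that $\hat{C}^*_0 = h = g^{x_{\ell+1}}$, and set the challenge key $\hat{K}^*_{\hat{d}} = \hat{e}(B_{ID_{\hat{d}}^*}, h^x) \cdot (\text{the planted term built from } T)$; concretely, since the honest key is $\hat{e}(\mathbf{H}_{hk}(ID_{\hat{d}}^*),h^x)^{r_0} = \hat{e}(\hat{e}(B_{ID_{\hat{d}}^*},h), h^x)$, and the genuine randomness should make this look like $\hat{e}(g^{x_1},\dots,g^{x_\ell})^{?}\cdot(\dots)$, I arrange the embedding so that when $T$ is the real MDDH value the pair $(\hat{K}^*_{\hat{d}},\hat{C}^*_0)$ is a correctly distributed key-encapsulation pair for $ID_{\hat{d}}^*$, whereas when $T$ is uniform the key $\hat{K}^*_{\hat{d}}$ is uniform in $\mathbb{G}_{\ell+1}$ and hence statistically independent of $\hat{d}$. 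Then $\mathcal{D}$ answers Query Phase 2 as before, receives $\mathcal{B}$'s guess $\hat{d}'$, and outputs $1$ iff $\hat{d}'=\hat{d}$. In the real-$T$ case $\Pr[\hat{d}'=\hat{d}]=\tfrac12+Adv^{\text{Anon-SS-ID-CPA}}_{IBKEM,\mathcal{B}}$ conditioned on the good programming event; in the random-$T$ case it is exactly $\tfrac12$; and the good event is independent enough of $hk$ (by the near-independence clause of the MPHF definition) that it does not help $\mathcal{B}$, so $Adv^{(\ell+1)\text{-MDDH}}_{\mathcal{D}}\geq \tfrac{1}{p(k)}\cdot Adv^{\text{Anon-SS-ID-CPA}}_{IBKEM,\mathcal{B}} - \mathrm{negl}(k)$.

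The main obstacle I anticipate is getting the algebra of the embedding exactly right so that all three goals hold simultaneously: (i) decryption keys for non-challenge identities are perfectly (or statistically) simulatable without knowing $x$; (ii) the challenge pair is correctly distributed when $T$ is real; and (iii) the challenge key is genuinely uniform and $\hat{d}$-independent when $T$ is uniform. Balancing these forces a careful choice of which MDDH components play the roles of $h$, $h^x$, the $c_i$'s, and the challenge randomness — in particular one must verify that the multilinear pairing "degrees" add up to exactly $\ell+1$ at every step and that re-randomization over $r_0,r_1$ does not leak $x$. The abort probability bookkeeping (the $1/p(k)$ factor and the statistical-independence-of-$hk$ condition) is the other delicate point, but it is routine once the MPHF interface is invoked as a black box exactly as in \cite{FHPS13}, extended from one to two program-to-zero points.
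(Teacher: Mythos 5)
Your high-level plan (embed the $(\ell+1)$-MDDH challenge via the MPHF trapdoor, answer key queries using $\mathbf{TEval}$, plant the challenge element in the encapsulated key, pay a $1/p(k)$ programming loss) is the right family of argument and is what the paper does, but your partition and your embedding are inverted, and as set up they do not work. In the MPHF interface, the identities programmed to $a_{ID}=0$ are exactly the ones whose decryption keys the simulator \emph{can} produce (since then $\mathbf{H}_{hk}(ID)^x=\hat{e}(B_{ID},h)^x=\hat{e}(B_{ID},h^x)$ needs only the public $h^x$), while the identities with $a_{ID}\neq 0$ are where the hard element is planted. That is why a $(\mathsf{poly},2)$-MPHF is used: polynomially many zeros for the adaptively queried identities, and \emph{two} non-zeros for the two challenge identities of the anonymity game. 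You require the opposite ($a_{ID_0^*}=a_{ID_1^*}=0$ and $a_{ID}\neq 0$ for all queried $ID$), which is the profile of a $(2,\mathsf{poly})$-MPHF, not the assumed primitive; and with $a_{ID^*_{\hat d}}=0$ the coefficient that would multiply the MDDH element in the challenge key is zero, so there is literally no place to plant $T$ in such a way that real $T$ gives a correctly distributed key and random $T$ gives a $\hat d$-independent one.

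The concrete algebra also breaks in your parameterization. Setting $h=g^{x_{\ell+1}}$ and ``implicitly'' $h^x=g^{x_{\ell+1}x_{\ell+2}}$ is not simulatable: $h^x$ is part of the published master public key, and $g^{x_{\ell+1}x_{\ell+2}}\in\mathbb{G}_1$ cannot be computed from the MDDH instance (pairing $g^{x_{\ell+1}}$ with $g^{x_{\ell+2}}$ lands in $\mathbb{G}_2$). Your proposed key $\hat{e}(g^{x_1},\dots,g^{x_\ell},h^x)^{a_{ID}}\cdot\hat{e}(B_{ID},h^x)$ lies in $\mathbb{G}_{\ell+1}$, whereas a decryption key $\mathbf{H}_{hk}(ID)^x$ must lie in $\mathbb{G}_\ell$; and computing $\hat{e}(c_1,\dots,c_\ell)^{x a_{ID}}$ for $a_{ID}\neq 0$ is itself an MDDH-hard task, so those queries cannot be answered at all. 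The working reduction (as in the paper) takes $h=g$, $h^x=g^{x_{\ell+1}}$ (so the implicit master secret is $x=x_{\ell+1}$), runs $\mathbf{TGen}(1^k,g^{x_1},\dots,g^{x_\ell},g)$, answers $\mathcal{Q}_{DK}^{IBKEM}(ID)$ exactly when $a_{ID}=0$ by $\hat{S}_{ID}=\hat{e}(B_{ID},h^x)$, aborts unless $a_{ID_0^*}\neq 0$ and $a_{ID_1^*}\neq 0$, and sets $\hat{C}_0^*=g^{x_{\ell+2}}$, $\hat{K}^*_{\hat d}=S^{a_{ID^*_{\hat d}}}\cdot\hat{e}(B_{ID^*_{\hat d}},g^{x_{\ell+1}},g^{x_{\ell+2}})$, which is the genuine key when $S$ is real and information-theoretically hides $\hat d$ when $S$ is random precisely because $a_{ID^*_{\hat d}}\neq 0$. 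Finally, since the theorem is adaptive-identity (not selective, as your parenthetical suggests), you also need the Waters-style artificial abort to make the non-abort probability (approximately) independent of the adversary's identity choices; waving at ``independence from $hk$'' is not sufficient to finish the advantage calculation.
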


According to Theorem \ref{T.SPCHS.Generic.Security} and \ref{T.IBKEM.Instance.Security}, the generic SPCHS construction implies a SPCHS instance with SS-sK-CKSA security in the standard model. Indeed, this SPCHS instance can be provably SS-CKSA secure. 

\section{Conclusion and Future Work}\label{S.Conclusion}

This paper investigated as-fast-as-possible search in PEKS with semantic security. We proposed the concept of SPCHS as a variant of PEKS. The new concept allows keyword-searchable ciphertexts to be generated with a hidden structure. Given a keyword search trapdoor, the search algorithm of SPCHS can disclose part of this hidden structure for guidance on finding out the ciphertexts of the queried keyword. Semantic security of SPCHS captures the privacy of the keywords and the invisibility of the hidden structures. We proposed an SPCHS scheme from scratch with semantic security in the RO model. The scheme generates keyword-searchable ciphertexts with a hidden star-like structure. It has search complexity mainly linear with the exact number of the ciphertexts containing the queried keyword. It outperforms existing PEKS schemes with semantic security, whose search complexity is linear with the number of all ciphertexts. We identified several interesting properties, i.e., collision-freeness and full-identity malleability in some IBKEM instances, and formalized these properties to build a generic SPCHS construction. We illustrated two collision-free full-identity malleable IBKEM instances, which are respectively secure in the RO and standard models. 

SPCHS seems a promising tool to solve some challenging problems in public-key searchable encryption. One application may be to achieve retrieval completeness verification which, to the best of our knowledge, has not been achieved in existing PEKS schemes. Specifically, by forming a hidden ring-like structure, i.e., letting the last hidden pointer always point to the head, one can obtain PEKS allowing to check the completeness of the retrieved ciphertexts by checking whether the pointers of the returned ciphertexts form a ring. 

Another application may be to realize public key encryption with content search, a similar functionality realized by symmetric searchable encryption. Such kind  of content-searchable encryption is useful in practice, e.g., to filter the encrypted spams. Specially, by forming a hidden tree-like structure between the sequentially encrypted words in one file, one can obtain public-key searchable encryption allowing content search (e.g., to find whether there are specific contents in an encrypted file). The search complexity is linear with the size of the queried content.

\section*{Acknowledgments}
The authors would like to thank the reviewers for their valuable suggestions that helped to improve the paper greatly. The first author is partly supported by the National Natural Science Foundation of China under grant no. 61472156 and the National Program on Key Basic Research Project (973 Program) under grant no. 2014CB340600. The second author is supported by by the Chinese National Key Basic Research Program (973 program) through project 2012CB315905, the Natural Science Foundation of China through projects 61370190, 61173154, 61472429, 61402029, 61272501, 61202465, 61321064 and 61003214, the Beijing Natural Science Foundation through project 4132056, the Fundamental Research Funds for the Central Universities, and the Research Funds (No. 14XNLF02) of Renmin University of China and the Open Research Fund of Beijing Key Laboratory of Trusted Computing. The fifth author is partly support by the European Commission (H2020 project ``CLARUS''), the Government of Catalonia (grant 2014 SGR 537 and ICREA Acad\`emia Award 2013) and the Spanish Government (TIN2011-27076-C03-01 ``CO-PRIVACY''). The views in this paper do not necessarily reflect the views
of UNESCO.

\begin{biography}[{\includegraphics[width=1in,height=1.25in,clip,keepaspectratio]{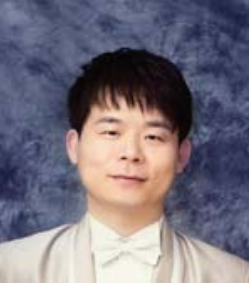}}]{Peng Xu}
received the B.A. degree in computer science from Wuhan university
of science and technique, Wuhan, China, in 2003, the Master and
Ph.D. degree in computer science from Huazhong university of science
and technology, Wuhan, China, respectively in 2006 and 2010. Since
2010, he works as a post-doctor at Huazhong university of science
and technology, Wuhan, China. He was PI in three grants respectively from National Natural Science Foundation of China (No. 61472156 and No. 61100222) and China Postdoctoral Science Foundation (No. 20100480900), and a key member in several projects supported by 973 (No. 2014CB340600). He has authored over 20 research papers. He is a member of ACM and IEEE.
\end{biography}

\begin{biography}[{\includegraphics[width=1in,height=1.25in,clip,keepaspectratio]{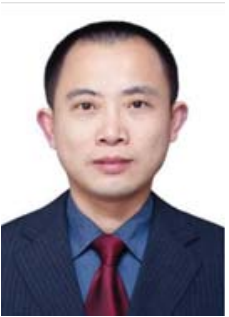}}]{Qianhong Wu}
received his Ph.D. in Cryptography from Xidian University in 2004. Since then, he has been with Wollongong University (Australia) as an associate research fellow, with Wuhan University (China) as an associate professor, with Universitat Rovira i Virgili (Catalonia) as a research director and now with Beihang University (China) as a full professor. His research interests include cryptography, information security and privacy, and \emph{ad hoc} network security. He has been a holder/co-holder of 7 China/Australia/Spain funded projects. He has authored 7 patents and over 100 publications. He has served in the program committee of several international conferences in information security and privacy. He is a member of IACR, ACM and IEEE.
\end{biography}

\begin{biography}[{\includegraphics[width=1in,height=1.25in,clip,keepaspectratio]{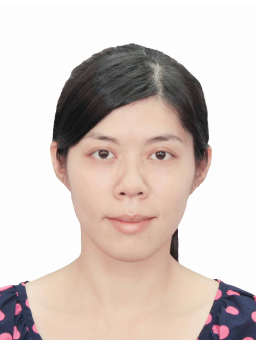}}]{Wei Wang}
received the B.S. and Ph.D. degrees in Electronic and Communication Engineering from Huazhong University of Science and Technology, Wuhan, China, in 2006 and 2011, respectively.
Currently she is a researcher with Cyber-Physical-Social Systems Lab, Huazhong University of Science and Technology, Wuhan, China. She was a Post-doctoral researcher with Peking University, Beijing, China from April 2012 to July 2014. Her research interests include cloud security, network coding and multimedia transmission. She has published over 10 papers in international journals and conferences.
\end{biography}

\begin{biography}[{\includegraphics[width=1in,height=1.25in,clip,keepaspectratio]{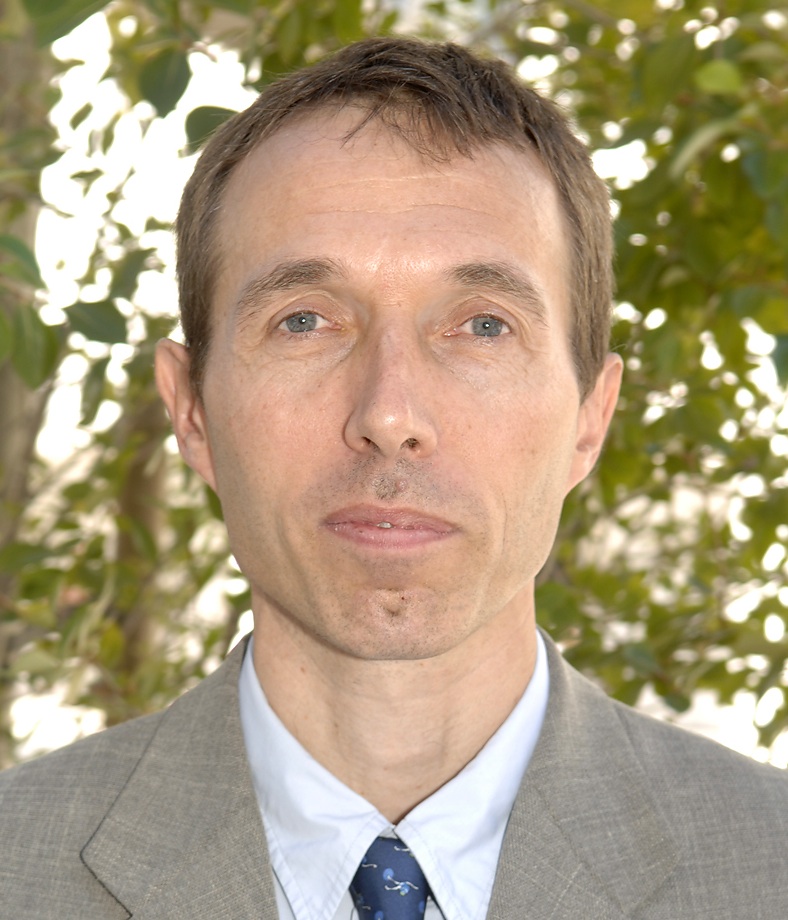}}]{Josep Domingo-Ferrer} is a Distinguished Professor of Computer Science and an ICREA-Acad\`emia Researcher at Universitat Rovira i Virgili, Tarragona, Catalonia, where he holds the UNESCO Chair in Data Privacy. His research interests are in data privacy and data security. He received his M. Sc. and Ph. D. degrees in Computer Science from the Autonomous University of Barcelona in 1988 and 1991, respectively. He also holds an M. Sc. in Mathematics. He has won several research and technology transfer awards, including twice the ICREA Academia Prize (2008 and 2013) and the ``Narc\'{\i}s Monturiol'' Medal to the Scientific Merit, both awarded by the Government of Catalonia, and a Google Faculty Research Award (2014). He has authored 5 patents and over 350 publications. He has been the co-ordinator of projects funded by the European Union and the Spanish government.
He has been the PI of US-funded research contracts and currently of a Templeton World Charity Foundation grant. He has held visiting appointments at Princeton, Leuven and Rome. He is a co-Editor-in-Chief of {\em Transactions on Data Privacy}. He is an IEEE Fellow and an Elected Member of Academia Europaea.
\end{biography}

\begin{biography}[{\includegraphics[width=1in,height=1.25in,clip,keepaspectratio]{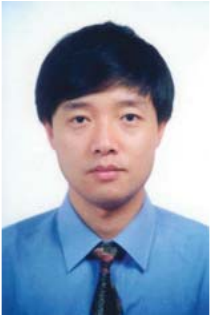}}]{Hai Jin}
received his PhD in computer engineering from HUST in 1994. In 1996,
he was awarded a German Academic Exchange Service fellowship to
visit the Technical University of Chemnitz in Germany. He worked at
The University of Hong Kong between 1998 and 2000, and as a visiting
scholar at the University of Southern California between 1999 and
2000. He was awarded Excellent Youth Award from the National Science
Foundation of China in 2001. He is the chief scientist
of National 973 Basic Research Program Project of Virtualization
Technology of Computing System. He has co-authored 15 books and
published over 400 research papers. He is a senior member of the IEEE and
a member of the ACM.
\end{biography}

\newpage

\begin{onecolumn}

\pagestyle{empty}

\section*{Supplemental Materials}

\subsection{Analysis on The Work \cite{CKRS09}}\label{Analysis on Insecurity}

\begin{figure*}[!htp]
\centering
\begin{boxedminipage}{0.9\textwidth}
A sender generates the searchable ciphertexts of any keyword $W_i\in\mathcal{W}$ by the following steps:
\begin{enumerate}
\item The first time to encrypt keyword $W_i$, he uploads $$PEKS(Pub,W_i,K_i^1||P_i^1),P_i^1||E(K_i^1,P_i^2||K_i^2||\mathcal{P}_i^1),P_i^2$$ to the server, and asks the server to store $E(K_i^1,P_i^2||K_i^2||\mathcal{P}_i^1)$ in position $P_i^1$ and store a flag in position $P_i^2$.

Note: algorithm $PEKS(Pub,W_i^1,K_i^1||P_i^1)=IBE(Pub,W_i^1,K_i^1||P_i^1||C_2)||C_2$ takes public parameter $Pub$, identity $W_i^1$ and plaintext $K_i^1||P_i^1||C_2$ as inputs and generates an IBE ciphertext, and finally outputs the IBE ciphertext and $C_2$, where the symmetric key $K_i^1$ and $C_2$ are randomly chosen. Algorithm $E(K_i^1,P_i^2||K_i^2||\mathcal{P}_i^1)$ denotes using the symmetric key $K_i^1$ to encrypt $P_i^2||K_i^2||\mathcal{P}_i^1$, where the symmetric key $K_i^2$ is randomly chosen, and  $\mathcal{P}_i^1$ denotes the parameters for private information retrieval (they will be used to retrieve the corresponding data when the keyword $W_i$ is queried).

\item The second time to encrypt keyword  $W_i$, he uploads $P_i^2||E(K_i^2,P_i^3||K_i^3||\mathcal{P}_i^2),P_i^3$ to the server, and asks the server to store $E(K_i^2,P_i^3||K_i^3||\mathcal{P}_i^2)$ in position $P_i^2$ and store the flag in position $P_i^3$ .

\item The subsequent encryptions of keyword $W_i$ are similar to Step 2.
\end{enumerate}
\end{boxedminipage}
\caption{Procedure to generate keyword searchable ciphertexts in \cite{CKRS09}.}\label{Insecure Chain}
\end{figure*}

In Fig. \ref{Insecure Chain}, we first review how to generate keyword-searchable ciphertexts according to \cite{CKRS09} such that the ciphertexts of the same keyword form a chain. Then we analyze why the chain of any keyword is visible in the view of the server, and give a straightforward method to make the chain invisible. But this method seems to be impractical.

According to the first step in Fig. \ref{Insecure Chain}, the server trivially knows the relation between ciphertexts $PEKS(Pub,W_i,K_i^1||P_i^1)$ and $E(K_i^1,P_i^2||K_i^2||\mathcal{P}_i^1)$, and knows that if a subsequent ciphertext is stored in the position $P_2$, this subsequent ciphertext is related to $E(K_i^1,P_i^2||K_i^2||\mathcal{P}_i^1)$. So in the second step, the server knows the relation between ciphertexts $E(K_i^1,P_i^2||K_i^2||\mathcal{P}_i^1)$ and $E(K_i^2,P_i^3||K_i^3||\mathcal{P}_i^2)$, and knows that if another subsequent ciphertext is stored in the position $P_3$, this subsequent ciphertext is related to $E(K_i^2,P_i^3||K_i^3||\mathcal{P}_i^2)$. By the same method, the server will know the chain of keyword $W_i$ even without the keyword search trapdoor of keyword $W_i$. Furthermore, the length of the chain leaks the frequency of keyword $W_i$.

\begin{figure*}[!htp]
\centering
\begin{boxedminipage}{0.9\textwidth}
A sender generates the searchable ciphertexts of any keyword $W_i\in\mathcal{W}$ by the following steps:
\begin{enumerate}
\item At the setup phase, he uploads $\{PEKS(Pub,W_i,K_i^1||P_i^1)|i\in[1,|\mathcal{W}|]\}$ to the server, where $|\mathcal{W}|$ denotes the size of keyword space $\mathcal{W}$.

\item The first time to encrypt keyword $W_i$, he uploads $P_i^1||E(K_i^1,P_i^2||K_i^2||\mathcal{P}_i^1)$ to the server, and asks the server to store $E(K_i^1,P_i^2||K_i^2||\mathcal{P}_i^1)$ in position $P_i^1$.

\item The second time to encrypt keyword  $W_i$, he uploads $P_i^2||E(K_i^2,P_i^3||K_i^3||\mathcal{P}_i^2)$ to the server, and asks the server to store $E(K_i^2,P_i^3||K_i^3||\mathcal{P}_i^2)$ in position $P_i^2$.

\item The subsequent encryptions of keyword $W_i$ are similar to Step 3.
\end{enumerate}
\end{boxedminipage}
\caption{New procedure to generate keyword-searchable ciphertexts for \cite{CKRS09}.}\label{Secure Chain}
\end{figure*}

In order to keep the privacy of the chain, a straightforward method is to generate the PEKS ciphertexts for all keywords at the setup phase and delete the flag. The specific procedure is given in Fig. \ref{Secure Chain}. This method hides the relation between the PEKS ciphertext and the symmetric-key ciphertext of any keyword, and the relation between two symmetric-key ciphertexts of any keyword also is hidden. But it seems that this method is impractical from 
a performance viewpoint, since each sender must generate the PEKS ciphertexts for all keywords at the setup phase and remember lots of private information which are encrypted by these PEKS ciphertexts. 

\subsection{Proof of Theorem \ref{T.SPCHS.Instance.Consistency}}\label{P.SPCHS.Instance.Consistency}

\begin{proof}
Without loss of generality, it is sufficient to prove that given the keyword-searchable trapdoor $T_{W_i}=H(W_i)^s$ of keyword $W_i$ and the hidden structure's public part $\mathbf{Pub}=g^u$, algorithm $\mathbf{StructuredSearch}(\mathbf{PK},\mathbf{Pub},\mathbb{C},T_{W_i})$ only finds out all ciphertexts of keyword $W_i$ with the hidden structure $\mathbf{Pub}$. Note that $P=g^s$. 

Algorithm $\mathbf{StructuredSearch}(\mathbf{PK},\mathbf{Pub},\mathbb{C},T_{W_i})$  computes $Pt^\prime=\hat{e}(\mathbf{Pub},T_{W_i})$ in its first step. Since $\hat{e}(\mathbf{Pub},T_{W_i})=\hat{e}(P,H(W_i))^u$, algorithm $\mathbf{StructuredSearch}(\mathbf{PK},\mathbf{Pub},\mathbb{C},T_{W_i})$ finds out the ciphertext $(\hat{e}(P,H(W_i))^u,g^r,\hat{e}(P,H(W_i))^r\cdot Pt[u,W_i])$ by matching $Pt^\prime$ with all ciphertexts' first part in its second step. Moreover, due to the collision-freeness of hash function $H$, only keyword $W_i$ has $Pt^\prime=\hat{e}(P,H(W_i))^u$, except with a negligible probability in the security parameter $k$. So only the ciphertext $(\hat{e}(P,H(W_i))^u,g^r,\hat{e}(P,H(W_i))^r\cdot Pt[u,W_i])$ is found with overwhelming probability in this step.

Then algorithm $\mathbf{StructuredSearch}(\mathbf{PK},\mathbf{Pub},\mathbb{C},T_{W_i})$ discloses $Pt[u,W_i]$ from the ciphertext $(\hat{e}(P,H(W_i))^u,g^r,\hat{e}(P,H(W_i))^r\cdot Pt[u,W_i])$ by computing $Pt^\prime=Pt[u,W_i]=\hat{e}(g^r,T_{W_i})^{-1}\cdot\hat{e}(P,H(W_i))^r\cdot Pt[u,W_i]$.

Recall that in algorithm $\mathbf{StructuredEncryption}$, $Pt[u,W_i]$ was randomly chosen in $\mathbb{G}_1$ and taken as the first part of only one ciphertext of keyword $W_i$ with the hidden structure $\mathbf{Pub}$. So when algorithm $\mathbf{StructuredSearch}(\mathbf{PK},\mathbf{Pub},\mathbb{C},T_{W_i})$ goes back to its second step, only the ciphertext $(Pt[u,W_i],g^r,\hat{e}(P,H(W_i))^r\cdot R)$ is found with overwhelming probability. 

By carrying on in this way, algorithm $\mathbf{StructuredSearch}(\mathbf{PK},$ $\mathbf{Pub},\mathbb{C},T_{W_i})$ only finds out all ciphertexts of keyword $W_i$ with the hidden structure $\mathbf{Pub}$, except with a negligible probability in the security parameter $k$ . And the algorithm will stop, since the random value $R$ contained in the last found ciphertext does not match any other ciphertext's first part.
\end{proof}

\subsection{Proof of Theorem \ref{T.SPCHS.Instance.Security}}\label{P.SPCHS.Instance.Securtiy}

\begin{proof}
To prove this theorem, we will construct a PPT algorithm $\mathcal{B}$ that plays the SS-CKSA game with adversary $\mathcal{A}$ and utilizes the capability of $\mathcal{A}$ to solve the DBDH problem in $\mathbf{BGen}(1^k)$ with advantage approximately
$\frac{27}{(e\cdot q_t\cdot q_p)^3}\cdot Adv^{\text{SS-CKSA}}_{SPCHS,\mathcal{A}}$. Let $Coin\overset{\sigma}\leftarrow \{0,1\}$ denote the operation that picks $Coin\in\{0,1\}$ according to the probability $Pr[Coin = 1] = \sigma$ (the specified value of $\sigma$ will be decided latter). The constructed algorithm $\mathcal{B}$ in the SS-CKSA game is as follows.

\begin{itemize}
\item \textbf{Setup Phase}: Algorithm $\mathcal{B}$ takes as inputs $(q,\mathbb{G},\mathbb{G}_1,g,\hat{e},g^a,g^b,g^c,Z)$ (where $Z$ equals either $\hat{e}(g,g)^{abc}$ or $\hat{e}(g,g)^y$) and the keyword space $\mathcal{W}$, and 
performs the following steps:
\begin{enumerate}
\item Initialize the three lists $\mathbf{Pt}=\emptyset\subseteq\mathcal{W}\times\mathbb{G}\times\mathbb{G}_1$, $\mathbf{SList}=\emptyset\subseteq\mathbb{G}\times\mathbb{Z}_q^*\times\{0,1\}$ and $\mathbf{HList}=\emptyset\subseteq\mathcal{W}\times\mathbb{G}\times\mathbb{Z}_q^*\times \{0,1\}$;
\item Set the ciphertext space $\mathcal{C}=\mathbb{G}_1\times\mathbb{G}\times\mathbb{G}_1$ and $\mathbf{PK}=(q,\mathbb{G},\mathbb{G}_1,g,\hat{e},P=g^a,\mathcal{W},\mathcal{C})$;
\item Initialize $N$ hidden structures by repeating the following steps for $i\in[1,N]$: 
\begin{enumerate}
\item Pick $u_i\overset{\$}\leftarrow\mathbb{Z}_q^*$ and $Coin_i\overset{\sigma}\leftarrow\{0,1\}$;
\item If $Coin_i=1$, compute $\mathbf{Pub}_i=g^{b\cdot u_i}$; 
\item Otherwise, compute $\mathbf{Pub}_i=g^{u_i}$;
\end{enumerate}
\item Set $\mathbf{PSet}=\{\mathbf{Pub}_i|i\in[1,N]\}$ and $\mathbf{SList}=\{(\mathbf{Pub}_i,u_i,Coin_i)|i\in[1,N]\}$;
\item Send $\mathbf{PK}$ and $\mathbf{PSet}$ to adversary $\mathcal{A}$.
\end{enumerate}

\item \textbf{Query Phase 1}: Adversary $\mathcal{A}$ adaptively issues the following queries multiple times.
\begin{itemize}
\item Hash Query $\mathcal{Q}_{H}(W)$: Taking as input a keyword $W\in\mathcal{W}$, algorithm $\mathcal{B}$ does the following steps:
	\begin{enumerate}
	\item Pick $x\overset{\$}\leftarrow\mathbb{Z}_q^*$ and $Coin\overset{\sigma}\leftarrow\{0,1\}$;
	\item If $Coin=0$, add $(W,g^x,x,Coin)$ into $\mathbf{HList}$ and output $g^x$;
	\item Otherwise, add $(W,g^{c\cdot x},x,Coin)$ into $\mathbf{HList}$ and output $g^{c\cdot x}$;
	\end{enumerate}
\item Trapdoor Query $\mathcal{Q}_{Trap}(W)$: Taking as input a keyword $W\in\mathcal{W}$, algorithm $\mathcal{B}$ does the following steps:
	\begin{enumerate}
	\item If $(W,*,*,*)\notin\mathbf{HList}$, query $\mathcal{Q}_H(W)$;
	\item According to $W$, retrieve $(W,X,x,Coin)$ from $\mathbf{HList}$;
	\item If $Coin=0$, output $g^{a\cdot x}$; otherwise abort and output $\bot$;
	\end{enumerate}
\item Privacy Query $\mathcal{Q}_{Pri}(\mathbf{Pub})$: Taking as input a structure's public part $\mathbf{Pub}\in\mathbf{PSet}$, algorithm $\mathcal{B}$ does the following steps: 
	\begin{enumerate}
	\item According to $\mathbf{Pub}$, retrieve $(\mathbf{Pub},u,Coin)$ from $\mathbf{SList}$;
	\item If $Coin=0$, output $u$; otherwise abort and output $\bot$;
	\end{enumerate}
\item Encryption Query $\mathcal{Q}_{Enc}(W,\mathbf{Pub})$: Taking as inputs a keyword $W\in\mathcal{W}$ and a structure's public part $\mathbf{Pub}$, algorithm $\mathcal{B}$ does the following steps:
	\begin{enumerate}
	\item If $(W,*,*,*)\notin\mathbf{HList}$, query $\mathcal{Q}_H(W)$;
	\item According to $W$ and $\mathbf{Pub}$, retrieve $(W,X,x,Coin)$ and $(\mathbf{Pub},u,Coin^\prime)$ respectively from $\mathbf{HList}$ and $\mathbf{SList}$; 
	\item Pick $r\overset{\$}\leftarrow\mathbb{Z}_q^*$, and search $(W,\mathbf{Pub},Pt[u,W])$ for $W$ and $\mathbf{Pub}$ in $\mathbf{Pt}$; 
	\item If $W$ is not found, insert $(W,\mathbf{Pub},Pt[u,W]\overset{\$}\leftarrow\mathbb{G}_1)$ to $\mathbf{Pt}$ and do the following steps:
	\begin{enumerate}
	\item If $Coin=1\bigwedge Coin^\prime=1$, output $C=(Z^{x\cdot u},g^r,\hat{e}(g^a,X)^r\cdot Pt[u,W])$;
	\item If $Coin=0\bigwedge Coin^\prime=1$, output $C=(\hat{e}(g^a,g^{b\cdot u})^x,g^r,\hat{e}(g^a,X)^r\cdot Pt[u,W])$;
	\item If $Coin^\prime=0$, output $C=(\hat{e}(g^a,X)^u,g^r,\hat{e}(g^a,X)^r\cdot Pt[u,W])$;
	\end{enumerate}
	\item Otherwise, pick $R\overset{\$}{\leftarrow}\mathbb{G}_1$, 
set $C=(Pt[u,W],g^r,\hat{e}(g^a,X)^r\cdot R)$, update $Pt[u,W]=R$ and output $C$;
	\end{enumerate}
 \end{itemize}

\item \textbf{Challenge Phase}: Adversary $\mathcal{A}$ sends two challenge keyword-structure pairs $(W^*_0,\mathbf{Pub}^*_0)\in\mathcal{W}\times\mathbf{PSet}$ and $(W^*_1,\mathbf{Pub}^*_1)\in\mathcal{W}\times\mathbf{PSet}$ to algorithm $\mathcal{B}$; $\mathcal{B}$ picks $d\overset{\$}\leftarrow\{0,1\}$, and does the following steps:
\begin{enumerate}
\item  According to $\mathbf{Pub}_0^*$ and $\mathbf{Pub}_1^*$, retrieve $(\mathbf{Pub}_0^*,u_0^*,Coin_0^*)$ and $(\mathbf{Pub}_1^*,u_1^*,Coin_1^*)$ from $\mathbf{SList}$; and if $Coin_0^*=0\bigvee Coin_1^*=0$, then abort and output $\bot$;
\item If $(W_d^*,*,*,*)\notin\mathbf{HList}$, query $\mathcal{Q}_{H}(W_d^*)$;
\item According to $W_d^*$, retrieve $(W_d^*,X_d^*,x_d^*,Coin)$ from $\mathbf{HList}$; and if $Coin=0$, then abort and output $\bot$;
\item Search $(W_d^*,\mathbf{Pub}_d^*,Pt[u_d^*,W_d^*])$ for $W_d^*$ and $\mathbf{Pub}_d^*$ in $\mathbf{Pt}$; 
\item If it is not found, insert $(W_d^*,\mathbf{Pub}_d^*,Pt[u_d^*,W_d^*]\overset{\$}\leftarrow
\mathbb{G}_1)$ to $\mathbf{Pt}$, and send $C^*_d=(Z^{x_d^*\cdot u_d^*},g^b,Z^{x_d^*}\cdot Pt[u_d^*,W_d^*])$ to adversary $\mathcal{A}$; 
\item Otherwise, pick $R\overset{\$}{\leftarrow}\mathbb{G}_1$, 
set $C_d^*=(Pt[u_d^*,W_d^*],g^b,Z^{x_d^*}\cdot R)$, update $Pt[u_d^*,W_d^*]=R$, and send $C_d^*$ to adversary $\mathcal{A}$;
\end{enumerate}

\item \textbf{Query Phase 2}: This phase is the same as \textbf{Query Phase 1}. Note that in \textbf{Query Phase 1} and \textbf{Query Phase 2}, adversary $\mathcal{A}$ cannot query the corresponding private parts both of $\mathbf{Pub}^*_0$ and $\mathbf{Pub}^*_1$ and the keyword search trapdoors both of $W^*_0$ and $W^*_1$.

\item \textbf{Guess Phase}: Adversary $\mathcal{A}$ sends a guess $d^\prime$ to algorithm $\mathcal{B}$. If $d=d^\prime$, $\mathcal{B}$ output 1; otherwise, output 0.
\end{itemize}

Let $\overline{Abort}$ denote the event that algorithm $\mathcal{B}$ does not abort in the above game. Next, we will compute the probabilities $Pr[\overline{Abort}]$, $Pr[\mathcal{B}=1|Z=\hat{e}(g,g)^{abc}]$ and $Pr[\mathcal{B}=1|Z=\hat{e}(g,g)^y]$, and the advantage $Adv_\mathcal {B}^{DBDH}(1^k)$.

According to the above game, the probability of the event $\overline{Abort}$ only relies on the probability $\sigma$ and the number of times that adversary $\mathcal{A}$ queries oracles $\mathcal{Q}_{Trap}(\cdot)$ and $\mathcal{Q}_{Pri}(\cdot)$. We have that $Pr[\overline{Abort}]=(1-\sigma)^{q_t\cdot q_p}\cdot \sigma^3$. Let $\sigma=\frac{3}{3+q_t\cdot q_p}$. We have that  $Pr[\overline{Abort}]\approx\frac{27}{(e\cdot q_t\cdot q_p)^3}$, where $e$ is the base of natural logarithms.

When $Z=\hat{e}(g,g)^{abc}$ and the event $\overline{Abort}$ holds, it is easy to find that algorithm $\mathcal{B}$ simulates a real SS-CKSA game in adversary $\mathcal{A}$'s mind. So we have $$Pr[d=d^\prime|\overline{Abort}\bigwedge Z=\hat{e}(g,g)^{abc}]=(Adv^{\text{SS-CKSA}}_{SPCHS,\mathcal{A}}+\frac{1}{2}).$$
 
When $Z=\hat{e}(g,g)^y$ and the event $\overline{Abort}$ holds, algorithm $\mathcal{B}$ generates a challenge ciphertext, which is independent of the challenge keywords $W_0^*$ and $W_1^*$. So we have $$Pr[d=d^\prime|\overline{Abort}\bigwedge Z=\hat{e}(g,g)^y]=\frac{1}{2}.$$

Now, we can compute the advantage $Adv_\mathcal{B}^{DBDH}(1^k)$ as follows:
\begin{equation}
\begin{aligned}
Adv_\mathcal{B}^{DBDH}(1^k)&=Pr[\mathcal{B}=1|Z=\hat{e}(g,g)^{abc}]-Pr[\mathcal{B}=1|Z=\hat{e}(g,g)^y]\\
&=Pr[d=d^\prime\bigwedge\overline{Abort}|Z=\hat{e}(g,g)^{abc}]-Pr[d=d^\prime\bigwedge\overline{Abort}|Z=\hat{e}(g,g)^y]\\
&=Pr[d=d^\prime|\overline{Abort}\bigwedge Z=\hat{e}(g,g)^{abc}]\cdot Pr[\overline{Abort}|Z=\hat{e}(g,g)^{abc}]\\&\qquad -Pr[d=d^\prime|\overline{Abort}\bigwedge Z=\hat{e}(g,g)^y]\cdot Pr[\overline{Abort}|Z=\hat{e}(g,g)^y]\\
&\approx(Adv^{\text{SS-CKSA}}_{SPCHS,\mathcal{A}}+\frac{1}{2})\cdot\frac{27}{(e\cdot q_t\cdot q_p)^3}-\frac{1}{2}\cdot\frac{27}{(e\cdot q_t\cdot q_p)^3}\\& \approx\frac{27}{(e\cdot q_t\cdot q_p)^3}\cdot Adv^{\text{SS-CKSA}}_{SPCHS,\mathcal{A}} \nonumber
\end{aligned}
\end{equation}

In addition, it is clear that algorithm $\mathcal{B}$ is a PPT algorithm, if adversary $\mathcal{A}$ is a PPT adversary. In conclusion, if a PPT adversary $\mathcal{A}$ wins in the SS-CKSA game of the above SPCHS instance with advantage $Adv^{\text{SS-CKSA}}_{SPCHS,\mathcal{A}}$, in which
$\mathcal{A}$ makes at most $q_t$ queries to oracle $\mathcal{Q}_{Trap}(\cdot)$ and at most $q_p$ queries to oracle $\mathcal{Q}_{Pri}(\cdot)$, then there is a PPT algorithm $\mathcal{B}$ that solves the
DBDH problem in $\mathbf{BGen}(1^k)$ with advantage approximately
$$Adv_\mathcal {B}^{DBDH}(1^k)\approx\frac{27}{(e\cdot q_t\cdot q_p)^3}\cdot Adv^{\text{SS-CKSA}}_{SPCHS,\mathcal{A}}$$ where $e$ is the base of natural logarithms.
\end{proof}

\subsection{Proof of Theorem \ref{T.SPCHS.Generic.Consistency}}
\label{P.SPCHS.Generic.Consistency}

\begin{proof}
Without loss of generality, it is sufficient to prove that given the keyword-searchable trapdoor $T_{W_i}=(\hat{S}_{W_i},\tilde{S}_{W_i})$ of keyword $W_i$ and the hidden structure's public part $\mathbf{Pub}=\hat{C}$, algorithm $\mathbf{StructuredSearch}(\mathbf{PK},\mathbf{Pub},\mathbb{C},T_{W_i})$ only finds out all ciphertexts of keyword $W_i$ with the hidden structure $\mathbf{Pub}$, where $\hat{S}_{W_i}=\mathbf{Extract}_{\mbox{\tiny IBKEM}}(\mathbf{SK}_{\mbox{\tiny IBKEM}},W_i)$, $\tilde{S}_{W_i}=\mathbf{Extract}_{\mbox{\tiny IBE}}(\mathbf{SK}_{\mbox{\tiny IBE}},$ $W_i)$, $\hat{C}$ is from $(\hat{K},\hat{C})=\mathbf{Encaps}_{\mbox{\tiny IBKEM}}(\mathbf{PK}_{\mbox{\tiny IBKEM}},W,u)$, keyword $W$ is arbitrarily chosen in $\mathcal{W}$, and $u$ is a random value. 

Algorithm $\mathbf{StructuredSearch}(\mathbf{PK},\mathbf{Pub},\mathbb{C},T_{W_i})$  computes $Pt^\prime=\mathbf{Decaps}_{\mbox{\tiny IBKEM}}(\hat{S}_{W_i},\mathbf{Pub})$ in its first step. According to the full-identity malleability of IBKEM in Definition \ref{D.FIM}, we have $\mathbf{FIM}(W_i,u)=\mathbf{Decaps}_{\mbox{\tiny IBKEM}}(\hat{S}_{W_i},\mathbf{Pub})$. So algorithm  $\mathbf{StructuredSearch}(\mathbf{PK},\mathbf{Pub},\mathbb{C},T_{W_i})$  finds out the ciphertext $(\mathbf{FIM}(W_i,u),\mathbf{Enc}_{\mbox{\tiny IBE}}(\mathbf{PK}_{\mbox{\tiny IBE}},W_i,Pt[u,W_i]))$ by matching $Pt^\prime$ with all ciphertexts' first part in its second step. Moreover, due to the collision-freeness of IBKEM in Definition \ref{D.FIM}, there is no keyword $W_j$ ($\neq W_i$) to meet $\mathbf{FIM}(W_i,u)=\mathbf{FIM}(W_j,u)$, and no hidden structure $\mathbf{Pub}^\prime$ ($\neq\mathbf{Pub}$) to meet $\mathbf{FIM}(W_i,u)=\mathbf{FIM}(W_i,u^\prime)$, where $\mathbf{Pub}^\prime$ is generated by algorithm $\mathbf{StructureInitialization}(\mathbf{PK})$ with the random value $u^\prime$. So only the ciphertext $(\mathbf{FIM}(W_i,u),\mathbf{Enc}_{\mbox{\tiny IBE}}(\mathbf{PK}_{\mbox{\tiny IBE}},W_i,Pt[u,W_i]))$ is found in this step, except with a negligible probability in the security parameter $k$. Then, according to the consistency of IBE, algorithm $\mathbf{StructuredSearch}(\mathbf{PK},\mathbf{Pub},\mathbb{C},T_{W_i})$ can decrypt $Pt[u,W_i]$ by algorithm $\mathbf{Dec}_{\mbox{\tiny IBE}}(\tilde{S}_{W_i},\mathbf{Enc}_{\mbox{\tiny IBE}}($ $\mathbf{PK}_{\mbox{\tiny IBE}},W_i,Pt[u,W_i]))$.

Recall that in algorithm $\mathbf{StructuredEncryption}$, $Pt[u,W_i]$ was randomly chosen in $\mathbb{G}_1$ and taken as the first part of only one ciphertext of keyword $W_i$. So when $\mathbf{StructuredSearch}(\mathbf{PK},$ $\mathbf{Pub},\mathbb{C},T_{W_i})$ goes back to its second step, only the ciphertext $(Pt[u,W_i],\mathbf{Enc}_{\mbox{\tiny IBE}}(\mathbf{PK}_{\mbox{\tiny IBE}},$ $W_i,R))$ is found, except with a negligible probability in the security parameter $k$. 

By carrying on in the same way, algorithm $\mathbf{StructuredSearch}($ $\mathbf{PK},\mathbf{Pub},\mathbb{C},T_{W_i})$ only finds out all ciphertexts of keyword $W_i$ with the hidden structure $\mathbf{Pub}$, except with a negligible probability in the security parameter $k$. And the algorithm will stop, since the random value $R$ contained in the last found ciphertext of keyword $W_i$ fails to  match any other ciphertext's first part.
\end{proof}

\subsection{Proof of Theorem \ref{T.SPCHS.Generic.Security}}\label{P.SPCHS.Generic.Security}

\begin{proof}
Let $\mathcal{G}_1$ and $\mathcal{G}_2$ be the challengers respectively in the Anon-SS-sID-CPA game of the underlying IBKEM scheme and the Anon-SS-ID-CPA game of the underlying IBE scheme. A constructed adversary $\mathcal{B}$ in the SS-sK-CKSA game of the generic SPCHS  construction is as follows.

\begin{itemize}
\item \textbf{Setup Phase}: In this phase,
\begin{enumerate}
\item $\mathcal{A}$ sends two challenge keywords $(W_0^*,W_1^*)$ to $\mathcal{B}$. 
\item $\mathcal{B}$ arbitrarily picks $I_1^*\leftarrow(\mathcal{ID}_{\mbox{\tiny IBKEM}}-\mathcal{W})$, and sends two challenge identities $(W_0^*,I_1^*)$ to $\mathcal{G}_1$. (The $I_1^*$ is existing, since we have $\mathcal{W}\subset\mathcal{ID}_{\mbox{\tiny IBKEM}}$.)
\item $\mathcal{G}_1$ generates  $(\mathbf{PK}_{\mbox{\tiny IBKEM}},\mathbf{SK}_{\mbox{\tiny IBKEM}})$ by algorithm $\mathbf{Setup}_{\mbox{\tiny IBKEM}}$ and sends $\mathbf{PK}_{\mbox{\tiny IBKEM}}$ to $\mathcal{B}$.
\item $\mathcal{B}$ queries $\mathcal{G}_1$ for the challenge key-and-encapsulation pair.
\item $\mathcal{G}_1$ picks $\hat{d}\overset{\$}\leftarrow\{0,1\}$, generates $(\hat{K}_0^*,\hat{C}_0^*)=\mathbf{Encaps}_{\mbox{\tiny IBKEM}}(\mathbf{PK}_{\mbox{\tiny IBKEM}},W_0^*,r_0)$ and $(\hat{K}_1^*,\hat{C}_1^*)=\mathbf{Encaps}_{\mbox{\tiny IBKEM}}(\mathbf{PK}_{\mbox{\tiny IBKEM}},I_1^*,r_1)$, and sends $(\hat{K}_{\hat{d}}^*,\hat{C}_0^*)$ to $\mathcal{B}$, where $r_0$ and $r_1$ are randomly chosen.
\item $\mathcal{B}$ adds $\hat{C}_0^*$ into the set $\mathbf{PSet}\subseteq\mathcal{C}_{\mbox{\tiny IBKEM}}$.
\item $\mathcal{G}_2$ generates $(\mathbf{PK}_{\mbox{\tiny IBE}},\mathbf{SK}_{\mbox{\tiny IBE}})$ by algorithm $\mathbf{Setup}_{\mbox{\tiny IBE}}$, and sends $\mathbf{PK}_{\mbox{\tiny IBE}}$ to $\mathcal{B}$.
\item $\mathcal{B}$ initializes the two lists $\mathbf{SList}=\emptyset\subseteq \mathcal{C}_{\mbox{\tiny IBKEM}}\times\{0,1\}^*$ and $\mathbf{Pt}=\emptyset\subseteq\mathcal{W}\times\mathcal{C}_{\mbox{\tiny IBKEM}}\times\mathcal{M}_{\mbox{\tiny IBE}}$, and initializes $N-1$ hidden structures by repeating the following steps for $i\in[1,N-1]$:
\begin{enumerate}
\item Pick a random value $u_i$ and an arbitrary keyword $W_i\in\mathcal{W}$;  
\item Generate $(\hat{K}_i,\hat{C}_i)=\mathbf{Encaps}_{\mbox{\tiny IBKEM}}(\mathbf{PK}_{\mbox{\tiny IBKEM}},W_i,u_i)$, add $\mathbf{Pub}_i=\hat{C}_i$ into the set $\mathbf{PSet}$, and add $(\mathbf{Pub}_i,u_i)$ into $\mathbf{SList}$;
\end{enumerate}
\item $\mathcal{B}$ finally sends $\mathbf{PK}$ and $\mathbf{PSet}$ to $\mathcal{A}$.
\end{enumerate}

\item \textbf{Query Phase 1}: In this phase, adversary $\mathcal{A}$ adaptively issues the following queries multiple times.
\begin{itemize}
\item Trapdoor Query $\mathcal{Q}_{Trap}(W)$: Taking as input a keyword $W\in\mathcal{W}$, $\mathcal{B}$ forwards the query $W$ both to the  decryption key oracles $\hat{S}_W=\mathcal{Q}_{DK}^{IBKEM}(W)$ and $\tilde{S}_{W}=\mathcal{Q}_{DK}^{IBE}(W)$, and sends $T_W=(\hat{S}_W,\tilde{S}_W)$ to $\mathcal{A}$. 

(In this query, $\mathcal{A}$ cannot query the keyword search trapdoor corresponding to the challenge keyword $W_0^*$ or $W_1^*$. In addition, one may find that $\mathcal{B}$ cannot respond the query $\mathcal{Q}_{Trap}(I^*_1)$. However, 
this is not a problem, since we let $I_1^*\in(\mathcal{ID}_{\mbox{\tiny IBKEM}}-\mathcal{W})$. So $\mathcal{A}$ never issues that query.)

\item Privacy Query $\mathcal{Q}_{Pri}(\mathbf{Pub})$: Taking as input a structure's public part $\mathbf{Pub}\in\mathbf{PSet}$, $\mathcal{B}$ aborts and outputs $\bot$ if $\mathbf{Pub}=\hat{C}_0^*$; otherwise, $\mathcal{B}$ retrieves $(\mathbf{Pub},u)$ from $\mathbf{SList}$ according to $\mathbf{Pub}$ and outputs $u$.
	
\item Encryption Query $\mathcal{Q}_{Enc}(W,\mathbf{Pub})$: Taking as inputs a keyword $W\in\mathcal{W}$ and a structure's public part $\mathbf{Pub}$, $\mathcal{B}$ does the following steps:
	\begin{enumerate}
	\item If $\mathbf{Pub}=\hat{C}_0^*\bigwedge W\neq W_0^*$, then 
	\begin{enumerate}
	\item Search $(W,\mathbf{Pub},Pt[u^*,W])$ for $W$ and $\mathbf{Pub}$ in $\mathbf{Pt}$;
	
	 (Note that $u^*$ is not a really known value. It is just a symbol to denote the random value used to generate $\mathbf{Pub}=\hat{C}_0^*$.)
	\item If it is not found, query $\hat{S}_W=\mathcal{Q}_{DK}^{IBKEM}(W)$, insert $(W,\mathbf{Pub},Pt[u^*,W]\overset{\$}\leftarrow\mathcal{M}_{\mbox{\tiny IBE}})$ to $\mathbf{Pt}$ and output $C=(\mathbf{Decaps}_{\mbox{\tiny IBKEM}}(\hat{S}_W,\mathbf{Pub}),\mathbf{Enc}_{\mbox{\tiny IBE}}(\mathbf{PK}_{\mbox{\tiny IBE}},W,Pt[u^*,W]))$;
	
	(Note that when $W=W_1^*$, $\mathcal{B}$ still can query $\hat{S}_W=\mathcal{Q}_{DK}^{IBKEM}(W)$, since $W_1^*$ is not a challenge IBKEM identity in the above \textbf{Setup Phase}. )
	\item Otherwise, pick $R\overset{\$}\leftarrow\mathcal{M}_{\mbox{\tiny IBE}}$, set $C=(Pt[u^*,W],\mathbf{Enc}_{\mbox{\tiny IBE}}(\mathbf{PK}_{\mbox{\tiny IBE}},W,R))$, update $Pt[u^*,W]=R$ and output $C$;
	\end{enumerate}
	\item If $\mathbf{Pub}=\hat{C}_0^*\bigwedge W= W_0^*$, then
	\begin{enumerate}
	\item Search $(W,\mathbf{Pub},Pt[u^*,W])$ for $W$ and $\mathbf{Pub}$ in $\mathbf{Pt}$;
	\item If it is not found, insert $(W,\mathbf{Pub},Pt[u^*,W]\overset{\$}\leftarrow\mathcal{M}_{\mbox{\tiny IBE}})$ to $\mathbf{Pt}$, and output $C=(\hat{K}_{\hat{d}}^*,\mathbf{Enc}_{\mbox{\tiny IBE}}(\mathbf{PK}_{\mbox{\tiny IBE}},W,Pt[u^*,W]))$;
	
	(Note that if $\hat{d}=0$, the output ciphertext $C$ is correct, since the full-identity malleability of the IBKEM scheme allows $\mathbf{FIM}(\hat{C}_0^*,W_0^*,u^*)=\hat{K}^*_{\hat{d}}$. Otherwise, the output ciphertext $C$ is incorrect. If $\mathcal{A}$ can find this incorrectness, it implies that $\hat{d}=1$ holds. Accordingly, $\mathcal{B}$ has advantage to win in the Anon-SS-sID-CPA game of the IBKEM scheme.)
	
	\item Otherwise, pick $R\overset{\$}\leftarrow\mathcal{M}_{\mbox{\tiny IBE}}$, set $C=(Pt[u^*,W],\mathbf{Enc}_{\mbox{\tiny IBE}}(\mathbf{PK}_{\mbox{\tiny IBE}},W,R))$, update $Pt[u^*,W]=R$ and output $C$;
	\end{enumerate}
	\item If $\mathbf{Pub}\neq\hat{C}_0^*$, then 
	\begin{enumerate}
	\item According to $\mathbf{Pub}$, retrieve $(\mathbf{Pub},u)$ from $\mathbf{SList}$;
	\item Search $(W,\mathbf{Pub},Pt[u,W])$ for $W$ and $\mathbf{Pub}$ in $\mathbf{Pt}$;
	\item If it is not found, insert $(W,\mathbf{Pub},Pt[u,W]\overset{\$}\leftarrow\mathcal{M}_{\mbox{\tiny IBE}})$ to $\mathbf{Pt}$ and output $C=(\mathbf{FIM}(W,u),\mathbf{Enc}_{\mbox{\tiny IBE}}(\mathbf{Pk}_{\mbox{\tiny IBE}},W,Pt[u,W]))$;
	\item Otherwise, pick $R\overset{\$}\leftarrow\mathcal{M}_{\mbox{\tiny IBE}}$, set $C=(Pt[u,W],\mathbf{Enc}_{\mbox{\tiny IBE}}(\mathbf{PK}_{\mbox{\tiny IBE}},W,R))$, update $Pt[u,W]=R$ and output $C$;
	\end{enumerate}
	\end{enumerate}
 \end{itemize}

\item \textbf{Challenge Phase}: In this phase, 
\begin{enumerate}
\item $\mathcal{A}$ sends two challenge structures $(\mathbf{Pub}^*_0,\mathbf{Pub}_1^*)\in\mathbf{PSet}\times\mathbf{PSet}$ to $\mathcal{B}$; 
\item $\mathcal{B}$ does the following steps:
\begin{enumerate}
\item If $\mathbf{Pub}_0^*\neq \hat{C}_0^*$, then abort and output $\bot$;
\item Send two challenge IBE identity-and-message pairs $(W_0^*,M_0^*)$ and $(I_1^*,M_1^*)$ to $\mathcal{G}_1$, where $M_0^*\overset{\$}\leftarrow\mathcal{M}_{\mbox{\tiny IBE}}$ and $M_1^*\overset{\$}\leftarrow\mathcal{M}_{\mbox{\tiny IBE}}$;
\end{enumerate}
\item $\mathcal{G}_2$ picks $\tilde{d}\overset{\$}\leftarrow\{0,1\}$, and sends the challenge IBE ciphertext $\tilde{C}^*_{\tilde{d}}=\mathbf{Enc}_{\mbox{\tiny IBE}}(\mathbf{PK}_{\mbox{\tiny IBE}},W^*_0,M^*_0)$ to $\mathcal{B}$ if $\tilde{d}=0$, otherwise sends $\tilde{C}_{\tilde{d}}=\mathbf{Enc}_{\mbox{\tiny IBE}}(\mathbf{PK}_{\mbox{\tiny IBE}},I^*_1,M_1^*)$ to $\mathcal{B}$.
\item $\mathcal{B}$ does the following steps:
\begin{enumerate}
\item Search $(W_0^*,\mathbf{Pub}_0^*,Pt[u^*,W_0^*])$ for $W_0^*$ and $\mathbf{Pub}_0^*$ in $\mathbf{Pt}$;
\item If it is not found, insert $(W_0^*,\mathbf{Pub}_0^*,Pt[u^*,W_0^*]=M_0^*)$ to $\mathbf{Pt}$, output the challenge ciphertext $C^*$ to $\mathcal{A}$ and stop this phase, where $C^*=(\hat{K}_{\hat{d}}^*,\tilde{C}_{\tilde{d}})$;

(Note that if $\hat{d}=0$ and $\tilde{d}=0$, the $C^*$ is a correct one. Otherwise, it is an incorrect one. If $\mathcal{A}$ confirms the incorrectness of $C^*$, it implies that $\hat{d}=1$ or $\tilde{d}=1$ holds. Accordingly, $\mathcal{B}$ has advantage to win in the Anon-SS-sID-CPA game of the IBKEM or the Anon-SS-ID-CPA game of the IBE scheme.)

\item Otherwise, set the challenge ciphertext $C^*=(Pt[u^*,W_0^*],\tilde{C}_{\tilde{d}})$, update $Pt[u^*,W_0^*]=M_0^*$, send $C^*$ to $\mathcal{A}$ and stop this phase. 

(Note that  if $\tilde{d}=0$, the $C^*$ is a correct one. Otherwise, it is an incorrect one. If $\mathcal{A}$ confirms  the incorrectness of $C^*$, it implies that $\tilde{d}=1$ holds. Accordingly, $\mathcal{B}$ has advantage to win in the Anon-SS-ID-CPA game of the IBE scheme.)
\end{enumerate}
\end{enumerate}

\item \textbf{Query Phase 2}: This phase is the same as \textbf{Query Phase 1}. Note that in \textbf{Query Phase 1} and \textbf{Query Phase 2}, adversary $\mathcal{A}$ cannot query the private part corresponding to  the structure  $\mathbf{Pub}^*_0$ or $\mathbf{Pub}^*_1$ and the keyword search trapdoor corresponding to the challenge keyword $W^*_0$ or $W^*_1$.

\item \textbf{Guess Phase}: Adversary $\mathcal{A}$ sends a guess $d^\prime$ to adversary $\mathcal{B}$. $\mathcal{B}$ takes $d^\prime$ as his guess at both $\hat{d}$ and $\tilde{d}$, and forwards $d^\prime$ to challengers $\mathcal{G}_1$ and $\mathcal{G}_2$.
\end{itemize}

Let $\overline{Abort}$ denote the event that adversary $\mathcal{B}$ does not abort in the above game. Suppose adversary $\mathcal{A}$ totally queries $\mathcal{Q}_{Pri}$ for $q_p$ times. Then we have $Pr[\overline{Abort}]=\frac{N-q_p}{N}\cdot \frac{1}{N-q_p}=\frac{1}{N}$. Note that $q_p\leq(N-2)$ always holds, since adversary $\mathcal{A}$ cannot query $\mathcal{Q}_{Pri}$ for the challenge structures $(\mathbf{Pub}^*_0,\mathbf{Pub}_1^*)$. 

Let $Win_{IBKEM,\mathcal{B}}^{\text{Anon-SS-sID-CPA}}$ denote the event that  $\mathcal{B}$ wins in the Anon-SS-sID-CPA game of the underlying IBKEM scheme under the condition that $\mathcal{B}$ does not abort. Let $Win_{IBE,\mathcal{B}}^{\text{Anon-SS-ID-CPA}}$ denote the event that  $\mathcal{B}$ wins in the Anon-SS-ID-CPA game of the underlying IBE scheme under the condition that $\mathcal{B}$ does not abort. Let $Adv_{\mathcal{B}}$ be the advantage of $\mathcal{B}$ to have $Win_{IBKEM,\mathcal{B}}^{\text{Anon-SS-sID-CPA}}$ or $Win_{IBE,\mathcal{B}}^{\text{Anon-SS-ID-CPA}}$ holds. Since $\mathcal{B}$ has the probability no less than $\frac{3}{4}$ to have $Win_{IBKEM,\mathcal{B}}^{\text{Anon-SS-sID-CPA}}$ or $Win_{IBE,\mathcal{B}}^{\text{Anon-SS-ID-CPA}}$ holds under the condition that $\mathcal{B}$ does not abort, we clearly have 

\begin{equation}
\begin{aligned}
Adv_{\mathcal{B}}&=(Pr[Win_{IBKEM,\mathcal{B}}^{\text{Anon-SS-sID-CPA}}\bigvee Win_{IBE,\mathcal{B}}^{\text{Anon-SS-ID-CPA}}|\overline{Abort}]-\frac{3}{4})\cdot Pr[\overline{Abort}]\\
&=(Pr[Win_{IBKEM,\mathcal{B}}^{\text{Anon-SS-sID-CPA}}|\overline{Abort}]+Pr[Win_{IBE,\mathcal{B}}^{\text{Anon-SS-ID-CPA}}|\overline{Abort}]\\ &\qquad-Pr[Win_{IBKEM,\mathcal{B}}^{\text{Anon-SS-sID-CPA}}\bigwedge Win_{IBE,\mathcal{B}}^{\text{Anon-SS-ID-CPA}}|\overline{Abort}]-\frac{3}{4})\cdot Pr[\overline{Abort}]\nonumber
\end{aligned}
\end{equation}

Let $\overline{Belong}$ denote the event that $(W_0^*,\mathbf{Pub}_0^*,Pt[u^*,W_0^*])\notin\mathbf{Pt}$ holds in the above \textbf{Challenge Phase}. On the contrary, let $Belong$ denote the event that $(W_0^*,\mathbf{Pub}_0^*,Pt[u^*,W_0^*])\in\mathbf{Pt}$ holds in the above \textbf{Challenge Phase}.

We compute the probability $Pr[Win_{IBKEM,\mathcal{B}}^{\text{Anon-SS-sID-CPA}}|\overline{Abort}]+Pr[Win_{IBE,\mathcal{B}}^{\text{Anon-SS-ID-CPA}}|\overline{Abort}]$ as follows.
\begin{equation}
\begin{aligned}
&Pr[Win_{IBKEM,\mathcal{B}}^{\text{Anon-SS-sID-CPA}}|\overline{Abort}]+Pr[Win_{IBE,\mathcal{B}}^{\text{Anon-SS-ID-CPA}}|\overline{Abort}]\\
&=Pr[d^\prime=\hat{d}|\overline{Abort}\bigwedge\overline{Belong}]\cdot Pr[\overline{Belong}]+Pr[d^\prime=\hat{d}|\overline{Abort}\bigwedge Belong]\cdot Pr[Belong]\\
&\qquad +Pr[d^\prime=\tilde{d}|\overline{Abort}\bigwedge\overline{Belong}]\cdot Pr[\overline{Belong}]+Pr[d^\prime=\tilde{d}|\overline{Abort}\bigwedge Belong]\cdot Pr[Belong]\\
&=(Pr[d^\prime=\hat{d}|\overline{Abort}\bigwedge\overline{Belong}\bigwedge\hat{d}=0\bigwedge\tilde{d}=0]\cdot Pr[\hat{d}=0\bigwedge\tilde{d}=0]\\
&\qquad +Pr[d^\prime=\hat{d}|\overline{Abort}\bigwedge\overline{Belong}\bigwedge\hat{d}=1\bigwedge\tilde{d}=0]\cdot Pr[\hat{d}=1\bigwedge\tilde{d}=0]\\
&\qquad + Pr[d^\prime=\hat{d}|\overline{Abort}\bigwedge\overline{Belong}\bigwedge\hat{d}=0\bigwedge\tilde{d}=1]\cdot Pr[\hat{d}=0\bigwedge\tilde{d}=1]\\
&\qquad +Pr[d^\prime=\hat{d}|\overline{Abort}\bigwedge\overline{Belong}\bigwedge\hat{d}=1\bigwedge\tilde{d}=1]\cdot Pr[\hat{d}=1\bigwedge\tilde{d}=1])\cdot Pr[\overline{Belong}]\\
&\qquad +(Pr[d^\prime=\hat{d}|\overline{Abort}\bigwedge Belong\bigwedge\hat{d}=0\bigwedge\tilde{d}=0]\cdot Pr[\hat{d}=0\bigwedge\tilde{d}=0]\\
&\qquad +Pr[d^\prime=\hat{d}|\overline{Abort}\bigwedge Belong\bigwedge\hat{d}=1\bigwedge\tilde{d}=0]\cdot Pr[\hat{d}=1\bigwedge\tilde{d}=0]\\
&\qquad + Pr[d^\prime=\hat{d}|\overline{Abort}\bigwedge Belong\bigwedge\hat{d}=0\bigwedge\tilde{d}=1]\cdot Pr[\hat{d}=0\bigwedge\tilde{d}=1]\\
&\qquad +Pr[d^\prime=\hat{d}|\overline{Abort}\bigwedge Belong\bigwedge\hat{d}=1\bigwedge\tilde{d}=1]\cdot Pr[\hat{d}=1\bigwedge\tilde{d}=1])\cdot Pr[Belong]\\
&\qquad +(Pr[d^\prime=\tilde{d}|\overline{Abort}\bigwedge\overline{Belong}\bigwedge\hat{d}=0\bigwedge\tilde{d}=0]\cdot Pr[\hat{d}=0\bigwedge\tilde{d}=0]\\
&\qquad +Pr[d^\prime=\tilde{d}|\overline{Abort}\bigwedge\overline{Belong}\bigwedge\hat{d}=1\bigwedge\tilde{d}=0]\cdot Pr[\hat{d}=1\bigwedge\tilde{d}=0]\\
&\qquad + Pr[d^\prime=\tilde{d}|\overline{Abort}\bigwedge\overline{Belong}\bigwedge\hat{d}=0\bigwedge\tilde{d}=1]\cdot Pr[\hat{d}=0\bigwedge\tilde{d}=1]\\
&\qquad +Pr[d^\prime=\tilde{d}|\overline{Abort}\bigwedge\overline{Belong}\bigwedge\hat{d}=1\bigwedge\tilde{d}=1]\cdot Pr[\hat{d}=1\bigwedge\tilde{d}=1])\cdot Pr[\overline{Belong}]\\
&\qquad +(Pr[d^\prime=\tilde{d}|\overline{Abort}\bigwedge Belong\bigwedge\hat{d}=0\bigwedge\tilde{d}=0]\cdot Pr[\hat{d}=0\bigwedge\tilde{d}=0]\\
&\qquad +Pr[d^\prime=\tilde{d}|\overline{Abort}\bigwedge Belong\bigwedge\hat{d}=1\bigwedge\tilde{d}=0]\cdot Pr[\hat{d}=1\bigwedge\tilde{d}=0]\\
&\qquad + Pr[d^\prime=\tilde{d}|\overline{Abort}\bigwedge Belong\bigwedge\hat{d}=0\bigwedge\tilde{d}=1]\cdot Pr[\hat{d}=0\bigwedge\tilde{d}=1]\\
&\qquad +Pr[d^\prime=\tilde{d}|\overline{Abort}\bigwedge Belong\bigwedge\hat{d}=1\bigwedge\tilde{d}=1]\cdot Pr[\hat{d}=1\bigwedge\tilde{d}=1])\cdot Pr[Belong]\\
&=(2\cdot Adv^{\text{SS-sK-CKSA}}_{SPCHS,\mathcal{A}}+3+Pr[d^\prime=\hat{d}|\overline{Abort}\bigwedge\overline{Belong}\bigwedge\hat{d}=1\bigwedge\tilde{d}=1]\\&\qquad +Pr[d^\prime=\tilde{d}|\overline{Abort}\bigwedge\overline{Belong}\bigwedge\hat{d}=1\bigwedge\tilde{d}=1])\cdot \frac{1}{4}\cdot Pr[\overline{Belong}]\\&\qquad +(2\cdot Adv^{\text{SS-sK-CKSA}}_{SPCHS,\mathcal{A}}+3+Pr[d^\prime=\hat{d}|\overline{Abort}\bigwedge Belong\bigwedge\hat{d}=1\bigwedge\tilde{d}=1]\\&\qquad+Pr[d^\prime=\tilde{d}|\overline{Abort}\bigwedge Belong\bigwedge\hat{d}=1\bigwedge\tilde{d}=1])\cdot \frac{1}{4}\cdot Pr[Belong]\\&= (2\cdot Adv^{\text{SS-sK-CKSA}}_{SPCHS,\mathcal{A}}+3+2\cdot Pr[d^\prime=\hat{d}=\tilde{d}|\overline{Abort}\bigwedge\overline{Belong}\bigwedge\hat{d}=1\bigwedge\tilde{d}=1])\cdot \frac{1}{4}\cdot Pr[\overline{Belong}]\\&\qquad +(2\cdot Adv^{\text{SS-sK-CKSA}}_{SPCHS,\mathcal{A}}+3+2\cdot Pr[d^\prime=\hat{d}=\tilde{d}|\overline{Abort}\bigwedge Belong\bigwedge\hat{d}=1\bigwedge\tilde{d}=1])\cdot \frac{1}{4}\cdot Pr[Belong]\\&=(2\cdot Adv^{\text{SS-sK-CKSA}}_{SPCHS,\mathcal{A}}+3)\cdot\frac{1}{4}+2\cdot Pr[d^\prime=\hat{d}=\tilde{d}|\overline{Abort}\bigwedge\hat{d}=1\bigwedge\tilde{d}=1]\cdot \frac{1}{4}\\&=\frac{1}{2}\cdot Adv^{\text{SS-sK-CKSA}}_{SPCHS,\mathcal{A}}+1\nonumber
\end{aligned}
\end{equation}

We compute the probability $Pr[Win_{IBKEM,\mathcal{B}}^{\text{Anon-SS-sID-CPA}}\bigwedge Win_{IBE,\mathcal{B}}^{\text{Anon-SS-ID-CPA}}|\overline{Abort}]$ as follows.

\begin{equation}
\begin{aligned}
&Pr[Win_{IBKEM,\mathcal{B}}^{\text{Anon-SS-sID-CPA}}\bigwedge Win_{IBE,\mathcal{B}}^{\text{Anon-SS-ID-CPA}}|\overline{Abort}]\\
&=Pr[d^\prime=\hat{d}=\tilde{d}|\overline{Abort}\bigwedge\overline{Belong}]\cdot Pr[\overline{Belong}]+Pr[d^\prime=\hat{d}=\tilde{d}|\overline{Abort}\bigwedge Belong]\cdot Pr[Belong]\\
&=(Pr[d^\prime=\hat{d}=\tilde{d}|\overline{Abort}\bigwedge\overline{Belong}\bigwedge\hat{d}=0\bigwedge\tilde{d}=0]\cdot Pr[\hat{d}=0\bigwedge\tilde{d}=0]\\
&\qquad +Pr[d^\prime=\hat{d}=\tilde{d}|\overline{Abort}\bigwedge\overline{Belong}\bigwedge\hat{d}=1\bigwedge\tilde{d}=1]\cdot Pr[\hat{d}=1\bigwedge\tilde{d}=1])\cdot Pr[\overline{Belong}]\\
&\qquad +(Pr[d^\prime=\hat{d}=\tilde{d}|\overline{Abort}\bigwedge Belong\bigwedge\hat{d}=0\bigwedge\tilde{d}=0]\cdot Pr[\hat{d}=0\bigwedge\tilde{d}=0]\\
&\qquad +Pr[d^\prime=\hat{d}=\tilde{d}|\overline{Abort}\bigwedge Belong\bigwedge\hat{d}=1\bigwedge\tilde{d}=1]\cdot Pr[\hat{d}=1\bigwedge\tilde{d}=1])\cdot Pr[Belong]\\
&=(Adv^{\text{SS-sK-CKSA}}_{SPCHS,\mathcal{A}}+\frac{1}{2}+Pr[d^\prime=\hat{d}=\tilde{d}|\overline{Abort}\bigwedge\overline{Belong}\bigwedge\hat{d}=1\bigwedge\tilde{d}=1])\cdot\frac{1}{4}\cdot Pr[\overline{Belong}]\\&\qquad +(Adv^{\text{SS-sK-CKSA}}_{SPCHS,\mathcal{A}}+\frac{1}{2}+Pr[d^\prime=\hat{d}=\tilde{d}|\overline{Abort}\bigwedge Belong\bigwedge\hat{d}=1\bigwedge\tilde{d}=1])\cdot\frac{1}{4}\cdot Pr[Belong]\\
&=(Adv^{\text{SS-sK-CKSA}}_{SPCHS,\mathcal{A}}+\frac{1}{2})\cdot \frac{1}{4}+Pr[d^\prime=\hat{d}=\tilde{d}|\overline{Abort}\bigwedge\hat{d}=1\bigwedge\tilde{d}=1]\cdot\frac{1}{4}=\frac{1}{4}\cdot Adv^{\text{SS-sK-CKSA}}_{SPCHS,\mathcal{A}}+\frac{1}{4}\nonumber
\end{aligned}
\end{equation}

According to the above computations, we have 

\begin{equation}
\begin{aligned}
Adv_{\mathcal{B}}&=(Pr[Win_{IBKEM,\mathcal{B}}^{\text{Anon-SS-sID-CPA}}\bigvee Win_{IBE,\mathcal{B}}^{\text{Anon-SS-ID-CPA}}|\overline{Abort}]-\frac{3}{4})\cdot Pr[\overline{Abort}]\\
&=(Pr[Win_{IBKEM,\mathcal{B}}^{\text{Anon-SS-sID-CPA}}|\overline{Abort}]+Pr[Win_{IBE,\mathcal{B}}^{\text{Anon-SS-ID-CPA}}|\overline{Abort}]\\ &\qquad-Pr[Win_{IBKEM,\mathcal{B}}^{\text{Anon-SS-sID-CPA}}\bigwedge Win_{IBE,\mathcal{B}}^{\text{Anon-SS-ID-CPA}}|\overline{Abort}]-\frac{3}{4})\cdot Pr[\overline{Abort}]\\&=\frac{1}{4N}\cdot Adv^{\text{SS-sK-CKSA}}_{SPCHS,\mathcal{A}}\nonumber
\end{aligned}
\end{equation}

In addition, it is clear that adversary $\mathcal{B}$ is a PPT adversary, if $\mathcal{A}$ is a PPT adversary. In conclusion, we have that if a PPT adversary $\mathcal{A}$ wins in the SS-sK-CKSA game of the generic SPCHS construction with advantage $Adv^{\text{SS-sK-CKSA}}_{SPCHS,\mathcal{A}}$, then the above PPT adversary $\mathcal{B}$ can utilize the capability of adversary $\mathcal{A}$ to win in the Anon-SS-sID-CPA game of the underlying IBKEM scheme or the Anon-SS-ID-CPA game of the underlying IBE scheme with advantage $\frac{1}{4N}\cdot Adv^{\text{SS-sK-CKSA}}_{SPCHS,\mathcal{A}}$.
\end{proof}

\subsection{A Collision-free Full-identity Malleable IBKEM Instance in the RO Model}\label{AS.IBKEM.Instance.1}

We first review the VRF-suitable IBKEM instance proposed in Appendix A.2 of \cite{ACF13}. Then we prove its collision-free full-identity malleability and the Anon-SS-ID-CPA security in the RO model. Let identity space $\mathcal{ID}_{\mbox{\tiny IBKEM}}=\{0,1\}^*$. This IBKEM instance is as follows.

\begin{itemize} 
\item $\mathbf{Setup}_{\mbox{\tiny IBKEM}}(1^k,\mathcal{ID}_{\mbox{\tiny IBKEM}})$: Take as input a security parameter $1^k$ and the identity space $\mathcal{ID}_{\mbox{\tiny IBKEM}}$, compute $(q,\mathbb{G}, \mathbb{G}_1, g,\hat{e})\overset{\$}\leftarrow \mathbf{BGen}(1^k)$, pick $s\overset{\$}\leftarrow \mathbb{Z}_q^*$, set $P\leftarrow g^s$, choose a cryptographic hash function $H: \{0,1\}^*\rightarrow \mathbb{G}$, set the encapsulated key space $\mathcal{K}_{\mbox{\tiny IBKEM}}=\mathbb{G}_1$, set the encapsulation space $\mathcal{C}_{\mbox{\tiny IBKEM}}=\mathbb{G}$, and output the master public key $\mathbf{PK}_{\mbox{\tiny IBKEM}}=(q,\mathbb{G},\mathbb{G}_1,g,\hat{e},P,H,\mathcal{ID}_{\mbox{\tiny IBKEM}},\mathcal{K}_{\mbox{\tiny IBKEM}},\mathcal{C}_{\mbox{\tiny IBKEM}})$ and the master secret key $\mathbf{SK}_{\mbox{\tiny IBKEM}}=s$.

\item $\mathbf{Extract}_{\mbox{\tiny IBKEM}}(\mathbf{SK}_{\mbox{\tiny IBKEM}},ID)$: Take as inputs $\mathbf{SK}_{\mbox{\tiny IBKEM}}$ and an identity $ID\in\mathcal{ID}_{\mbox{\tiny IBKEM}}$, and output a decryption key $\hat{S}_{ID}=H(ID)^s$ of $ID$.

\item $\mathbf{Encaps}_{\mbox{\tiny IBKEM}}(\mathbf{PK}_{\mbox{\tiny IBKEM}},ID,r)$: Take as inputs $\mathbf{PK}_{\mbox{\tiny IBKEM}}$, an identity $ID\in\mathcal{ID}_{\mbox{\tiny IBKEM}}$ and a random value $r$, and output a key-and-encapsulation pair $(\hat{K},\hat{C})$, where $\hat{K}=\hat{e}(P,H(ID))^r$ and $\hat{C}=g^r$.

\item $\mathbf{Decaps}_{\mbox{\tiny IBKEM}}(\hat{S}_{ID^\prime},\hat{C})$: Take as inputs the decryption key $\hat{S}_{ID^\prime}$ of identity $ID^\prime$ and an encapsulation $\hat{C}$, and output the encapsulated key $\hat{K}=\hat{e}(\hat{C},\hat{S}_{ID^\prime})$ if $\hat{C}\in\mathbb{G}$ or output $\bot$ otherwise.
\end{itemize}

\textbf{Collision-Free Full-Identity Malleability.} Let the function $\mathbf{FIM}(ID,r)=\hat{e}(P,H(ID))^r$ for any identity $ID\in\mathcal{ID}_{\mbox{\tiny IBKEM}}$ and any random value $r\in\mathbb{Z}_q^*$. Clearly, the function $\mathbf{FIM}$ is efficient. Moreover, it is easy to find that the function $\mathbf{FIM}$ has collision-freeness and full-identity malleability by the following reasons. 

For any $(\hat{K},\hat{C})=\mathbf{Encaps}_{\mbox{\tiny IBKEM}}(\mathbf{PK}_{\mbox{\tiny IBKEM}},ID,r)$ and any identity $ID^\prime\in\mathcal{ID}_{\mbox{\tiny IBKEM}}$, it is clear that $\mathbf{FIM}(ID^\prime,r)=\hat{e}(P,H(ID^\prime))^r=\mathbf{Decaps}_{\mbox{\tiny IBKEM}}(\hat{S}_{ID^\prime},\hat{C})$ holds. So the function $\mathbf{FIM}$ has full-identity malleability. In addition, for any identity $ID^\prime\in\mathcal{ID}_{\mbox{\tiny IBKEM}}$, if $ID\neq ID^\prime$, we clearly have $\mathbf{FIM}(ID,r)\neq \mathbf{FIM}(ID^\prime,r)$ due to the collision freeness of the hash function $H$; for any random value $r^\prime\in\mathbb{Z}_q^*$, if $r\neq r^\prime$, we clearly have $\mathbf{FIM}(ID,r)\neq\mathbf{FIM}(ID,r^\prime)$ due to the randomness of the values $r$ and $r^\prime$. Therefore, the function $\mathbf{FIM}$ offers collision-freeness,  except with a negligible probability in the security parameter $k$. 

\textbf{Anon-SS-ID-CPA Security.} The Anon-SS-ID-CPA security of the above IBKEM instance is based on the DBDH assumption in the RO model. The formal result is the following theorem. 

\begin{theorem}
Let the hash function $H$ be modeled as the random oracle $\mathcal{Q}_{H}(\cdot)$. Suppose a PPT adversary $\mathcal{A}$ wins in the Anon-SS-ID-CPA game of the above IBKEM instance with advantage $Adv^{\text{Anon-SS-ID-CPA}}_{IBKEM,\mathcal{A}}$, in which
$\mathcal{A}$ makes at most $q_p$ queries to oracle $\mathcal{Q}_{DK}^{IBKEM}(\cdot)$. Then there is a PPT algorithm $\mathcal{B}$ that solves the
DBDH problem in $\mathbf{BGen}(1^k)$ with advantage approximately
$$Adv_\mathcal {B}^{DBDH}(1^k)\approx\frac{4}{(e\cdot q_p)^2}\cdot Adv^{\text{Anon-SS-ID-CPA}}_{IBKEM,\mathcal{A}}$$ where $e$ is the base of natural logarithms.
\end{theorem}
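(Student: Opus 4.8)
The plan is to prove the stated inequality by a random-oracle reduction from the DBDH problem in $\mathbf{BGen}(1^k)$ to the Anon-SS-ID-CPA security of this IBKEM instance, following the same blueprint as the proof of Theorem \ref{T.SPCHS.Instance.Security}. (The collision-free full-identity malleability of the scheme has already been established above, so only the security claim remains.) Given a DBDH instance $(g,g^a,g^b,g^c,Z)$ with $Z=\hat{e}(g,g)^{abc}$ or $Z=\hat{e}(g,g)^y$, the algorithm $\mathcal{B}$ first sets $P=g^a$ (so that $\mathbf{SK}_{\mbox{\tiny IBKEM}}=a$ is implicit) and sends $\mathbf{PK}_{\mbox{\tiny IBKEM}}$ to $\mathcal{A}$. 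It answers the random-oracle queries $\mathcal{Q}_{H}(\cdot)$ by Coron-style partitioning: for a fresh identity $ID$ it picks $x\overset{\$}\leftarrow\mathbb{Z}_q^*$ and a biased coin $Coin\overset{\sigma}\leftarrow\{0,1\}$ (with $Pr[Coin=1]=\sigma$ to be fixed later), returning $g^x$ if $Coin=0$ and $(g^b)^x$ if $Coin=1$. Because $P=g^a$, every $Coin=0$ identity has the computable decryption key $(g^a)^x=H(ID)^a$, so $\mathcal{B}$ answers $\mathcal{Q}_{DK}^{IBKEM}(ID)$ for all such identities and aborts only if an extraction query hits a $Coin=1$ identity; conditioned on not aborting, the simulation of $\mathbf{PK}_{\mbox{\tiny IBKEM}}$, of $H$, and of the extraction oracle is perfect.

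For the challenge, $\mathcal{A}$ outputs $ID_0^*,ID_1^*$. $\mathcal{B}$ requires both challenge identities to lie in the $Coin=1$ partition (so that the DBDH instance can be planted into the challenge key irrespective of the hidden bit), and aborts otherwise; note that the game forbids $\mathcal{A}$ from querying the decryption keys of $ID_0^*$ or $ID_1^*$, so this requirement never conflicts with a previous extraction query. It then picks $\hat{d}\overset{\$}\leftarrow\{0,1\}$, sets $\hat{C}_0^*=g^c$ (implicitly using encapsulation randomness $c$), and returns the challenge key as $Z^{x_{\hat{d}}^*}$, where $H(ID_{\hat{d}}^*)=(g^b)^{x_{\hat{d}}^*}$. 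When $Z=\hat{e}(g,g)^{abc}$ one checks $Z^{x_{\hat{d}}^*}=\hat{e}(g^a,(g^b)^{x_{\hat{d}}^*})^c=\hat{e}(P,H(ID_{\hat{d}}^*))^{c}$, so $(Z^{x_{\hat{d}}^*},g^c)$ is exactly an honestly generated key-and-encapsulation pair for $ID_{\hat{d}}^*$; when $Z=\hat{e}(g,g)^y$ the value $Z^{x_{\hat{d}}^*}$ is uniform in $\mathbb{G}_1$ and independent of $\hat{C}_0^*$ and of everything $\mathcal{A}$ has seen, which is precisely the distribution of the challenge key in the ``wrong'' case (where the key $\hat{e}(P,H(ID_{1-\hat{d}}^*))^{r}$ is built from fresh randomness $r$, hence uniform and carrying no information about $\hat d$). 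Thus, conditioned on $\overline{Abort}$, $\mathcal{A}$'s view is a genuine Anon-SS-ID-CPA game exactly when $Z$ is a true DBDH tuple; $\mathcal{B}$ outputs $1$ iff $\mathcal{A}$'s guess equals $\hat d$.

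It remains to account for $\overline{Abort}$, which depends only on $\sigma$ and on $q_p$: surviving all $q_p$ extraction queries contributes $(1-\sigma)^{q_p}$, and the two challenge identities landing in the $Coin=1$ partition contributes $\sigma^2$, so $Pr[\overline{Abort}]=(1-\sigma)^{q_p}\sigma^2$. Setting $\sigma=\frac{2}{2+q_p}$ maximizes this up to lower-order terms and gives $Pr[\overline{Abort}]\approx\frac{4}{(e\cdot q_p)^2}$. Since, conditioned on $\overline{Abort}$, the $Z=\hat{e}(g,g)^{abc}$ branch reproduces $\mathcal{A}$'s winning probability $\tfrac12+Adv^{\text{Anon-SS-ID-CPA}}_{IBKEM,\mathcal{A}}$ while the $Z=\hat{e}(g,g)^y$ branch gives $\tfrac12$, we obtain $Adv_\mathcal{B}^{DBDH}(1^k)\approx\frac{4}{(e\cdot q_p)^2}\cdot Adv^{\text{Anon-SS-ID-CPA}}_{IBKEM,\mathcal{A}}$, and $\mathcal{B}$ is PPT whenever $\mathcal{A}$ is.

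The main obstacle is twofold. The first point is the statistical argument that in the $Z=\hat{e}(g,g)^y$ branch the simulated challenge is distributed exactly like a genuine ``wrong-identity'' challenge; this hinges on the observation that the encapsulation component sent in the challenge is always an encapsulation to $ID_0^*$, while the wrong-identity key is formed with independent randomness, so that key is uniform and independent of the encapsulation — hence it reveals nothing about $\hat d$ and matches a uniform planted value. The second point is the abort bookkeeping: one must isolate precisely which events $\mathcal{B}$ must guess (the partition behavior on the two challenge identities, together with disjointness from the extraction queries), verify that none of $\mathcal{A}$'s forbidden queries can force an abort, and conclude that $Pr[\overline{Abort}]$ is independent of $Z$ and of $\mathcal{A}$'s strategy, so that the conditional-probability split above is legitimate.
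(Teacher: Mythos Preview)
Your proposal is correct and follows essentially the same reduction as the paper: Coron-style partitioning of the random oracle, embedding the DBDH instance so that $P=g^a$, answering key-extraction queries on the $Coin=0$ partition, requiring both challenge identities to fall in the $Coin=1$ partition, and computing $Pr[\overline{Abort}]=(1-\sigma)^{q_p}\sigma^2$ optimized at $\sigma=\tfrac{2}{2+q_p}$. The only cosmetic difference is that the paper embeds $g^{c}$ into the hash values and uses $g^{b}$ as the challenge encapsulation, whereas you swap the roles of $b$ and $c$; since the DBDH instance is symmetric in $b$ and $c$, this is immaterial.
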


\begin{proof}
To prove this theorem, we will construct a PPT algorithm $\mathcal{B}$ that plays the Anon-SS-ID-CPA game with adversary $\mathcal{A}$ and utilizes the capability of $\mathcal{A}$ to solve the DBDH problem in $\mathbf{BGen}(1^k)$ with advantage approximately $\frac{4}{(e\cdot q_p)^2}\cdot Adv^{\text{Anon-SS-ID-CPA}}_{IBKEM,\mathcal{A}} $. Let $Coin\overset{\sigma}\leftarrow \{0,1\}$ denote the operation that picks $Coin\in\{0,1\}$ according to the probability $Pr[Coin = 1] = \sigma$ (the specified value of $\sigma$ will be decided latter). The constructed algorithm $\mathcal{B}$ in the Anon-SS-ID-CPA game is as follows.

\begin{itemize}
\item \textbf{Setup Phase}: Algorithm $\mathcal{B}$ takes as inputs $(q,\mathbb{G},\mathbb{G}_1,g,\hat{e},g^a,g^b,g^c,Z)$ (where $Z$ equals either $\hat{e}(g,g)^{abc}$ or $\hat{e}(g,g)^y$) and the identity space $\mathcal{ID}_{\mbox{\tiny IBKEM}}$, and does the following steps:
\begin{enumerate}
\item Initialize a list $\mathbf{HList}=\emptyset\subseteq\mathcal{ID}_{\mbox{\tiny IBKEM}}\times\mathbb{G}\times\mathbb{Z}_q^*\times \{0,1\}$;
\item Set the encapsulated key space $\mathcal{K}_{\mbox{\tiny IBKEM}}=\mathbb{G}_1$, the encapsulation space $\mathcal{C}_{\mbox{\tiny IBKEM}}=\mathbb{G}$ and $\mathbf{PK}_{\mbox{\tiny IBKEM}}=(q,\mathbb{G},\mathbb{G}_1,g,\hat{e},P=g^a,\mathcal{ID}_{\mbox{\tiny IBKEM}},\mathcal{K}_{\mbox{\tiny IBKEM}},\mathcal{C}_{\mbox{\tiny IBKEM}})$;
\item Send $\mathbf{PK}_{\mbox{\tiny IBKEM}}$ to adversary $\mathcal{A}$;
\end{enumerate}

\item \textbf{Query Phase 1}: Adversary $\mathcal{A}$ adaptively issues the following queries multiple times.
\begin{itemize}
\item Hash Query $\mathcal{Q}_{H}(ID)$: Taking as input an identity $ID\in\mathcal{ID}_{\mbox{\tiny IBKEM}}$, algorithm $\mathcal{B}$ does the following steps:
	\begin{enumerate}
	\item Pick $x\overset{\$}\leftarrow\mathbb{Z}_q^*$ and $Coin\overset{\sigma}\leftarrow\{0,1\}$;
	\item If $Coin=0$, add $(ID,g^x,x,Coin)$ into $\mathbf{HList}$ and output $g^x$;
	\item Otherwise, add $(ID,g^{c\cdot x},x,Coin)$ into $\mathbf{HList}$ and output $g^{c\cdot x}$;
	\end{enumerate}
\item Decryption Key Query $\mathcal{Q}_{DK}^{IBKEM}(ID)$: Taking as input an identity $ID\in\mathcal{ID}_{\mbox{\tiny IBKEM}}$, algorithm $\mathcal{B}$ does the following steps:
	\begin{enumerate}
	\item If $(ID,*,*,*)\notin\mathbf{HList}$, query $\mathcal{Q}_H(ID)$;
	\item According to $ID$, retrieve $(ID,X,x,Coin)$ from $\mathbf{HList}$;
	\item If $Coin=0$, output $g^{a\cdot x}$; otherwise, abort and output $\bot$;
	\end{enumerate}
 \end{itemize}

\item \textbf{Challenge Phase}: Adversary $\mathcal{A}$ sends two challenge identities $ID^*_0\in\mathcal{ID}_{\mbox{\tiny IBKEM}}$ and $ID_1^*\in\mathcal{ID}_{\mbox{\tiny IBKEM}}$ to algorithm $\mathcal{B}$;  $\mathcal{B}$ picks $\hat{d}\overset{\$}\leftarrow\{0,1\}$, and does the following steps:
\begin{enumerate}
\item If $(ID_0^*,*,*,*)\notin\mathbf{HList}$, query $\mathcal{Q}_H(ID_0^*)$;
\item If $(ID_1^*,*,*,*)\notin\mathbf{HList}$, query $\mathcal{Q}_H(ID_1^*)$;
\item According to $ID_0^*$ and $ID_1^*$, retrieve $(ID_0^*,X_0^*,x_0^*,Coin_0^*)$ and $(ID_1^*,X_1^*,x_1^*,Coin_1^*)$ from $\mathbf{HList}$; 
\item If $Coin_0^*=0\bigvee Coin_1^*=0$, then abort and output $\bot$;
\item Finally send the challenge key-and-encapsulation pair $(Z^{x_{\hat{d}}^*},g^b)$ to adversary $\mathcal{A}$; 
\end{enumerate}

\item \textbf{Query Phase 2}: This phase is the same as \textbf{Query Phase 2}. Note that in \textbf{Query Phase 1} and \textbf{Query Phase 2}, adversary $\mathcal{A}$ cannot query the decryption key corresponding to the challenge identity $ID_0^*$ or $ID_1^*$.

\item \textbf{Guess Phase}: Adversary $\mathcal{A}$ sends a guess $\hat{d}^\prime$ to algorithm $\mathcal{B}$. If $\hat{d}=\hat{d}^\prime$, $\mathcal{B}$ output 1; otherwise, output 0.
\end{itemize}

Let $\overline{Abort}$ denote the event that algorithm $\mathcal{B}$ does not abort in the above game. Next, we will compute the probabilities $Pr[\overline{Abort}]$, $Pr[\mathcal{B}=1|Z=\hat{e}(g,g)^{abc}]$ and $Pr[\mathcal{B}=1|Z=\hat{e}(g,g)^y]$, and the advantage $Adv_\mathcal {B}^{DBDH}(1^k)$.

According to the above game, the probability of the event $\overline{Abort}$ only relies on the probability $\sigma$ and the number of times of adversary $\mathcal{A}$ to query oracle $\mathcal{Q}_{DK}^{IBKEM}(ID)$. We have that $Pr[\overline{Abort}]=(1-\sigma)^{q_p}\cdot \sigma^2$. Let $\sigma=\frac{2}{2+q_p}$. We have that  $Pr[\overline{Abort}]\approx\frac{4}{(e\cdot q_p)^2}$, where $e$ is the base of natural logarithms.

When $Z=\hat{e}(g,g)^{abc}$ and the event $\overline{Abort}$ holds, it is easy to find that algorithm $\mathcal{B}$ simulates a real Anon-SS-ID-CPA game in adversary $\mathcal{A}$'s mind. So we have $Pr[\hat{d}=\hat{d}^\prime|\overline{Abort}\bigwedge Z=\hat{e}(g,g)^{abc}]=(Adv^{\text{Anon-SS-ID-CPA}}_{IBKEM,\mathcal{A}}+\frac{1}{2})$.
 
When $Z=\hat{e}(g,g)^y$ and the event $\overline{Abort}$ holds, algorithm $\mathcal{B}$ generates an incorrect challenge ciphertext, and it is independent of the challenge identities $ID_0^*$ and $ID_1^*$. So we have $Pr[\hat{d}=\hat{d}^\prime|\overline{Abort}\bigwedge Z=\hat{e}(g,g)^y]=\frac{1}{2}$.

Now, we can compute the advantage $Adv_\mathcal{B}^{DBDH}(1^k)$ as follows:
\begin{equation}
\begin{aligned}
Adv_\mathcal{B}^{DBDH}(1^k)&=Pr[\mathcal{B}=1|Z=\hat{e}(g,g)^{abc}]-Pr[\mathcal{B}=1|Z=\hat{e}(g,g)^y]\\
&=Pr[\hat{d}=\hat{d}^\prime\bigwedge\overline{Abort}|Z=\hat{e}(g,g)^{abc}]-Pr[\hat{d}=\hat{d}^\prime\bigwedge\overline{Abort}|Z=\hat{e}(g,g)^y]\\
&=Pr[\hat{d}=\hat{d}^\prime|\overline{Abort}\bigwedge Z=\hat{e}(g,g)^{abc}]\cdot Pr[\overline{Abort}|Z=\hat{e}(g,g)^{abc}]\\&\qquad -Pr[\hat{d}=\hat{d}^\prime|\overline{Abort}\bigwedge Z=\hat{e}(g,g)^y]\cdot Pr[\overline{Abort}|Z=\hat{e}(g,g)^y]\\
&\approx(Adv^{\text{Anon-SS-ID-CPA}}_{IBKEM,\mathcal{A}}+\frac{1}{2})\cdot\frac{4}{(e\cdot  q_p)^2}-\frac{1}{2}\cdot\frac{4}{(e\cdot q_p)^2}=\frac{4}{(e\cdot q_p)^2}\cdot Adv^{\text{Anon-SS-ID-CPA}}_{IBKEM,\mathcal{A}} \nonumber
\end{aligned}
\end{equation}

In addition, it is clear that algorithm $\mathcal{B}$ is a PPT algorithm, if adversary $\mathcal{A}$ is a PPT adversary. In conclusion, if a PPT adversary $\mathcal{A}$ wins in the Anon-SS-ID-CPA game of the above IBKEM instance with advantage $Adv^{\text{Anon-SS-ID-CPA}}_{IBKEM,\mathcal{A}}$, in which
$\mathcal{A}$ makes at most $q_p$ queries to oracle $\mathcal{Q}_{DK}^{IBKEM}(\cdot)$, then there is a PPT algorithm $\mathcal{B}$ that solves the
DBDH problem in $\mathbf{BGen}(1^k)$ with advantage approximately
$$Adv_\mathcal {B}^{DBDH}(1^k)\approx\frac{4}{(e\cdot q_p)^2}\cdot Adv^{\text{Anon-SS-ID-CPA}}_{IBKEM,\mathcal{A}}$$ where $e$ is the base of natural logarithms.
\end{proof}

\subsection{Proof of Theorem \ref{T.IBKEM.Instance.Security}}\label{P.IBKEM.Instance.Security}

\begin{proof}
Suppose a PPT adversary $\mathcal{A}$ wins in the Anon-SS-ID-CPA game of the above IBKEM instance with advantage $Adv^{\text{Anon-SS-ID-CPA}}_{IBKEM,\mathcal{A}}$, in which
$\mathcal{A}$ makes at most $q_p$ queries to oracle $\mathcal{Q}_{DK}^{IBKEM}(\cdot)$. To prove this theorem, we will construct a PPT algorithm $\mathcal{B}$ that plays the Anon-SS-ID-CPA game with adversary $\mathcal{A}$ and utilizes the capability of $\mathcal{A}$ to break the $(\ell+1)$-MDDH assumption in $\mathbf{MG}_{\ell+1}(1^k)$. The constructed algorithm $\mathcal{B}$ in the Anon-SS-ID-CPA game is as follows.

\begin{itemize}
\item \textbf{Setup Phase}: Algorithm $\mathcal{B}$ gets as input an $(\ell+1)$-group system $\mathbf{MPG}_{\ell+1}$ and group elements $g,g^{x_1},\cdots,g^{x_{\ell+2}} \in \mathbb{G}_1 $ and $S \in \mathbb{G}_{\ell+1}$, where either $S= \hat{e}(g^{x_1},\cdots,g^{x_{\ell+1}})^{x_{\ell+2}}$ (i.e., $S$ is real) or $S\in\mathbb{G}_{\ell+1}$ uniformly (i.e., $S$ is random). $\mathcal{B}$ generates a $(q_p,2)$-MPHF $\mathbf{H}$ into $\mathbb{G}_\ell$, sets up the master public key as $\mathbf{PK}= (\mathbf{MPG}_{\ell+1},hk,\mathbf{H},h,h^\prime,\mathcal{ID},\mathcal{K},\mathcal{C})$ for $(h,h^\prime)= (g,g^{x_{\ell+1}})$ and $(hk,td)\leftarrow\mathbf{TGen}(1^k,g^{x_1},\cdots,g^{x_\ell},g)$, finally sends $\mathbf{PK}_{\mbox{\tiny IBKEM}}$ to adversary $\mathcal{A}$. Here, we use the \textbf{TGen} and \textbf{TEval} algorithms of  the $(q_p,2)$-MPHF property of $\mathsf{H}$.
 
\item \textbf{Query Phase 1}: Adversary $\mathcal{A}$ adaptively issues the following query multiple times.
\begin{itemize}
\item Decryption Key Query $\mathcal{Q}_{DK}^{IBKEM}(ID)$: Taking as input an identity $ID\in\mathcal{ID}_{\mbox{\tiny IBKEM}}$, algorithm $\mathcal{B}$ does the following steps:
	\begin{enumerate}
	\item Compute $\mathbf{TEval}(td,ID)=(a_{ID},B_{ID})$;
	\item If $a_{ID}=0$, return $\hat{S}_{ID}=\hat{e}(B_{ID},h^\prime)$; otherwise, abort and output $\bot$;
	\end{enumerate}
	Note that we have $\hat{S}_{ID}=\hat{e}(B_{ID},h^\prime)=\hat{e}(B_{ID},h)^{x_{\ell+1}}=\mathbf{H}_{hk}(ID)^{x_{\ell+1}}$. So $\mathcal{B}$ can answer a $\mathcal{Q}_{DK}^{IBKEM}(ID)$ query of $\mathcal{A}$ for identity $ID$ precisely when $a_{ID}=0$. 
 \end{itemize}

\item \textbf{Challenge Phase}: Adversary $\mathcal{A}$ sends two challenge identities $ID^*_0\in\mathcal{ID}_{\mbox{\tiny IBKEM}}$ and $ID_1^*\in\mathcal{ID}_{\mbox{\tiny IBKEM}}$ to algorihm $\mathcal{B}$;  $\mathcal{B}$ picks $\hat{d}\overset{\$}\leftarrow\{0,1\}$, and does the following steps:
\begin{enumerate}
\item Compute $\mathbf{TEval}(td,ID^*_0)=(a_{ID_0^*},B_{ID_0^*})$ and $\mathbf{TEval}(td,ID_1^*)=(a_{ID_1^*},B_{ID_1^*})$;
\item If $a_{ID_0^*}=0\bigvee a_{ID_1^*}=0$, then abort and output $\bot$;
\item Send the challenge key-and-encapsulation pair $(\hat{K}_{\hat{d}}^*=S^{a_{ID_{\hat{d}}^*}}\cdot \hat{e}(B_{ID_{\hat{d}}^*},g^{x_{\ell+1}},g^{x_{\ell+2}}),\hat{C}_0^*=g^{x_{\ell+2}})$ to adversary $\mathcal{A}$; 
\end{enumerate}

Suppose algorithm $\mathcal{B}$ does not abort (i.e., both $a_{ID_0^*}\neq 0$ and $a_{ID_1^*}\neq 0$ hold), we have $\mathbf{H}_{hk}(ID_0^*)=\hat{e}(g^{x_1},\cdots,g^{x_\ell})^{a_{ID_0^*}}\cdot \hat{e}(B_{ID_0^*},h)$ and $\mathbf{H}_{hk}(ID_1^*)=\hat{e}(g^{x_1},\cdots,g^{x_\ell})^{a_{ID_1^*}}\cdot \hat{e}(B_{ID_1^*},h)$.  Furthermore, if $S=\hat{e}(g^{x_1},\cdots,g^{x_\ell+1})^{x_{\ell+2}}$, we have
$\hat{K}_{\hat{d}}^*=S^{a_{ID_{\hat{d}}^*}}\cdot \hat{e}(B_{ID_{\hat{d}}^*},g^{x_{\ell+1}},g^{x_{\ell+2}})=\hat{e}(\mathbf{H}_{hk}(ID_{\hat{d}}^*),g^{x_{\ell+1}})^{x_{\ell+2}}$. This implies that the challenge key-and-encapsulation pair $(\hat{K}_{\hat{d}}^*,\hat{C}_0^*)$ is a valid one in this case. Otherwise, $\hat{K}_{\hat{d}}^*$ contains no information about $\hat{d}$.

\item \textbf{Query Phase 2}: This phase is the same as \textbf{Query Phase 2}. Note that in \textbf{Query Phase 1} and \textbf{Query Phase 2}, adversary $\mathcal{A}$ cannot query the decryption key corresponding to the challenge identity $ID_0^*$ or $ID_1^*$.

\item \textbf{Guess Phase}: Adversary $\mathcal{A}$ sends a guess $\hat{d}^\prime$ to algorithm $\mathcal{B}$. Let $\overline{Abort^\prime}$ denote the event that $\mathcal{B}$ does not abort in the previous phases. Let $\mathcal{I}=\{ID_1,\cdots,ID_{q_p},ID_0^*,ID_1^*\}$ be the set of the queried IDs by $\mathcal{A}$ and the challenge identities $ID_0^*$ and $ID_1^*$. Let $P_\mathcal{I}=Pr[\overline{Abort^\prime}|\mathcal{I}]$, which will be decided later. As in \cite{FHPS13,W05}, $\mathcal{B}$ ``artificially'' aborts with probability $1-1/(P_\mathcal{I}\cdot p(k))$ for the polynomial $p(k)$ from Definition \ref{D.MPHF}  and outputs $\bot$. If it does not abort, $\mathcal{B}$ uses the guess of $\mathcal{A}$. 
This means that if $\hat{d}=\hat{d}^\prime$, $\mathcal{B}$ outputs 1, otherwise it outputs 0.
\end{itemize}

In \textbf{Guess Phase}, $\mathcal{B}$ did not directly use the guess of $\mathcal{A}$, since event $\overline{Abort^\prime}$ might not be independent of the identities in $\mathcal{I}$. So $\mathcal{B}$ ``artificially'' aborts to achieve the independence. Let $\overline{Abort}$ be the event that $\mathcal{B}$ does not abort in the above game. We have that $Pr[\overline{Abort}]=1-Pr[Abort^\prime|\mathcal{I}]-Pr[\overline{Abort^\prime}|\mathcal{I}]\cdot (1-1/(P_\mathcal{I}\cdot p(k)))=1/p(k)$. Hence, we have $Pr[\mathcal{B}=1|S\ \text{is real}]=Pr[\overline{Abort}]\cdot(\frac{1}{2}+Adv^{\text{Anon-SS-ID-CPA}}_{IBKEM,\mathcal{A}})$ and $Pr[\mathcal{B}=1|S\ \text{is random}]=Pr[\overline{Abort}]\cdot\frac{1}{2}$, where $\frac{1}{2}+Adv^{\text{Anon-SS-ID-CPA}}_{IBKEM,\mathcal{A}}$ is the probability that $\mathcal{A}$ succeeds in the Anon-SS-ID-CPA game of IBKEM. Further, we have $$Pr[\mathcal{B}=1|S\ \text{is real}]-Pr[\mathcal{B}=1|S\ \text{is random}]=\frac{1}{p(k)}\cdot Adv^{\text{Anon-SS-ID-CPA}}_{IBKEM,\mathcal{A}}.$$ Hence, $\mathcal{B}$ breaks the $(\ell+1)$-MDDH assumption if and only if $\mathcal{A}$ breaks the Anon-SS-ID-CPA security of the above IBKEM scheme.

Finally, to evaluate $P_\mathcal{I}$, we can only approximate it (up to an inversely polynomial error, by running \textbf{TEval} with freshly generated keys sufficiently often), which introduces an additional error term in the analysis. We refer to \cite{W05} for details on this evaluation.
\end{proof}
\thispagestyle{empty}
\end{onecolumn}


\begin{thebibliography}{1}
\bibitem{BCO04} Boneh D., Crescenzo G. D., Ostrovsky R., Persiano G.: Public Key Encryption with Keyword Search. In: Cachin C., Camenisch J. (eds.) EUROCRYPT 2004. LNCS, vol. 3027, pp. 506-522. Springer, Heidelberg (2004)

\bibitem{BBN07} Bellare M., Boldyreva A., O'Neill A.: Deterministic and Efficiently Searchable Encryption. In: Menezes A. (ed.) CRYPTO 2007. LNCS, vol. 4622, pp. 535-552. Springer, Heidelberg (2007)

\bibitem{BB04} Boneh D., Boyen X.: Efficient Selective-ID Secure Identity-Based Encryption Without Random Oracles. In: Cachin C., Camenisch J. (eds.) EUROCRYPT 2004. LNCS, vol. 3027, pp. 223-238. Springer, Heidelberg (2004)

\bibitem{BW06} Boyen X., Waters B. R.: Anonymous Hierarchical Identity-Based Encryption (Without Random Oracles). In: Dwork C. (ed.) CRYPTO 2006. LNCS, vol. 4117, pp. 290-307. Springer, Heidelberg (2006)

\bibitem{G06} Gentry C.: Practical Identity-Based Encyrption Without Random Oracles. In: Vaudenay S. (ed.) EUROCRYPT 2006. LNCS, vol. 4004, pp.445-464. Springer, Heidelberg (2006)

\bibitem{AG09} Ateniese G., Gasti P.: Universally Anonymous IBE Based on the Quadratic Residuosity Assumption. In: Fischlin M. (ed.) CT-RSA 2009. LNCS, vol. 5473, pp. 32-47. Springer, Heidelberg (2009)

\bibitem{D10} Ducas L.: Anonymity from Asymmetry: New Constructions for Anonymous HIBE. In: Pieprzyk J. (ed.) CT-RSA 2010. LNCS, vol. 5985, pp. 148-164. Springer, Heidelberg (2010)

\bibitem{ACF13} Abdalla M., Catalano D., Fiore D.: Verifiable Random Functions: Relations to Identity-Based Key Encapsulation and New Constructions. Journal of Cryptology, 27(3), pp. 544-593 (2013)

\bibitem{FHPS13} Freire E.S.V., Hofheinz D., Paterson K.G., Striecks C.: Programmable Hash Functions in the Multilinear Setting. In: Canetti R., Garay J.A. (eds.) Advances in Cryptology - CRYPTO 2013. LNCS, vol. 8042, pp. 513-530. Springer, Heidelberg (2013)

\bibitem{GGH13} Garg S., Gentry C., Halevi S.: Candidate Multilinear Maps from Ideal Lattices. In: Johansson T., Nguyen P. (eds.) Advances in Cryptology - EUROCRYPT 2013. LNCS, vol. 7881, pp. 1-17. Springer, Heidelberg (2013)

\bibitem{BF01} Boneh D., Franklin M.: Identity-Based Encryption from the Weil Pairing. In: Kilian J. (ed.) CRYPTO 2001. LNCS, vol. 2139, pp. 213-239. Springer, Heidelberg (2001)

\bibitem{BBW06} Barth A., Boneh D., Waters B.: Privacy in Encrypted Content Distribution Using Private Broadcast Encryption. In: Di Crescenzo G., Rubin A.(eds.) FC 2006. LNCS, vol. 4107, pp. 52-64. Springer, Heidelberg (2006)

\bibitem{LPQ12} Libert B., Paterson K. G., Quaglia E. A.: Anonymous Broadcast Encryption: Adaptive Security and Efficient Constructions in the Standard Model. In: Fischlin M., Buchmann J., Manulis M. (eds.) PKC 2012. LNCS, vol. 7293, pp. 206-224. Springer, Heidelberg (2012)

\bibitem{CGK06} Curtmola R., Garay J., Kamara S., Ostrovsky R.: Searchable Symmetric Encryption: Improved Definitions and Efficient Constructions. In: ACM CCS 2006, pp. 79-88. ACM (2006)

\bibitem{SWP00} Song D. X., Wagner D., Perrig A.: Practical techniques for searches on encrypted data. In: IEEE S\&P 2000, pp. 44-55. IEEE (2000)

\bibitem{G03} Goh E.-J.: Secure Indexes. Cryptography ePrint Archive, Report 2003/216 (2003)

\bibitem{BC04} Bellovin S. M., Cheswick W.R.: Privacy-Enhanced Searches Using Encrypted Bloom Filters. Cryptography ePrint Archive, Report 2004/022 (2004)

\bibitem{AKSX04} Agrawal R., Kiernan J., Srikant R., Xu Y.: Order Preserving Encryption for Numeric Data. In: Proceedings of the 2004 ACM SIGMOD international conference on Management of data, pp. 563-574. ACM (2004)

\bibitem{CM05} Chang Y.-C., Mitzenmacher M.: Privacy Preserving Keyword Searches on Remote Encrypted Data. In: Ioannidis J., Keromytis A. and Yung M. (eds.) ACNS 2005. LNCS, vol. 3531, pp. 442-455. Springer, Heidelberg (2005)

\bibitem{BCLN09} Boldyreva A., Chenette N., Lee Y., O'Neill A. : Order-Preserving Symmetric Encryption. In: Joux A. (ed.) EUROCRYPT 2009. LNCS, vol. 5479, pp. 224-241. Springer, Heidelberg (2009)

\bibitem{BDDY08} Bao F., Deng R. H., Ding X., Yang Y.: Private Query on Encrypted Data in Multi-User Settings. In: Chen L., Mu Y., Susilo W. (eds.) ISPEC 2008. LNCS, vol. 4991, pp. 71-85. Springer, Heidelberg (2008)

\bibitem{LWWCRL10} Li J., Wang Q., Wang C., Cao N., Ren K., Lou W.: Fuzzy Keyword Search over Encrypted Data in Cloud Computing. In: IEEE INFOCOM 2010, pp. 1-5. (2010)

\bibitem{WBDS04} Waters B. R., Balfanz D., Durfee G., Smetters D. K.: Building an Encrypted and Searchable Audit Log. In: NDSS 2004 (2004)

\bibitem{CK10} Chase M., Kamara S.: Structured Encryption and Controlled Disclosure. In: M. Abe (ed.) Advances in Cryptology - ASIACRYPT 2010. LNCS, vol. 6477, pp. 577-594. Springer, Heidelberg (2010)

\bibitem{KPR12} Kamara S., Papamanthou C., Roeder T.: Dynamic searchable symmetric encryption. In ACM Conference on Computer and Communications Security, pp. 965–976 (2012)

\bibitem{KP13} Kamara S., Papamanthou C.: Parallel and Dynamic Searchable Symmetric Encryption. In: Sadeghi A.-R. (ed.) FC 2013. LNCS, vol.7859, pp. 258-274. Springer, Heidelberg (2013) 

\bibitem{CJJJKRS14} Cash D., Jaeger J., Jarecki S., Jutla C., Krawczyk H., Ros M.-C., Steiner M.: Dynamic Searchable Encryption in Very Large Databases: Data Structures and Implementation. In: NDSS 2014.

\bibitem{ABC05} Abdalla M., Bellare M., Catalano D., Kiltz E., Kohno T., Lange T., Malone-Lee J., Neven G., Paillier P., Shi H.: Searchable Encryption Revisited: Consistency Properties, Relation to Anonymous IBE, and Extensions. In: Shoup V. (ed.) CRYPTO 2005. LNCS, vol. 3621, pp. 205-222. Springer, Heidelberg (2005)

\bibitem{PKL04} Park D. J., Kim K., Lee P. J.: Public Key Encryption with Conjunctive Field Keyword Search. In: Lim C. H. and Yung M. (eds.) WISA 2004. LNCS, vol. 3325, pp. 73-86. Springer, Heidelberg (2004)

\bibitem{GSW04} Golle P., Staddon J., Waters B. R.: Secure Conjunctive Keyword Search over Encrypted Data. In: Jakobsson M., Yung M., Zhou J. (eds.) ACNS 2004. LNCS, vol. 3089, pp. 31-45. Springer, Heidelberg (2004)

\bibitem{BKM05} Ballard L., Kamara S., Monrose F.: Achieving Efficient Conjunctive Keyword Searches over Encrypted Data. In: Qing S. et al. (eds.) ICICS 2005. LNCS, vol. 3783, pp. 414-426. Springer, Heidelberg (2005)

\bibitem{HL07} Hwang Y. H., Lee P. J.: Public Key Encryption with Conjunctive Keyword Search and Its Extension to a Multi-user System. In: Takagi T., Okamoto T., Okamoto E. and Okamoto T. (eds.) Pairing 2007. LNCS, vol. 4575, pp. 2-22. Springer, Heidelberg (2007)

\bibitem{RT07} Ryu E.K., Takagi T.: Efficient Conjunctive Keyword-Searchable Encryption. In: 21st International Conference on Advanced Information Networking and Applications Workshops, pp. 409-414. IEEE (2007)

\bibitem{BSS08} Baek J., Safavi-Naini R., Susilo W.: Public Key Encryption with Keyword Search Revisited. In: Gervasi O. (ed.) ICCSA 2008. LNCS, vol. 5072, pp. 1249-1259. Springer, Heidelberg (2008)

\bibitem{BCP06} Bethencourt J., Chan T.-H. H., Perrig A., Shi E., Song D.: Anonymous Multi-Attribute Encryption with Range Query and Conditional Decryption. Technical Report CMU-CS-06-135 (2006)

\bibitem{SBC07} Shi E., Bethencourt J., Chan T.-H. H., Song D., Perrig A.: Multi-Dimensional Range Query over Encrypted Data. In: IEEE S\&P 2007, pp. 350-364. IEEE (2007)

\bibitem{BW07} Boneh D., Waters B. R.: Conjunctive, Subset, and Range Queries on Encrypted Data. In: Vadhan S. P. (ed.) TCC 2007. LNCS, vol. 4392, pp. 535-554. Springer, Heidelberg (2007)

\bibitem{DMR04} Davis D., Monrose F., Reiter M. K.: Time-Scoped Searching of Encrypted Audit Logs. In: Lopez J., Qing S., Okamoto E. (eds.) ICICS 2004. LNCS, vol. 3269, pp. 532-545. Springer, Heidelberg (2004)

\bibitem{CMWYZ10} Cheung D. W., Mamoulis N., Wong W. K., Yiu S. M., Zhang Y.: Anonymous Fuzzy Identity-based Encryption for Similarity Search. In: Cheong O., Chwa K.-Y and Park K. (eds.) ISAAC 2010. LNCS, vol. 6505, pp. 61-72. Springer, Heidelberg (2010)

\bibitem{CKRS09} Camenisch J., Kohlweiss M., Rial A., Sheedy  C.: Blind and Anonymous Identity-Based Encryption and Authorised Private Searches on Public Key Encrypted Data. In: Jarecki S. and Tsudik G. (eds.) PKC 2009. LNCS, vol. 5443, pp. 196-214. Springer, Heidelberg (2009)

\bibitem{BFNR08} Bellare M., Fischlin M., O'Neill A., Ristenpart T.: Deterministic Encryption: Definitional Equivalences and Constructions without Random Oracles. In: Wagner D. (ed.) CRYPTO 2008. LNCS, vol. 5157, pp. 360-378. Springer, Heidelberg (2008)

\bibitem{BFN08} Boldyreva A., Fehr S., O'Neill A. : On Notions of Security for Deterministic Encryption, and Efficient Constructions without Random Oracles. In: Wagner D. (ed.) CRYPTO 2008. LNCS, vol. 5157, pp. 335-359. Springer, Heidelberg (2008)

\bibitem{BS11} Brakerski Z., Segev G.: Better Security for Deterministic Public-Key Encryption: The Auxiliary-Input Setting. In: Rogaway P. (ed.) CRYPTO 2011. LNCS, vol. 6841, pp. 543-560. Springer, Heidelberg (2011)

\bibitem{MOV93} Menezes A. J., Okamoto T., Vanstone S. A.: Reducing Elliptic Curve Logarithms to Logarithms in a Finite Field. IEEE Transactions on Information Theory, 39(5), pp. 1639-1646 (1993)

\bibitem{FMR99} Frey G., Muller M., Ruck H.-G.: The Tate Pairing and the Discrete Logarithm Applied to Elliptic Curve Cryptosystems. IEEE Transactions on Information Theory, vol. 45, no. 5, pp. 1717-1719 (1999)

\bibitem{ABN10} Abdalla M., Bellare M., Neven G.: Robust encryption. In: Micciancio G. (ed.) TCC 2010. LNCS, vol. 5978, pp. 480–497. Springer, Heidelberg (2010)

\bibitem{IP08} Izabach$\grave{e}$ne M., Pointcheval D.: New Anonymity Notions for Identity-Based Encryption. In: Ostrovsky R., De Prisco R. and Visconti I. (eds.) SCN 2008. LNCE, vol. 5229, pp. 375-391. Springer, Heidelberg (2008)

\bibitem{W05} Waters B.: Efficient Identity-Based Encryption Without Random Oracles. In: Cramer R. (ed.), Advances in Cryptology - EUROCRYPT 2005. LNCS, vol. 3494, pp. 1-17. Springer, Heidelberg (2005)

\bibitem{TC11} Tang Q., Chen X.: Towards asymmetric searchable encryption with message recovery and flexible search authorization. ASIACCS 2013, pp. 253-264 (2013)

\bibitem{INH11} Ibraimi L., Nikova S., Hartel P. H., Jonker W.: Public-Key Encryption with Delegated Search. In: Lopez J. and Tsudik G. (eds.) ACNS 2011. LNCS, vol. 6715, pp. 532-549. Springer, Heidelberg (2011)

\bibitem{YTHW10} Yang G., Tan C. H., Huang Q., Wong D. S.: Probabilistic Public Key Encryption with Equality Test. In: Pieprzyk J. (ed.) CT-RSA 2010. LNCS, vol. 5985, pp. 119-131. Springer, Heidelberg (2010)

\bibitem{XHW13} Xu P., Jin H., Wu Q., Wang W.: Public-Key Encryption with Fuzzy Keyword Search: A Provably Secure Scheme under Keyword Guessing Attack. IEEE Transactions on Computers, 62(11), pp. 2266-2277 (2013)

\bibitem{ATR14} Arriaga A., Tang Q., Ryan P.: Trapdoor Privacy in Asymmetric Searchable Encryption Schemes. In: Pointcheval D. and Vergnaud D. (eds.) AFRICACRYPT 2014. LNCS, vol. 8469, pp. 31-50. Springer, Heidelberg (2014)
\end{thebibliography}
\end{document}